
\documentclass[prx,
aps,
twocolumn,
superscriptaddress,
reprint,
floatfix
]{revtex4-2}

\pdfoutput=1
\usepackage{dsfont,amsmath,amssymb,amsthm, bbm,graphicx,mathtools,bm,ulem,physics}
\usepackage[dvipsnames]{xcolor}
\usepackage[toc,page]{appendix}
\usepackage{enumitem,hyperref}
\usepackage{thmtools,thm-restate}
\usepackage[english]{babel}
\usepackage{enumitem} 

\makeatletter
\makeatother

\hypersetup{
	colorlinks=true,  
	linkcolor=blue,   
	citecolor=blue,   
	urlcolor=blue     
}

\renewcommand{\eqref}[1]{Eq.~(\ref{#1})}
\newcommand{\figref}[1]{Fig.~\ref{#1}}

\newcommand{\secref}[1]{Section~\ref{#1}}
\newcommand{\appref}[1]{Appendix~\ref{#1}}

\newcommand{\lemref}[1]{Lemma~\ref{#1}}
\newcommand{\thmref}[1]{Theorem~\ref{#1}}

\newenvironment{customlem}[1]
  {\innercustomlem}
  {\endinnercustomlem}

\def\>{\rangle}
\def\<{\langle}
\def\K{ {\cal K} }
\def\E{ {\cal E} }
\def\P{ {\cal P} }
\def\H{ {\cal H} }

\def\S{ {\cal S} }

\def\U {{\cal U}}

\def\R {{\cal R}}
\def\G {{\cal G}}

\def\F{ {\cal F} }
\def\A{ {\cal A} }
\def\B{ {\cal B} }
\def\O{ {\cal O} }
\def\J{ {\cal J} }
\def\H{ {\cal H} }
\def\W{ {\cal W} }

\def\Z{ {\cal Z} }

\def\P{ {\cal P} }
\def\C { {\cal{C}}}
\def\D{ {\cal D} }

\def\id{ \mathbbm{1} }
\def\tr{ \mbox{tr} }

\def\I{ \mathcal{I} }

\def\p {\mathbf{p}}

\def\r {\mathbf{r}}
\def\w {\mathbf{w}}
\def\x {\mathbf{x}}

\def\z {\mathbf{z}}

\def\bmk{ {\bm{k}} }
\def\bmp{ {\bm{p}} }

\def\bmx{ {\bm{x}} }
\def\bmy{ {\bm{y}} }
\def\bmz{ {\bm{z}} }
\def\bmp{ {\bm{p}} }
\def\bmr{ {\bm{r}} }
\def\bms{ {\bm{s}} }
\def\bma{ {\bm{a}} }
\def\bmu{ {\bm{u}} }
\def\bmv{ {\bm{v}} }
\def\bmw{ {\bm{w}} }
\def\bme{ {\bm{e}} }

\DeclareMathOperator{\supp}{supp}

\DeclareMathOperator{\CNOT}{CNOT}
\DeclareMathOperator{\SWAP}{MV}

\def\tr{ \mathrm{tr} }

\def\>{\rangle}
\def\<{\langle}

\renewcommand{\emph}{\textit}

\newtheorem{theorem}{Theorem}

\newtheorem{lemma}[theorem]{Lemma}

\newtheorem{corollary}{Corollary}
\newtheorem{result}{Result}

\begin{document}	

\title{General entropic constraints on CSS codes within magic distillation protocols}
\author{Rhea Alexander}
    \affiliation{Department of Physics, Imperial College London, London SW7 2AZ, UK}
	\affiliation{School of Physics and Astronomy, University of Leeds, Leeds, LS2 9JT, UK}
\author{Si Gvirtz-Chen}
	\affiliation{School of Physics and Astronomy, University of Leeds, Leeds, LS2 9JT, UK}
\author{Nikolaos Koukoulekidis}
    \affiliation{Department of Physics, Imperial College London, London SW7 2AZ, UK}

\author{David Jennings}
    \affiliation{Department of Physics, Imperial College London, London SW7 2AZ, UK}
    \affiliation{School of Physics and Astronomy, University of Leeds, Leeds, LS2 9JT, UK}

\begin{abstract}

Magic states are fundamental building blocks on the road to fault-tolerant quantum computing. CSS codes play a crucial role in the construction of magic distillation protocols. Previous work has cast quantum computing with magic states for odd dimension $d$ within a phase space setting in which universal quantum computing is described by the statistical mechanics of quasiprobability distributions. Here we extend this framework to the important $d=2$ qubit case and show that we can exploit common structures in CSS circuits to obtain distillation bounds capable of out-performing previous monotone bounds in regimes of practical interest. Moreover, in the case of CSS code projections, we arrive at a novel cut-off result on the code length $n$ of the CSS code in terms of parameters characterising a desired distillation, which implies that for fixed target error rate and acceptance probability, one needs only consider CSS codes below a threshold number of qubits. These entropic constraints are not due simply to the data-processing inequality but rely explicitly on the stochastic representation of such protocols.
\end{abstract}
	
\maketitle

\section{Introduction}

Work towards achieving fault-tolerant quantum computing is currently seeing rapid progress on qubit systems on many computational platforms \cite{campbell_roads_2017,Raussendorf2001,Raussendorf2013,Nickerson2014,Nikahd2017,chao2018,lin_pieceable_2020,Bourassa2021,chamberland2020}. In particular, the surface code~\cite{Bravyi_1998,Dennis_2002,Freedman_2001} is a leading framework that allows Clifford operations to be implemented transversally on blocks of physical qubits. However, Clifford operations are not universal for quantum computing~\cite{gottesman1997stabilizer,gottesman1998heisenberg}, and in fact it is impossible to encode any universal gateset transversally~\cite{Eastin2009}. A prominent method of circumventing this problem is the magic state injection model, wherein the Clifford group is promoted to a universal gateset by injecting copies of special non-stabilizer states known as magic states~\cite{negativity_resource,Campbell2012ReedMuller}. While magic states can only be prepared in surface codes with relatively high error rates, it is possible to reduce the noise per copy by converting many noisy magic states into fewer higher-fidelity magic states using only stabilizer operations~\cite{original_magic_states}. This process, known as \emph{magic state distillation}, allows magic states to be produced with arbitrarily high purity, and thereby enables universal quantum computation within surface code models. 

Almost all protocols~\cite{original_magic_states,reichardt2005quantum,BravyiHaah2012overhead,Haah2017magicstate,Hastings2018sublogarithmic} to-date for qubit magic distillation are based on a subclass of stabilizer codes known as Calderbank-Shor-Steane (CSS) codes~\cite{CalderbankShor_CSS_codes_orginal,steane_CSS_codes_orginal}. CSS codes can be constructed from two classical linear codes, allowing one to draw on a plethora of results from classical coding theory to construct quantum codes with desirable properties. For instance, it has been shown that CSS codes are optimal when it comes to constructing quantum error correcting codes that support a transversal $T$-gate~\cite{Rengaswamy2020optimality}, a key feature in many of the aforementioned distillation protocols. Although significant progress has been made to reduce the overhead of such protocols~\cite{Litinski2019costly}, distillation is still estimated to dominate the total resource cost of performing a computation in the magic state injection model. Therefore, a better understanding of the extent to which this cost can be reduced is of great practical interest.

Recent work~\cite{koukoulekidis2022constraints} has developed a framework for analysing magic distillation in odd-dimensional systems by taking key insights from a rich literature of majorization theory and applying them to discrete phase space representations of magic states. In odd dimensions, Gross's Wigner function~\cite{gross2006hudson} provides a representation wherein distillable magic states correspond to quasiprobability distributions containing negativity on a discrete phase space~\cite{negativity_resource}. By contrast, stabilizer states correspond to probability distributions, and stabilizer operations in general are represented by stochastic transformations~\cite{gross2006hudson,Appleby2005}. Thus when computation is restricted to the stabilizer setting, one obtains a classical stochastic model that can be studied using entropic theory and in particular relative majorization~\cite{Veinott_1971, horodecki2013fundamental, Blackwell_1953, Ruch_1976, Lostaglio_2019}. Ref.~\cite{koukoulekidis2022constraints} extended majorization tools to negatively-represented magic states, and found that a dense subset of $\alpha$-R\'{e}nyi entropies $H_\alpha$ remain well-defined and meaningful as quantifiers of disorder on quasiprobability distributions under stochastic processing, leading to fundamental constraints on magic distillation protocols in the form of thermodynamic laws.

Since most quantum algorithms are formulated for systems of qubits, the important question of whether this framework can be extended to qubit systems remains. There are, however, many well-known obstacles in constructing valid Wigner representations for qubits (related to the fact that $2^{-1}$ does not exist modulo 2~\cite{Appleby2005}). Many constructions for Wigner functions, including Gross's, cannot be extended to qubits~\cite{gross_thesis,Wigner_second_inextendable}, while others represent some pure stabilizer states negatively~\cite{stabilizer_state_positive_rep_1, stabilizer_state_positive_rep_2}, which breaks the link established in odd dimensions between Wigner negativity and quantum computational speed-up~\cite{negativity_magic_link_1,negativity_magic_link_2}. While substantial work has been done to develop phase space representations wherein all qubit stabilizer operations are non-negatively represented~\cite{Raussendorf2020,Zurel2020}, channels are typically not mapped to linear, let alone stochastic, transformations under such representations. 

Progress can be made by identifying subsets of qubit stabilizer operations that can be represented stochastically, while nevertheless remaining capable of universal quantum computation via magic state injection. In this paper, we make use of a Wigner representation for qubits introduced in Ref.~\cite{Catani_2017} that shares many desirable features with Gross's representation, such as the linear representation of channels. Drawing on results from Ref.~\cite{delfosse2015}, we show that CSS circuits -- the subset of stabilizer circuits wherein CSS states play the role of stabilizer states -- remain stochastic in this representation. Since CSS circuits are known to be capable of universal quantum computation via magic state injection~\cite{delfosse2015,Catani2018CCSSP_universal}, they provide a setting where we can extend the statistical mechanics framework for universal quantum computing developed in Ref.~\cite{koukoulekidis2022constraints} to qubits.

The structure of our paper is as follows. In~\secref{sec:basic_theory}, we introduce the phase space representation of qubit states and channels we will use, and identify some regimes where this representation becomes stochastic. Building from this, in~\secref{sec:majorization} we develop majorization techniques to analyse stochastically-represented magic distillation protocols on qubits. Finally, in~\secref{sec:structure_CSS}, we apply these tools to derive general entropic constraints (in the form of upper and lower bounds on code length) for distillation protocols that project onto CSS codes, which exploit structures basic to this distillation strategy.

\section{Main results}

 \begin{figure}
	\centering
	\includegraphics[width=0.75\linewidth]{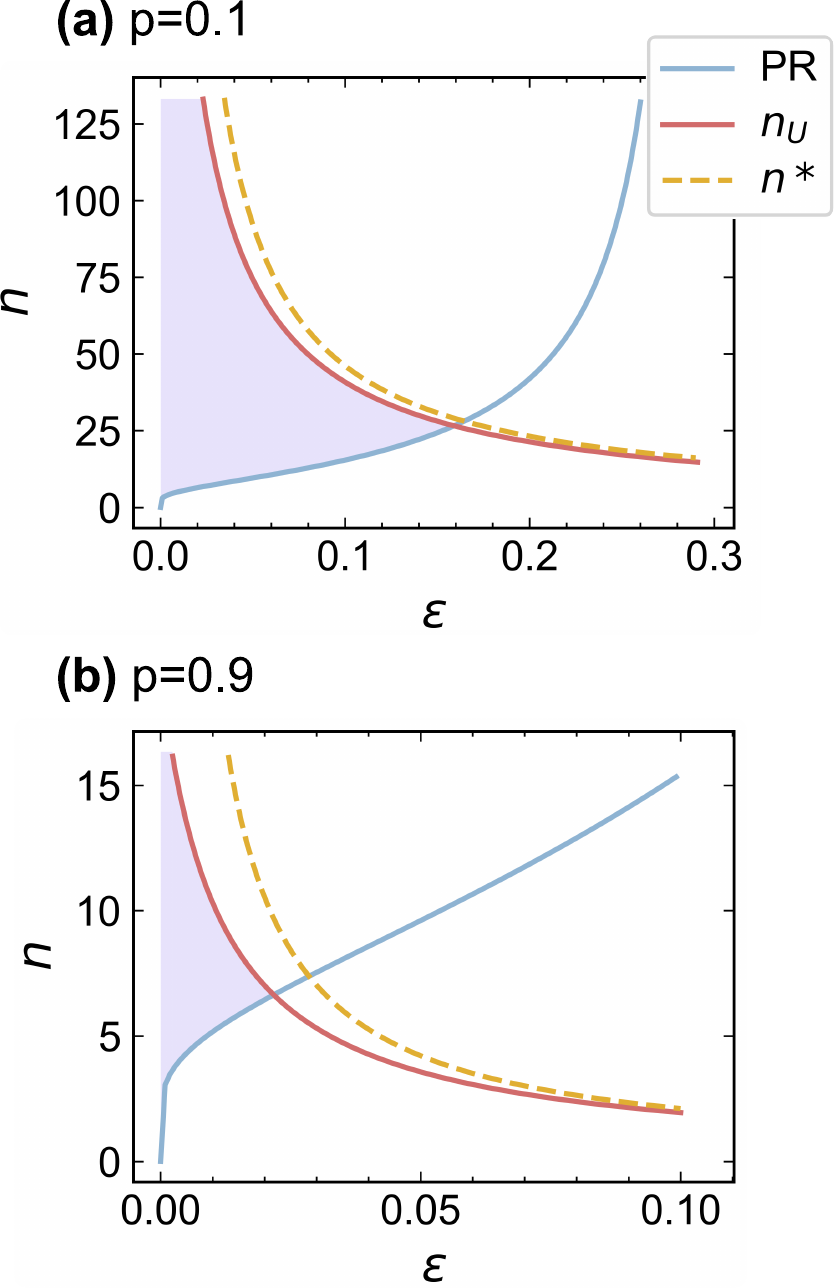}
	\caption{\textbf{(Finite range on CSS code lengths for magic state distillation protocols).} We plot upper and lower bounds on the number of copies $n$ of the noisy Hadamard state $(1-\epsilon) \ketbra{H} + \epsilon \frac{\id}{2}$ required to distil a single output qubit $\ket{H}$ with output error rate $\delta=10^{-9}$ by projecting onto an $[[n,1]]$ CSS code. The shaded purple region shows the range of code lengths allowed by the tightest numeric upper bound (red curve) from \thmref{thrm:lower_bound_n_CSS_code} and the lower bound from projective robustness (PR) introduced in Ref.~\cite{Regula2022Probabilistic} (blue curve). The analytic upper bound $n^*$ (dashed yellow curve) defined in \eqref{eq:analytic_upper_bound_local} is shown to form a good approximation to the numeric bound. \textbf{(a)} When target acceptance probability $p$ is low ($p=0.1$) the upper bounds are less constraining; \textbf{(b)} By increasing to $p=0.9$, the upper bounds become considerably tighter. In both cases, there is a cut-off input error $\epsilon$ beyond which no CSS code projection protocol can achieve the desired combination of output error and acceptance probability. 
		\label{fig:bound_comparison}}
\end{figure}
We show that CSS circuits can be represented by stochastic maps on a well-defined multiqubit phase space. By exploiting techniques from majorization, in \thmref{thrm:general_qubit_bound} we extend the statistical mechanical framework of Ref.~\cite{koukoulekidis2022constraints} to CSS qubit quantum computation.

We further find that, similar to Ref.~\cite{campbell_browne}, every CSS circuit can be decomposed in terms of protocols that project onto CSS codes, which therefore constitute the core machinery for magic distillation in CSS circuits. For such CSS code projection protocols, we obtain novel upper bounds on the code length $n$ as a function of the number of output qubits $k$, acceptance probability $p$ and input and output error rates $\epsilon$ and $\delta$ respectively (which are typically related to the code distance $D$ via $\delta = O(\epsilon^D)$). Our main result is the following, which we generalise to odd dimensional systems and arbitrary stabilizer codes in \thmref{thrm:upper_bound_n_m}.
\begin{result}
Consider the distillation of $k$ copies of a pure qubit magic state $\psi$ from a supply of the noisy magic state $\rho$, where both $\psi$ and $\rho$ have real density matrices in the computational basis. Any magic distillation protocol that projects onto the codespace of an $[[n,k]]$ CSS code and can use $n$ copies of $\rho$ to distil out a $k$-qubit state $\rho'$ at output error ${\delta \geq \norm{\rho' - \psi^{\otimes k} }_1}$ and acceptance probability $p$ must have a code length $n$ such that
\begin{align}
    n  \ge \frac{ k \left[ 1 - H_\alpha (W_{\psi})  \right]  - \frac{\alpha}{1-\alpha} \log\big( \frac{p}{1 + \delta 2^{5/2} }\big)  }{\left[ 1 -H_\alpha (W_{\rho}) \right]}
\end{align}
for all $\alpha \in \A$ for which $H_\alpha(W_\rho) < 1$, and
\begin{align}
    n  \le \frac{ k \left[ H_\alpha (W_{\psi}) - 1 \right]  +\frac{\alpha}{1-\alpha} \log \big( \frac{p}{1 + \delta 2^{5/2} }\big)  }{\left[ H_\alpha (W_{\rho}) - 1 \right]}, \label{eq:analytic_upper_bound_local}
\end{align}
for all $\alpha \in \A$ for which $H_\alpha(W_\rho) >1$, where $H_\alpha(W_\rho)$ is a R\'{e}nyi entropy measure computed on the Wigner representation of the quantum state $\rho$ defined in \eqref{eq:W_rho}.
\end{result}

Combining these bounds with prior work on projective robustness of magic~\cite{Regula2022Probabilistic}, we constrain the code length of any $[[n,k]]$ CSS code projection that could achieve a desired distillation process to lie within a finite range. An example for the distillation of a single Hadamard state are shown in~\figref{fig:bound_comparison}. In this case, our analytic upper bounds on the number of noisy magic states needed to distil out a single copy at the output take on the particularly simple form:
\begin{align} \label{eq:simple_trade_off}
     n \le  \log_{f(\epsilon)} \left[\frac{1+6\delta}{p}\right]^2 \eqqcolon n^*,
 \end{align}
where the base of the logarithm is given by ${f(\epsilon) \coloneqq [1-\epsilon +\frac{\epsilon^2}{2}]^{-1} \ge 1}$. In the spirit of Ref.~\cite{Fang2020}, \eqref{eq:simple_trade_off} and more generally \thmref{thrm:upper_bound_n_m}, can be viewed as expressing trade-off relations between various distillation parameters. These fundamental no-go results may be instructive when constructing stabilizer code projection protocols with optimized parameters. 

\section{Stochastic representation for CSS circuits on qubits}
\label{sec:basic_theory}

In this section, we review the qubit representation $W_\rho$ introduced in Ref.~\cite{Catani_2017} that forms the backbone of our work. We expand upon its properties and confirm that it respects all sequential and parallel composition of processes, the former of which crucially gives rise to a well-defined input-output relation $W_{\E(\rho)} = W_\E W_\rho$. Moreover, the latter implies that the representation of product states factorizes over subsystems, a property which will prove computationally advantageous given that inputs to magic distillation protocols typically take on the form $\rho^{\otimes n}$. Furthermore, we show that all magic distillation protocols executed by CSS circuits are stochastically represented, which means that such protocols can be analysed using majorization theory and admit a description in terms of classical statistical mechanics on quasiprobability distributions.

\subsection{Phase space representation of qubit states}
\label{subsec:our_wigner_rep}
We first establish some convenient notation. Let ${\bmu \coloneqq (u_1, \dots, u_n)\in \mathbb{Z}^n_2}$ denote a binary vector. Furthermore, given any single qubit operator $O$, let us denote
\begin{equation}
O(\bmu) \coloneqq O^{u_1} \otimes \dots \otimes O^{u_n}.
\end{equation}

With this in place, consider an $n$-qubit quantum system with total Hilbert space $\H_2^n := \H_2^{\otimes n}$. We associate to this system a phase space $\P_n := \mathbb{Z}_2^{n} \times \mathbb{Z}_2^n$, where $\P_n$ consists of all vectors $(\bmu_x, \bmu_z)$, and has a symplectic inner product $[\bmu, \bmv]$ defined as
\begin{align}
[\bmu, \bmv] \coloneqq \bmu_z \cdot \bmv_x - \bmv_z \cdot \bmu_x \equiv \bmu_z \cdot \bmv_x + \bmv_z \cdot \bmu_x,
\end{align}
where arithmetic is carried out modulo 2.

We are now in a position to define our chosen representation over $\P_n$ (we refer the reader to \appref{appx:wigner_rep} for further details and proofs of the properties presented). We first define $n$-qubit displacement operators $\{D_\bmu\}$, where $\bmu \coloneqq (\bmu_x,\bmu_z) \in \P_n$, via strings of single qubit Pauli operators $X$ and $Z$ as
\begin{align}
    D_{\bmu} \coloneqq Z(\bmu_z) X(\bmu_x),
    \label{eq:Dx}
\end{align}
which generate the Heisenberg-Weyl group ${H(2)^{\times n}}$ on $n$-qubits modulo phase factors~\cite{Zyczkowski2006Geometry}.
These displacement operators satisfy 
\begin{align}
	D_\bmu D_\bmv = (-1)^{[\bmu, \bmv]} D_\bmv D_\bmu.
\label{eq:displacement_reorder}
\end{align}

Using these displacement operators, we can construct the following representation
\begin{align}
\label{eq:W_rho}
W_\rho(\bmu) \coloneqq \frac{1}{2^n}\tr[A^\dagger_\bmu \rho]
\end{align}
for any $n$-qubit state $\rho$, where $\{ A_\bmu \}$ are the set of $2^{2n}$ \textit{phase point operators} on $n$-qubits, which are defined as 
\begin{align}
    A_\bmu = \frac{1}{2^n} \sum_{\bmv \in \P_n} (-1)^{[\bmu,\bmv]} D_\bmv.
    \label{eq:A}
\end{align}

It can be shown (see \appref{appx:A_and_W_rho_props}) that these phase point operators share the following properties with those defining Gross's Wigner representation of $n$ qudits with Hilbert space dimension $d$ on a phase space $\P \coloneqq \mathbb{Z}^n_d \times \mathbb{Z}^n_d$:
\begin{enumerate}[label=\normalfont \textbf{(A\arabic*)}]
	\item \label{A_main_text_property:qubit_phase_point_op_tensor_product}$A_{\bmu_X \oplus \bmu_Y} = A_{\bmu_X} \otimes A_{\bmu_Y}$ on a bipartite system $XY$, where $\bmu_X$ and $\bmu_Y$ are respectively points in the phase spaces of subsystems $X$ and $Y$.
	\item $\tr[A^\dagger_\bmu A_\bmv] = d^n \delta_{\bmu,\bmv}$, \label{A_main_text_property:orthogonality}
	\item $\tr[A_\bmu] = 1$, \label{A_main_text_property:unit_trace}
	\item $\sum_{\bmu \in \P} A_\bmu = d^n \id^{\otimes n}$. \label{A_main_text_property:identity}
\end{enumerate}

These properties imply that $W_\rho$ provides an informationally complete and normalized representation of general $n$-qubit states, i.e.,
\begin{equation}
\sum_{\bmu} W_\rho (\bmu)  = 1,
\end{equation}
for any quantum state $\rho$. Like Gross's Wigner function, an immediate consequence of \eqref{eq:displacement_reorder} is that the representation $W_\rho$ transforms covariantly under the displacement operators, namely,
\begin{equation}
W_{D_{\bmv} ^\dagger \rho D_{\bmv}} (\bmu) = W_\rho (\bmu + \bmv),
\end{equation}
for all $\bmu,\bmv \in \P_n$. In fact, everything in the construction of this representation has proceeded in direct analogy to Gross's, except for the lack of phase factors ensuring the Hermiticity of the displacement operators. As a result, the phase point operators in \eqref{eq:A} are no longer Hermitian, which in turn implies that $W_\rho$ is generally complex.

However, it turns out that the real and imaginary parts of $W_\rho$ are related to the quantum state in the following simple way (a proof is given in \appref{appx:rebit_rep}):
\begin{restatable}[]{lemma}{rhoWrealim}
	\label{lemma:rho_W_real_im_correspondence}
	Given any $n$-qubit quantum state $\rho$, 
	\begin{align}
	& \mathfrak{Re}[W_\rho(\bmu)]=W_{\mathfrak{Re}(\rho)}(\bmu) \\
	& \mathfrak{Im}[W_\rho(\bmu)]=W_{\mathfrak{Im}(\rho)}(\bmu) 
	\end{align}
	for all $\bmu \in \P_n$, where $\mathfrak{Re}(\rho)$ and $\mathfrak{Im}(\rho)$ are respectively the real and imaginary parts of the density matrix of $\rho$ in the computational basis.
\end{restatable}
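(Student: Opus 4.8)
The plan is to exploit a feature of the representation already flagged in the paragraph preceding the lemma: because the displacement operators in \eqref{eq:Dx} carry no Hermiticity-fixing phases, each $D_\bmv = Z(\bmv_z)X(\bmv_x)$ is simply a product of the matrices $X$ and $Z$, both of which have real entries in the computational basis. Hence every $D_\bmv$ is a real matrix, and since by \eqref{eq:A} each phase point operator $A_\bmu$ is a linear combination of the $D_\bmv$ with coefficients $(-1)^{[\bmu,\bmv]}\in\{+1,-1\}$, the operators $A_\bmu$ and $A_\bmu\hc$ are real matrices as well. This single observation does essentially all the work.

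With that in hand the argument is pure linearity. Writing $\rho=\mathfrak{Re}(\rho)+i\,\mathfrak{Im}(\rho)$ for the entrywise decomposition of the density matrix in the computational basis, so that $\mathfrak{Re}(\rho)$ and $\mathfrak{Im}(\rho)$ are real matrices, I would expand
\begin{equation}
W_\rho(\bmu)=\frac{1}{2^n}\tr[A_\bmu\hc\rho]
=\frac{1}{2^n}\tr\!\big[A_\bmu\hc\,\mathfrak{Re}(\rho)\big]+\frac{i}{2^n}\tr\!\big[A_\bmu\hc\,\mathfrak{Im}(\rho)\big].
\end{equation}
Since the product of real matrices is real and the trace of a real matrix is a real number, both traces on the right-hand side are real. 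Thus the display already exhibits the decomposition of $W_\rho(\bmu)$ into its real and imaginary parts, and reading it off gives $\mathfrak{Re}[W_\rho(\bmu)]=\frac{1}{2^n}\tr[A_\bmu\hc\mathfrak{Re}(\rho)]$ and $\mathfrak{Im}[W_\rho(\bmu)]=\frac{1}{2^n}\tr[A_\bmu\hc\mathfrak{Im}(\rho)]$.

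To finish I would observe that the defining formula \eqref{eq:W_rho} is manifestly linear in its argument, and so extends without change to arbitrary (not necessarily positive or unit-trace) operators; evaluating it on the real matrices $\mathfrak{Re}(\rho)$ and $\mathfrak{Im}(\rho)$ identifies the two right-hand sides above as $W_{\mathfrak{Re}(\rho)}(\bmu)$ and $W_{\mathfrak{Im}(\rho)}(\bmu)$ respectively, which is the claim. The only point requiring any care is precisely this extension of $W$ to operators that are not themselves states; there is no genuine obstacle here, as the entire content of the lemma is the reality of the phase point operators $A_\bmu$ in the computational basis.
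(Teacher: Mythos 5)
Your proposal is correct and follows essentially the same route as the paper's Appendix \ref{appx:rebit_rep} proof: decompose $\rho=\mathfrak{Re}(\rho)+i\,\mathfrak{Im}(\rho)$, use linearity of $W$ and the reality of the phase point operators $A_\bmu$ (and hence $A_\bmu^\dagger$) in the computational basis to conclude both traces are real, and read off the real and imaginary parts. The only cosmetic difference is that you re-derive the reality of $A_\bmu$ from the reality of $X$, $Z$, and the sign coefficients $(-1)^{[\bmu,\bmv]}$, whereas the paper simply cites this fact as already established just after \eqref{eq:A_phase_point}.
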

This immediately implies 
\begin{corollary}
	\label{corollary:quasiprob_rebit}
	The representation $W_\rho$ of an $n$-qubit state $\rho$ is a quasiprobability distribution if and only if $\rho$ is an $n$-rebit state, i.e., the density matrix $\rho$ is real in the computational basis.
\end{corollary}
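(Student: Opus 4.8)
The plan is to reduce the statement to \lemref{lemma:rho_W_real_im_correspondence} together with the informational completeness already established for the representation. First I would fix the convention that a \emph{quasiprobability distribution} over $\P_n$ is a real-valued function that sums to unity. Since we have already shown $\sum_\bmu W_\rho(\bmu) = 1$ for every state $\rho$, normalization is automatic and the only remaining condition is that $W_\rho(\bmu)\in\mathbb{R}$ for all $\bmu\in\P_n$, i.e. that $\mathfrak{Im}[W_\rho(\bmu)] = 0$ everywhere.

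Next I would invoke \lemref{lemma:rho_W_real_im_correspondence}, which gives $\mathfrak{Im}[W_\rho(\bmu)] = W_{\mathfrak{Im}(\rho)}(\bmu)$ for all $\bmu$. Hence $W_\rho$ is real-valued if and only if the function $\bmu\mapsto W_{\mathfrak{Im}(\rho)}(\bmu)$ vanishes identically, where $\mathfrak{Im}(\rho)$ is the (real, antisymmetric) imaginary part of the density matrix in the computational basis and the assignment $M\mapsto W_M$, $W_M(\bmu)=2^{-n}\tr[A^\dagger_\bmu M]$, is extended linearly to all $n$-qubit operators.

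The one substantive ingredient is injectivity of this linear map: by property~\ref{A_main_text_property:orthogonality} the $2^{2n}$ phase point operators $\{A_\bmu\}$ are linearly independent and therefore form a basis of the space of $n$-qubit operators, so $W_M\equiv 0$ forces $M=0$. Applying this to $M=\mathfrak{Im}(\rho)$, the function $W_{\mathfrak{Im}(\rho)}$ vanishes identically precisely when $\mathfrak{Im}(\rho)=0$, i.e. when $\rho$ is real in the computational basis. For the converse direction, if $\rho$ is an $n$-rebit state then $\mathfrak{Im}(\rho)=0$, so $W_{\mathfrak{Im}(\rho)}\equiv 0$ trivially and $W_\rho$ is real-valued, hence (being normalized) a quasiprobability distribution.

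Chaining the steps gives $W_\rho$ is a quasiprobability distribution $\iff W_\rho$ is real-valued $\iff W_{\mathfrak{Im}(\rho)}\equiv 0 \iff \mathfrak{Im}(\rho)=0 \iff \rho$ is an $n$-rebit state. I do not anticipate a genuine obstacle here, since the content is essentially packaged into \lemref{lemma:rho_W_real_im_correspondence}; the only point requiring a little care is spelling out that the map $M\mapsto W_M$ is injective as a consequence of the orthogonality relation~\ref{A_main_text_property:orthogonality}, and making explicit that normalization comes for free so that only reality of the distribution is at issue.
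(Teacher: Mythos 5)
Your proof is correct and matches the paper's (implicit) route: the paper presents the corollary as an immediate consequence of Lemma~\ref{lemma:rho_W_real_im_correspondence}, and your chain of equivalences is exactly the intended one. The one step you rightly flag as needing a word of justification — injectivity of $M\mapsto W_M$, which follows since orthogonality~\ref{A_main_text_property:orthogonality} makes $\{A_\bmu\}$ a basis of the $2^{2n}$-dimensional operator space — is indeed what makes the ``immediately implies'' rigorous, and you have supplied it correctly.
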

To simplify our analysis, we will therefore focus on rebit states for the majority of this work, although in \secref{sec:beyond_rebits} we show how we can handle arbitrary qubit states by treating the real and imaginary components separately resulting in an overcomplete quasiprobability representation. Typically, however, we will consider the case of distilling the following Hadamard state
\begin{align}
    \ket{H} \coloneqq \cos \frac{\pi}{8} \ket{0} + \sin  \frac{\pi}{8} \ket{1}, 
\end{align}
which is equivalent to the canonical magic state ${\ket{A} \coloneqq T \ket{+} = \frac{1}{\sqrt{2}} (\ket{0} + e^{i \frac{\pi}{4}} \ket{1}}$ up to a Clifford unitary~\cite{Bravyi2016CliffordEquiv}, where $T\coloneqq \text{diag}(1,e^{i \frac{\pi}{4}})$ is the $T$-gate. The Hadamard state can thus be used in a stabilizer gadgetisation circuit to implement the $T$-gate~\cite{original_magic_states}.

\subsection{Phase space representation of channels}
\label{subsec:channel_rep}
The representation of qubit states induces a corresponding representation of qubit channels. Let $\E$ be an arbitrary channel from $n$ to $m$ qubits, and 
\begin{equation}
\J(\E) = (\I \otimes \E)(\ketbra{\phi^+_n})
\end{equation}
be its associated Choi state~\cite{watrous_2018}, where $|\phi^+_n\>$ is the canonical maximally entangled state on two copies of the input system. We now define a representation~\cite{Wang_2019} of a quantum channel $\E$ as
\begin{align}
    W_{\E}(\bmv |\bmu) \coloneqq 2^{2n} W_{\J(\E)}(\bmu \oplus \bmv),
\label{eq:Wigner-Choi}
\end{align}
for all $\bmv \in \P_m$, and $\bmu \in \P_n$. Under this representation, every channel becomes a matrix mapping the representation of an input state to the representation of the output state. More precisely, if $\sigma = \E(\rho)$, then 
\begin{align}
W_\sigma(\bmv) = \sum_{\bmu \in \P_n } W_\E(\bmv | \bmu) W_{\rho}(\bmu),
\label{eq:W_E_in_action}
\end{align}
for all $\bmv \in \P_m$. 

Furthermore, the representation $W_\E$ respects the sequential and parallel composition of channels, i.e., 
\begin{align}
	W_{\E \circ \F} &= W_\E W_\F \label{eq:sequential_respect},\\
	W_{\E \otimes \F} &= W_\E \otimes W_\F. \label{eq:parallel_respect}
\end{align}
One useful implication of \eqref{eq:parallel_respect} is that when $\E$ and $\F$ respectively prepare states $\rho$ and $\sigma$, we obtain
\begin{equation}
	W_{\rho \otimes \sigma} = W_\rho \otimes W_\sigma,
\end{equation}
which informs us that our chosen representation factorizes over subsystems for product states.

The transition matrix formed by $W_\E(\bmv | \bmu)$ preserves normalization since 
\begin{align}
\sum_{\bmv \in \P_m} W_\E(\bmv | \bmu)=1, 
\label{eq:W_normalization}
\end{align} 
for any $\bmu \in \P_n$ and any quantum channel $\E$. (Proofs of Eqs.~(\ref{eq:W_E_in_action}) through (\ref{eq:W_normalization}) can be found in Appendix \ref{appx:wigner_channel_rep}). Therefore, a quantum channel $\E$ from $n$ qubits to $m$ qubits is represented by a stochastic matrix if and only if $W_{\E} (\bmv | \bmu) \ge 0$ for all $\bmu,\bmv$. 
By inspection of~\eqref{eq:Wigner-Choi} we equivalently have that the quantum channel $\E$ is stochastically represented if and only if its Choi state $\J(\E)$ on $n+m$ qubits is represented by a genuine probability distribution on the phase space $\P_{n+m}$. 

Unlike the odd-dimensional case, the channel $\E$ is not guaranteed a stochastic representation whenever $\J(\E)$ is a stabilizer state. This is an immediate consequence of the sequential and parallel composition rules of Eqs.~(\ref{eq:sequential_respect}) and (\ref{eq:parallel_respect}), which imply that our qubit representation cannot be non-negative over the full stabilizer subtheory~\cite{schmid2021stabilizer}. However, we will show in the next section that $\E$ is stochastically represented if $\J(\E)$ belongs to an important subset of stabilizer states known as CSS states.

 \begin{figure}[t]
	\centering
	\includegraphics[width=0.7\linewidth]{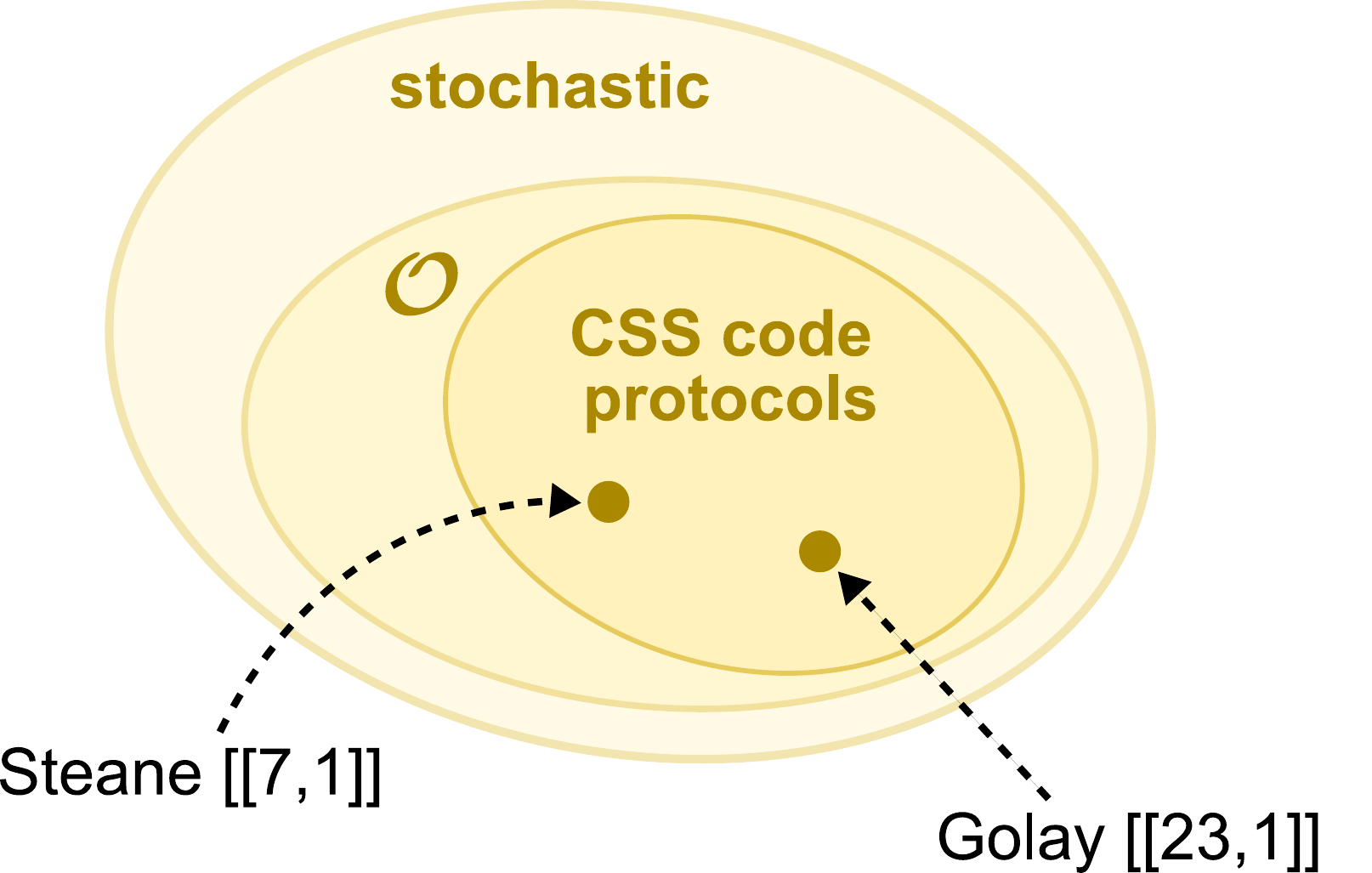} 
	\caption{ \textbf{(Schematic of our approach)}. We find that the set of completely CSS-preserving protocols $\O$ are stochastically represented. Such protocols contain the family of CSS code projections as subset, examples of which include $7$-$1$ and $23$-$1$ protocols based respectively on the Steane $[[7,1]]$ and Golay $[[23,1]]$ codes~\cite{reichardt2005quantum}.  \label{fig:stochastic_schematic} }
\end{figure} 

\subsection{CSS states and circuits}
\label{subsec:CSS_states_ops}
We now identify a class of qubit distillation protocols that arise naturally in fault-tolerant quantum computing, are sufficiently large to enable universal quantum computation, and admit a stochastic representation. In particular, we show that a channel is stochastically represented if its Choi state is CSS. Building on this result, we construct a stochastically-represented subset of stabilizer circuits wherein CSS states play the role of stabilizer states, which has been shown to be capable of universal quantum computation with magic state injection. 

A pure CSS state on $n$ qubits is any stabilizer state whose stabilizer group can be generated by $n$ Pauli observables that are individually of $X$-type or $Z$-type only. For instance, $\ket{\phi^+}\coloneqq \frac{1}{\sqrt{2}} (\ket{00} + \ket{11})$ has the stabilizer group
\begin{align}
    \S(\ket{\phi^+}) = \langle X_1 X_2, Z_1 Z_2 \rangle,
\end{align}
and is therefore CSS. By contrast, $\ket{\psi} \coloneqq \id \otimes H \ket{\phi^+}$ is stabilized by
\begin{align}
    \S(\ket{\psi}) = \langle X_1 Z_2, Z_1 X_2 \rangle,
\end{align}
and is not CSS because its stabilizer generators necessarily mix $X$ and $Z$. As they are generators of stabilizer groups defining CSS states, we group $X$- and $Z$-type Pauli observables together as CSS observables. Letting the set of all pure CSS states be denoted $\Omega_{css}$, we further define the set of all CSS states $\D_{css}$ as the convex hull of $\Omega_{css}$.

The representation we have chosen coincides on all rebit states with an earlier one introduced by Ref.~\cite{delfosse2015} (see Appendix \ref{appx:rebit_rep}), in which it was shown that a Discrete Hudson's theorem~\cite{gross2006hudson} can be recovered for qubits when one restricts to rebits. More precisely, it was shown that any pure $n$-rebit state is non-negatively represented if and only if it is CSS. Therefore, $W_\rho$ is a valid probability distribution for all $\rho \in \D_{css}$, and we conclude that
\begin{theorem}
	\label{theorem:J_CSS_stoch_rep}
	A quantum channel $\E$ from $n$ to $m$ qubits is stochastically represented if $\J(\E)$ is a CSS state on $n+m$ qubits.
\end{theorem}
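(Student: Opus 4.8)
The plan is to establish Theorem~\ref{theorem:J_CSS_stoch_rep} by reducing the statement about channels to a statement about states, and then invoking the Discrete Hudson's theorem for rebits cited from Ref.~\cite{delfosse2015}. Recall from \eqref{eq:Wigner-Choi} that $W_\E(\bmv|\bmu) = 2^{2n} W_{\J(\E)}(\bmu \oplus \bmv)$, so the channel $\E$ is stochastically represented precisely when the function $\bmv \mapsto W_\E(\bmv|\bmu)$ is nonnegative for every fixed input point $\bmu$, which by the displayed identity holds if and only if $W_{\J(\E)}(\bmw) \ge 0$ for all $\bmw \in \P_{n+m}$. Normalization of each column is already guaranteed by \eqref{eq:W_normalization}, which holds for an arbitrary quantum channel, so the only thing left to verify is nonnegativity of the Choi representation. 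In other words, the theorem is equivalent to the claim that $W_{\J(\E)}$ is a genuine probability distribution on $\P_{n+m}$, a point already noted in the paragraph following \eqref{eq:W_normalization}.

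So the core of the argument is: if $\J(\E)$ is a CSS state on $n+m$ qubits, then $W_{\J(\E)} \ge 0$ pointwise. First I would recall that, as established in \secref{subsec:CSS_states_ops}, CSS states are real in the computational basis — their stabilizer generators are products of $X$-type and $Z$-type Paulis, all of which have real matrix entries, so the corresponding projector onto the codespace (and hence any CSS state, pure or mixed) has a real density matrix. By Corollary~\ref{corollary:quasiprob_rebit}, this already tells us $W_{\J(\E)}$ is a quasiprobability distribution (real-valued); the remaining task is to upgrade ``quasiprobability'' to ``probability,'' i.e.\ to rule out negative values. For pure CSS states this is exactly the content of the Discrete Hudson's theorem for rebits proved in Ref.~\cite{delfosse2015}: a pure $n$-rebit state is non-negatively represented in this representation if and only if it is CSS. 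Here one uses the fact, asserted in the excerpt, that the representation $W_\rho$ coincides with the representation of Ref.~\cite{delfosse2015} on all rebit states (proof deferred to \appref{appx:rebit_rep}), so the Hudson-type result transfers directly. Finally, since $\D_{css}$ is defined as the convex hull of the pure CSS states $\Omega_{css}$, and since $\rho \mapsto W_\rho$ is linear (being a trace against fixed operators $A_\bmu^\dagger$, cf.\ \eqref{eq:W_rho}), nonnegativity extends from the extreme points to all of $\D_{css}$ by taking convex combinations: $W_{\sum_i p_i \psi_i} = \sum_i p_i W_{\psi_i} \ge 0$.

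Assembling these pieces: given a channel $\E$ with $\J(\E) \in \D_{css}$, linearity plus the Hudson theorem give $W_{\J(\E)} \ge 0$ on $\P_{n+m}$; then \eqref{eq:Wigner-Choi} transfers this to $W_\E(\bmv|\bmu) \ge 0$ for all $\bmu,\bmv$; combined with column normalization from \eqref{eq:W_normalization}, this says precisely that $W_\E$ is a stochastic matrix, which is the definition of $\E$ being stochastically represented. I expect the only genuine subtlety — and the step the paper properly offloads to an appendix — to be the claim that our $W_\rho$ agrees on rebits with the representation of Ref.~\cite{delfosse2015}, since it is that identification which licenses importing their Hudson's theorem; everything downstream of that is linearity and bookkeeping with the Choi correspondence. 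A secondary point worth flagging is that we need the Hudson result in the form covering the relevant CSS \emph{projector} normalization appropriate to a Choi state, but since the Choi state of a channel is a genuine (normalized) quantum state and lies in $\D_{css}$ by hypothesis, the convex-hull argument handles it uniformly without any case analysis on whether $\J(\E)$ is pure or mixed.
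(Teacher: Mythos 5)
Your proof is correct and takes essentially the same route as the paper: reduce stochasticity of $W_\E$ to nonnegativity of $W_{\J(\E)}$ via \eqref{eq:Wigner-Choi} and the column normalization \eqref{eq:W_normalization}, note that CSS states are rebit states so the representation is real, invoke the Discrete Hudson's theorem of Ref.~\cite{delfosse2015} for pure CSS states (via the identification of $W_\rho$ with the rebit Wigner function on real-represented states proved in \appref{appx:rebit_rep}), and extend to $\D_{css}$ by linearity over the convex hull. The paper compresses these steps into a single sentence before stating the theorem; your version merely makes the reality of CSS states and the convexity step explicit, which is harmless.
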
 

\thmref{theorem:J_CSS_stoch_rep} can be leveraged to identify stochastically-represented qubit stabilizer operations in a systematic way. A channel $\E$ from a system of qubits $B$ is CSS-preserving if $\E(\rho)$ is a CSS state for all $\rho \in \D_{css}$, and completely CSS-preserving if, given any CSS state $\rho_{AB}$ on another system of qubits $A$ as well as $B$, $\I_A \otimes \E_B(\rho_{AB})$ is always CSS. We now note that the maximally entangled state $\ket{\phi^+_n}$ over two sets of $n$ qubits is CSS for all $n$ (see Appendix \ref{appx:wigner_channel_rep}). Therefore, if $\E$ is completely CSS-preserving, $\J(\E)$ must be CSS. By \thmref{theorem:J_CSS_stoch_rep}, it follows that every completely CSS-preserving channel is stochastically represented. 

To motivate the class of completely CSS-preserving channels as operationally significant, we highlight that they cover at least the following subset of stabilizer circuits (see Appendix \ref{appx:CSS_operations} for proof):
\begin{restatable}[CSS circuits]{lemma}{CCSSPpowers}
	\label{lemma:CCSSP_powers}Any sequence of the following stabilizer operations:
	\begin{enumerate}
		\item introducing a CSS state on any number of qubits,
		\item performing a completely CSS-preserving gate on any number $n$ of qubits, i.e., a member of the group 
		\begin{equation}
			\G(n) \coloneqq \langle \CNOT(i,j), Z_i, X_i \rangle_{i,j = 1,\dots,n, i \neq j}.
		\end{equation}
		\item projectively measuring a CSS observable (with the possibility of classical control conditioned on outcome),
		\item discarding any number of qubits, 
	\end{enumerate}
as well as statistical mixtures of such sequences, is completely CSS-preserving.
\end{restatable}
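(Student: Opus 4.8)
\emph{Proof strategy.} The plan is to reduce the lemma to two facts: (i) the family of completely CSS-preserving channels is closed under sequential composition, under tensoring with the identity channel on any spectator system, and under convex mixtures; and (ii) each of the four listed primitives, including the classically-controlled variant of the measurement, is completely CSS-preserving. Granting both, the lemma follows by induction on the length of the circuit: the current state always lies in $\D_{css}$, the next operation acts as (primitive)\,$\otimes$\,(identity on the untouched qubits) and hence maps $\D_{css}$ into itself, and a statistical mixture of such sequences outputs a convex combination of CSS states, which is CSS since $\D_{css}$ is the convex hull of $\Omega_{css}$.

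Fact (i) is immediate from the definitions together with convexity of $\D_{css}$ and linearity of channels, so the content is in fact (ii). I would dispatch preparation, gates and discarding by tracking stabilizer groups on pure CSS states and extending to $\D_{css}$ by convexity. Preparing a CSS ancilla on a disjoint register simply adjoins $X$- and $Z$-type generators supported on the new qubits. For the gates it suffices, by closure under composition, to treat the generators $\CNOT(i,j)$, $X_i$, $Z_i$ of $\G(n)$: in the Heisenberg picture each of these conjugates $X$-type Paulis to $X$-type Paulis and $Z$-type to $Z$-type (for $\CNOT$, $X_i\mapsto X_iX_j$ and $Z_j\mapsto Z_iZ_j$, the other generators being fixed), so a CSS generating set is carried to a CSS generating set. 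For discarding a set $Q$ of qubits, the reduced state is the stabilizer state whose group is $\{ g\in\S(\ket{\psi}) : g$ acts trivially on $Q \}$; writing $g$ in a fixed CSS generating set, triviality on $Q$ imposes one linear equation on the exponents of the $X$-type generators and a \emph{separate} equation on those of the $Z$-type generators, so the solution group is again generated by $X$- and $Z$-type elements, i.e.\ the reduced state is CSS.

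For the measurement, take a pure CSS state $\ket{\psi}$ on the relevant qubits (spectators included) and a CSS observable $P$; since $P$ is the identity on all but a subset of qubits it is globally $X$-type or $Z$-type, and I treat the $X$-type case (the other is symmetric). Split a CSS generating set of $\S(\ket{\psi})$ into $X$-type generators, which automatically commute with $P$, and $Z$-type generators. If all $Z$-type generators also commute with $P$, then $P$ centralizes the maximal stabilizer group and hence $P\in\pm\S(\ket{\psi})$, so the outcome is deterministic and the state is unchanged. Otherwise pick a $Z$-type generator $z_1$ that anticommutes with $P$, replace every other anticommuting $Z$-type generator $z_j$ by $z_1 z_j$ (still $Z$-type) so that only $z_1$ anticommutes with $P$; then the post-measurement state for outcome $\pm1$ is stabilized by the surviving $X$-type generators, the modified $Z$-type generators, and $\pm P$, which is once more a CSS generating set. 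Linearity of $\rho\mapsto\Pi_\pm\rho\Pi_\pm$ lifts this to mixed CSS inputs, and classical control is absorbed by noting that each conditional branch is completely CSS-preserving and that a classical outcome register in a computational-basis state is itself CSS, so recording or post-processing the outcome still produces convex combinations of CSS states.

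The main obstacle, I expect, is internal to the measurement step: confirming that the generator-replacement trick can always be performed \emph{within} the $Z$-type (resp.\ $X$-type) subset, and correctly invoking the centralizer characterisation of a maximal stabilizer group to rule out a non-deterministic outcome when $P$ commutes with every generator. The closure properties and the other three primitives are routine; alternatively, the pure-rebit portion of (ii) can largely be imported from Ref.~\cite{delfosse2015}.
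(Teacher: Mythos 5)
Your proof is correct in substance and follows the same basic strategy as the paper: reduce the claim to showing that each primitive is completely CSS-preserving, then invoke closure under composition, tensoring with identity, and convex mixing. The paper routes this through an explicit canonical-form lemma (its Lemma~\ref{lemma:CSS_operation_canonical_form}), which decomposes any CSS circuit into sums of Kraus operators of the form $U\prod_l P(S_l)$; you argue closure directly, which is lighter-weight here, although the canonical form earns its keep elsewhere in the paper (the tagging corollary and the decomposition into CSS code projections in Appendix~E). Your measurement argument is essentially identical to the paper's Lemma~\ref{lemma:CSS_proj}, including the generator-replacement trick and the centralizer argument for the deterministic case.

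The one place where you genuinely diverge from the paper is the discard step. The paper reduces tracing-out to a computational-basis measurement on the traced-out qubits followed by discarding a pure computational-basis register, so it can reuse its measurement lemma directly. You instead argue that the subgroup $S_A=\{g\in\S(\ket\psi):g|_Q=\id\}$ remains CSS-generated because the constraint decouples into separate linear systems on the $X$-type and $Z$-type exponents. That part is right, but the jump from ``$S_A$ is CSS-generated'' to ``the reduced state is CSS'' is left implicit. The reduced state is the maximally mixed state on the codespace of $S_A$, and to place it in $\D_{css}$ you still need to extend $S_A$ to a maximal CSS stabilizer group (this is precisely what the paper's Lemma~\ref{lemma:CNOT_gaussian_elim}, a CSS Gaussian elimination, provides) and decompose the codespace projector as a uniform sum of the resulting pure CSS logical basis states. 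This is a routine but non-trivial filling step; as written, it is the only genuine omission. Everything else — preparation, gates via Heisenberg conjugation, measurement, classical control, convexity — is sound and matches the paper in spirit.
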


By using CSS-preserving rather than completely CSS-preserving gates, channels covered by \lemref{lemma:CCSSP_powers} can be promoted to the subset of stabilizer circuits where CSS states play the role of stabilizer states. However, both groups of gates are equally powerful for magic distillation (see discussion in Appendix \ref{appx:collective_hadamard_omission}). Thus we directly refer to the set of channels covered by \lemref{lemma:CCSSP_powers} as ``CSS circuits”, and conclude that all such circuits are stochastically represented.

We emphasise the computational power of CSS circuits. Firstly, they are capable of universal quantum computation when supplemented by rebit magic states~\cite{delfosse2015,Catani2018CCSSP_universal} they can also distil~\cite{guerin_masters}. Moreover, the gateset $\G(n)$ constitutes all gates that can be implemented fault-tolerantly using defect braiding in surface codes~\cite{defect_braiding_Raussendorf2007}. Finally, we will see in \secref{sec:structure_CSS} that CSS circuits form the basis of many existing magic distillation protocols constructed around CSS codes.

\section{Entropic constraints on completely CSS-preserving protocols}
\label{sec:majorization}
The standard approach to obtaining constraints on magic distillation is tracking a magic monotone~\cite{veitch2014resource,Howard2018Application,Wang_2020,Fang2020,Fang2022Nogo2}, which is any property of a quantum system that cannot be increased under some class of magic non-generating operations (e.g. stabilizer operations). The paradigmatic example is mana~\cite{veitch2014resource}, the total negativity in the Wigner representation of a state. However, this approach operates at the state level and therefore does not incorporate any additional distinguishing physics of distillation protocols. In contrast, the recent work~\cite{koukoulekidis2022constraints} considers how a class of magic distillation protocols transform a \emph{pair} of quantum states -- one a noisy magic state, the other a stabilizer state singled out by the characteristic physics of those protocols. 

Here we briefly review the approach taken in  Ref.~\cite{koukoulekidis2022constraints} to extend relative majorization to quasiprobability distributions, and how that leads to the extension of a dense subset of $\alpha$-R\'{e}nyi divergences from classical statistical mechanics to quantify the non-classical order in magic states under distillation. We then adapt this work for rebit magic state distillation using CSS circuits. 

\subsection{Statistical mechanics of quasiprobability distributions}
\label{subsec:maj_intro}
At the heart of statistical mechanics are the notions of disorder and deviations from equilibrium. In classical statistical mechanics, this leads to the thermodynamic entropy ${H(\p) = - \sum_i p_i \log p_i}$, which is essentially the unique measure of disorder of a statistical distribution ${\p= (p_1, \dots, p_N)}$.

In odd-dimensional systems~\cite{negativity_resource} or restricted qubit models~\cite{delfosse2015,Catani2018CCSSP_universal,RaussendorfDan2017}, magic states that promote an efficiently simulable part of quantum mechanics to universal quantum computation must have negativities in their representation within a phase space model. Despite this negativity, it is still possible to arrive at a well-defined statistical mechanical description that circumvents the fact that the Boltzmann entropy is not well-defined. The key observation we exploit is that the framework of majorization remains well-defined when extended to quasiprobability distributions, and is a more fundamental concept than the traditional entropy.

Given two probability distributions $\p = (p_1, \dots, p_N)$ and ${\p' = (p_1', \dots, p_M')}$, we would like to determine which of them is ``more disordered'' than the other. This can be done by comparing $\p$ to some reference probability distribution ${\r=(r_1, \dots ,r_N)}$ of our choice, and $\p'$ to some other reference probability distribution $\r' =(r'_1, \dots ,r'_M)$, also of our choice. We then say that $(\p, \r)$ relatively majorizes $(\p', \r')$ and write $(\p,\r) \succ (\p', \r')$ if there exists a stochastic map $A$ that sends the first pair of distributions into the second, namely
\begin{align}
(A \p , A\r) &= (\p', \r').
\end{align}
It was shown in Ref.~\cite{koukoulekidis2022constraints} that this definition can be extended to the case of quasiprobability distributions in the first argument, and the following result was established to provide an entropic measure in terms of the $\alpha$-R\'{e}nyi divergences.
\begin{theorem}[\cite{koukoulekidis2022constraints}] Let $\w= (w_1, \dots , w_N)$ and ${\w' = (w'_1 ,\dots , w'_M)}$ be any two quasiprobability distributions and let $\r = (r_1 ,\dots , r_N)$ and $\r' = (r'_1, \dots, r'_M)$ be any two probability distributions with non-zero components. If $(\w, \r) \succ (\w',\r')$ then
\begin{equation}
D_\alpha( \w || \r) \ge D_\alpha (\w' || \r'),
\end{equation}
for all $ \alpha \in \A =\left\{\frac{2a}{2b-1} \, : \, a,b \in \mathbb{N}, a \ge b   \right\} \cup \left\{ \infty \right\}$.
\label{thm:koukouD_alpha_monotone}
\end{theorem}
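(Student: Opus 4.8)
The plan is to reduce the claim to the data-processing inequality for a suitable $f$-divergence, the only genuinely non-routine point being why the admissible exponents are exactly those in $\A$. First I would recall the defining formula $D_\alpha(\w\|\r) = \frac{1}{\alpha-1}\log \sum_i w_i^\alpha r_i^{1-\alpha}$ and observe that for $\alpha = \tfrac{2a}{2b-1}\in\A$ one has $\alpha>1$ (since $a\ge b$) and, crucially, that $t^\alpha := (t^{2a})^{1/(2b-1)} = |t|^\alpha$ is a well-defined \emph{non-negative real} for every $t\in\mathbb{R}$: the odd denominator $2b-1$ admits a single real root of a negative argument, and the even numerator $2a$ removes the sign. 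Hence for a quasiprobability $\w$ and a $\r$ with positive components the quantity $S_\alpha(\w\|\r) := \sum_i w_i^\alpha r_i^{1-\alpha} = \sum_i r_i\,\phi_\alpha(w_i/r_i)$, with $\phi_\alpha(t):=|t|^\alpha$, is a genuine \emph{positive} real (positivity uses $\sum_i w_i = 1 \neq 0$, so some $w_i\neq 0$), and $D_\alpha(\w\|\r) = \frac{1}{\alpha-1}\log S_\alpha(\w\|\r)$. Since $\alpha-1>0$ and $\log$ is increasing, it suffices to prove $S_\alpha(\w\|\r)\ge S_\alpha(\w'\|\r')$.

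Next I would run the standard Jensen argument. Let $A$ be a stochastic matrix with $A\w=\w'$ and $A\r=\r'$, noting $\r'$ has positive components by hypothesis. For each output index $j$ set $\lambda_{j,i} := A_{ji}r_i/r'_j$; then $\lambda_{j,i}\ge 0$ and $\sum_i\lambda_{j,i} = (A\r)_j/r'_j = 1$, while
\[
\frac{w'_j}{r'_j} = \frac{1}{r'_j}\sum_i A_{ji}w_i = \sum_i \lambda_{j,i}\,\frac{w_i}{r_i},
\]
so $w'_j/r'_j$ is a convex combination of the real numbers $\{w_i/r_i\}_i$. Applying convexity of $\phi_\alpha$ on all of $\mathbb{R}$ (valid because $\alpha\ge 1$ and $\phi_\alpha=|\cdot|^\alpha$), multiplying by $r'_j$, summing over $j$, and using column-stochasticity $\sum_j A_{ji}=1$ to collapse the double sum yields
\[
S_\alpha(\w'\|\r') = \sum_j r'_j\,\phi_\alpha\!\Big(\tfrac{w'_j}{r'_j}\Big) \le \sum_i \Big(\sum_j A_{ji}\Big) r_i\,\phi_\alpha\!\Big(\tfrac{w_i}{r_i}\Big) = S_\alpha(\w\|\r),
\]
which is exactly what we need. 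The case $\alpha=\infty$ I would handle directly: $D_\infty(\w\|\r) = \log\max_i (w_i/r_i)$, and since each $w'_j/r'_j$ is a convex combination of the $w_i/r_i$ it cannot exceed $\max_i (w_i/r_i)$, hence $\max_j (w'_j/r'_j)\le\max_i (w_i/r_i)$.

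The main obstacle — and the conceptual core — is not the Jensen step but the justification that $\phi_\alpha(t)=|t|^\alpha$ is simultaneously (i) the \emph{forced} real interpretation of $t\mapsto t^\alpha$ on negative arguments and (ii) convex on all of $\mathbb{R}$, and that this holds for precisely the exponents in $\A$. For a rational $\alpha=p/q$ in lowest terms, $t^\alpha$ is real on $t<0$ only if $q$ is odd, and the resulting extension $\sgn(t)^p|t|^\alpha$ is convex on all of $\mathbb{R}$ only if $p$ is even (otherwise it is an odd function, convex merely on a half-line, and Jensen fails for a convex combination straddling $0$). Thus exactly $\alpha=\tfrac{2a}{2b-1}$ survives. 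One should also dispatch the harmless degeneracies: $S_\alpha(\w\|\r)>0$ as noted above; no division by zero occurs because $\r,\r'$ have nonzero (hence positive) components and $r_i^{1-\alpha}$ is then a positive power of a positive base; and $A$ automatically maps the normalized $\w$ to a normalized $\w'$, so the statement is internally consistent. Everything else is bookkeeping.
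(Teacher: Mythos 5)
Your proof is correct and follows what is essentially the only natural route: rewrite $D_\alpha$ via $S_\alpha(\w\|\r)=\sum_i r_i\,|w_i/r_i|^\alpha$, note that for $\alpha=2a/(2b-1)$ the forced real interpretation of $t^\alpha$ is $|t|^\alpha$ (odd root, even power), and then run the Jensen/data-processing argument with conditional weights $\lambda_{j,i}=A_{ji}r_i/r'_j$, using column-stochasticity of $A$ to collapse the double sum. The convexity of $|t|^\alpha$ on all of $\mathbb{R}$ (rather than a half-line) is exactly the point that makes the quasiprobability extension work, and you identify it correctly. Two minor remarks. First, for consistency with the $\alpha\to\infty$ limit of $D_\alpha$ over $\A$, the $\alpha=\infty$ case should read $D_\infty(\w\|\r)=\log\max_i|w_i/r_i|$ (with the absolute value); your convex-combination observation still closes it via the triangle inequality, $|w'_j/r'_j|\le\sum_i\lambda_{j,i}|w_i/r_i|\le\max_i|w_i/r_i|$. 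Second, the remark that ``exactly $\alpha=2a/(2b-1)$ survives'' is a slight over-reach: the even/odd parity conditions with $0<\alpha<1$ also yield the stated inequality, since concavity of $|t|^\alpha$ and the sign flip from $\alpha-1<0$ cancel; the constraint $a\ge b$ in $\A$ reflects a choice in the cited source rather than a hard obstruction. Neither point affects the validity of your proof of the theorem as stated.
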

Here $D_\alpha(\w || \r)$ is an extension of the classical $\alpha$-R\'{e}nyi divergence to the case of $\w$ being a quasiprobability distribution. This extension requires $\alpha \in \A$ in order for the expression
\begin{equation}\label{eq:D}
	D_\alpha(\w||\r) \coloneqq \frac{1}{\alpha-1} \log \sum_{i=1}^N w_i^\alpha r_i^{1-\alpha},
\end{equation}
to be well-defined\footnote{Note that since the set $\A$ is dense in $\alpha \in (1, \infty)$, one could extend the definition of $D_\alpha(\w || \r)$ to all $\alpha \ge 1$ by continuity.}. In the case of $\r$ being the uniform distribution $\r = (1/N, 1/N, \dots, 1/N)$, we have that
\begin{equation}
D_\alpha(\w || \r) = \log N - H_\alpha(\w),
\end{equation}
where $H_\alpha(\w)\coloneqq (1-\alpha)^{-1} \log \sum_i w_i^\alpha$ is the $\alpha$-R\'{e}nyi entropy evaluated on $\w$. Another result of Ref.~\cite{koukoulekidis2022constraints} is that $\w$ has negativity if and only if $H_\alpha(\w)$ is negative for $\alpha$ close to $1$ and diverges to $-\infty$ in the limit $\alpha \rightarrow 1^+$. This provides a well-defined and meaningful notion of negative entropy in a statistical mechanical setting. 

\subsection{Application to completely CSS-preserving magic distillation}
\label{subsec:maj_CSS}
Since completely CSS-preserving protocols are stochastically represented, the following family of entropic constraints on rebit magic distillation applies them all (see \appref{appx:general_bound} for proof): 
\begin{restatable}[]{theorem}{generalBound}
\label{thrm:general_qubit_bound}
Let $\rho$ be a noisy rebit magic state and $\tau$ be a CSS state in the interior of $\D_{css}$. If there exists a completely CSS-preserving protocol $\E$ such that ${\E(\rho^{\otimes n}) =\rho'}$ and $\tau' \coloneqq \E(\tau^{\otimes n})$ is also in the interior of $\D_{css}$, then 
\begin{align}
 \Delta D_\alpha  \ge 0
\end{align}
for all $\alpha\in \A$, where
\begin{align}
 \Delta D_\alpha \coloneqq    nD_\alpha(W_{\rho} || W_\tau ) -  D_\alpha(W_{\rho'} || W_{\tau'} ).
\end{align}
\end{restatable}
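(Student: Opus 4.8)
The plan is to recognise $\Delta D_\alpha \ge 0$ as a direct instance of the relative-majorization monotonicity of \thmref{thm:koukouD_alpha_monotone}, combined with additivity of $D_\alpha$ under tensor products. The only ingredients needed are the stochasticity and composition properties of the representation established in \secref{sec:basic_theory}.

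First I would assemble the relative-majorization relation. Since $\E$ is completely CSS-preserving its Choi state is CSS, so by \thmref{theorem:J_CSS_stoch_rep} the transition matrix $W_\E(\bmv|\bmu)$ is genuinely stochastic. By the parallel-composition rule, $W_{\rho^{\otimes n}} = W_\rho^{\otimes n}$ and $W_{\tau^{\otimes n}} = W_\tau^{\otimes n}$, while the input--output relation in \eqref{eq:W_E_in_action} gives $W_{\rho'} = W_\E\, W_\rho^{\otimes n}$ and $W_{\tau'} = W_\E\, W_\tau^{\otimes n}$. Hence the single stochastic map $W_\E$ carries the pair $(W_\rho^{\otimes n}, W_\tau^{\otimes n})$ onto $(W_{\rho'}, W_{\tau'})$, i.e.\ $(W_\rho^{\otimes n}, W_\tau^{\otimes n}) \succ (W_{\rho'}, W_{\tau'})$.

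Next I would verify the hypotheses of \thmref{thm:koukouD_alpha_monotone}. Because $\rho$ is a rebit state, Corollary~\ref{corollary:quasiprob_rebit} makes $W_\rho$, and hence $W_\rho^{\otimes n}$, a real quasiprobability distribution; since $W_\E$ is real and stochastic, $W_{\rho'}$ is real as well, hence also a quasiprobability distribution. For the second (reference) arguments: as $\tau$ is CSS, the Discrete Hudson theorem imported from Ref.~\cite{delfosse2015} makes $W_\tau$ a bona fide probability distribution, and the hypothesis that $\tau$ lies in the \emph{interior} of $\D_{css}$ upgrades this to strict positivity of every component --- concretely, an interior point may be written as $\lambda\,\id/2^n + (1-\lambda)\eta$ with $\lambda>0$ and $\eta\in\D_{css}$, and since $W_{\id/2^n}$ is uniform and $W_\eta\ge 0$ one gets $W_\tau > 0$ pointwise, with strict positivity then inherited by $W_\tau^{\otimes n}$. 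The analogous interior hypothesis on $\tau'$ gives strict positivity of $W_{\tau'}$. \thmref{thm:koukouD_alpha_monotone} then yields $D_\alpha(W_\rho^{\otimes n} || W_\tau^{\otimes n}) \ge D_\alpha(W_{\rho'} || W_{\tau'})$ for every $\alpha \in \A$.

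Finally I would invoke additivity: for $\alpha \in \A$ the kernel $\sum_i w_i^\alpha r_i^{1-\alpha}$ of \eqref{eq:D} is well-defined and real (the odd denominator of $\alpha$ makes odd roots of the possibly-negative $w_i$ real) and multiplicative over tensor products, so $\tfrac{1}{\alpha-1}\log$ linearises it to give $D_\alpha(W_\rho^{\otimes n} || W_\tau^{\otimes n}) = n\, D_\alpha(W_\rho || W_\tau)$ (and analogously for the max-divergence $\alpha=\infty$). Combining with the previous inequality gives $n\, D_\alpha(W_\rho || W_\tau) \ge D_\alpha(W_{\rho'} || W_{\tau'})$, which is precisely $\Delta D_\alpha \ge 0$. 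I expect the only genuinely delicate step to be the one in the third paragraph --- checking that the ``interior of $\D_{css}$'' hypotheses really force $W_\tau$ and $W_{\tau'}$ to have full support, since \thmref{thm:koukouD_alpha_monotone} fails if a reference has a zero component; the rest is bookkeeping with the composition rules and the multiplicativity of the R\'enyi kernel.
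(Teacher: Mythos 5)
Your proof is correct and follows essentially the same route as the paper's: the paper simply invokes \lemref{lemma:nick_restatement}, whose properties (3) (additivity over copies) and (4) (monotonicity under stochastically-represented channels, which is exactly the relative majorization statement of \thmref{thm:koukouD_alpha_monotone} you use) are the two ingredients you assemble by hand, together with the stochasticity from \thmref{theorem:J_CSS_stoch_rep}. If anything you are a little more careful than the paper on one point — the paper merely asserts that interior points of $\D_{css}$ are strictly positively represented, whereas you supply the convex-decomposition argument $\tau = \lambda\,\id/2^n + (1-\lambda)\eta$ with $\lambda>0$ that actually establishes full support of $W_\tau$ and $W_{\tau'}$.
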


The reference process $\tau^{\otimes n} \mapsto \tau'$ in \thmref{thrm:general_qubit_bound} can be used in three different ways: (1) as a variational parameter, (2) to account for limitations the physical hardware carrying out magic distillation, or (3) to capture structure distinctive to a family of protocols and thereby produce entropic constraints specific to that family, which we now elaborate on in turn.

For (1), we simply treat $\tau$ as a variational parameter, which can be optimized over $\D_{css}$ to obtain the following set of monotones \footnote{We note that an analogous set of magic monotones can be defined for systems of odd-prime dimension via:
\begin{equation}
    \Gamma_\alpha(\rho) = \inf_{\tau \in STAB} D_\alpha(W_\rho || W_\tau).
\end{equation}} on completely CSS-preserving protocols
\begin{align}
    \Lambda_\alpha(\rho) \coloneqq \inf_{\tau \in \D_{css}} D_\alpha (W_\rho || W_\tau),
    \label{eq:Lambda_monotone}
\end{align}
for all $\alpha \in \A$. To see this, we note that we have $D_\alpha(W_\rho||W_\tau)\ge 0$ for all $\rho$, $\tau$, with equality if and only if $\rho=\tau$ (see \lemref{lemma:nick_restatement}). Given any rebit state $\rho$, let $\tau_\rho$ be a solution to the optimization problem in \eqref{eq:Lambda_monotone}. Then if there exists a completely CSS-preserving protocol $\E$ such that ${\E(\rho) = \rho'}$, we obtain
\begin{align}
    \Lambda_\alpha(\rho) = D_\alpha (W_\rho || W_{\tau_\rho}) &\ge D_\alpha(W_{\rho'} || W_{\E(\tau_\rho)}) \notag\\ 
    &\ge \Lambda_\alpha (\rho') ,
\end{align}
where the first inequality follows from generalised relative majorization and the second inequality follows by the definition in \eqref{eq:Lambda_monotone}.
Therefore $\{ \Lambda_\alpha\}_{\alpha \in \A}$ form an infinite set of monotones on all completely CSS-preserving protocols. It is straightforward to verify that $\Lambda_\alpha$ are sub-additive, i.e., $\Lambda_\alpha(\rho^{\otimes n}) \ge n \Lambda_\alpha (\rho)$ (this follows from the additivity of the generalized $\alpha$-R\'{e}nyi divergences). Therefore, these $\Lambda_\alpha$-monotones allow us to set global bounds on any completely CSS-preserving protocol. More precisely, if there exists a completely CSS-preserving protocol $\E$ such that $\E(\rho^{\otimes n} ) =  \rho'$, then the overhead $n$ is lower bounded as 
\begin{align}
    n \ge \frac{\Lambda_\alpha(\rho')}{\Lambda_\alpha (\rho)}.
\end{align}

For (2), we can use the reference process to take into account limitations in the hardware carrying out magic distillation. For instance, Ref.~\cite{koukoulekidis2022constraints} uses the reference process to preserve the Gibbs state in order to encode a background temperature or free energy production in the distillation hardware. 

The final way (3) of using the reference process is demonstrated in the next section. We show explicitly how to the reference process may be chosen to produce entropic constraints specialised for CSS code projection protocols.

\section{Entropic constraints on CSS code projection protocols} \label{sec:structure_CSS}
In this section, we apply \thmref{thrm:general_qubit_bound} to CSS code projection protocols, and obtain lower and upper bounds on their code length ($\sim$ resource cost). In some parameter regimes, the new lower bounds outperform those due to magic monotones such as generalised robustness~\cite{Seddon2021Quantifying} and projective robustness~\cite{Regula2022Probabilistic}. 
To our knowledge, these constitute the first set of trade-off relations on distillation parameters that act as fundamental upper bounds on the resource cost for a family of distillation protocols.
 
\subsection{CSS code projections}
An elementary protocol for magic distillation, proposed in the seminal work of Bravyi and Kitaev~\cite{original_magic_states}, uses projection onto a quantum error correcting code. This protocol begins by taking in $n$ copies of a noisy magic state $\rho$ and post-selecting the no-error outcome from the syndrome measurement of an $[[n,k]]$ stabilizer code $\C$. Doing so has the effect of projecting onto $\C$, so the protocol proceeds to decode onto $k$ output qubits and discard the remaining syndrome qubits. In general, any such code projection protocol only succeeds probabilistically with some acceptance probability $p$. Nevertheless, if the likelihood of an undetectable error is less than the input error rate $\epsilon$, the post-selected output state will have a higher fidelity per qubit with respect to the target magic state than $\rho$. Many existing magic distillation protocols are based on CSS codes, such as the 15-to-1 protocol~\cite{original_magic_states} based on the $[[15,1]]$ punctured Reed-Muller code~\cite{knill1996threshold,steane1999quantum}, as well as straightfoward code projection protocols based on the $[[7,1]]$ Steane and $[[23,1]]$ Golay CSS codes analysed in~\cite{reichardt2005quantum}. 

It has long been known that any $n$-to-1 magic distillation protocol can be decomposed as a sum of stabilizer code projections followed by Clifford post-processing~\cite{campbell_browne}. This result implies that the optimal fidelity with respect to a target magic state, though not necessarily optimal acceptance probability, can always be achieved by a stabilizer code projection. In a similar way, we can show (\thmref{theorem:CSS_campbell_browne}) that any CSS circuit carrying out an $n$-to-$k$ magic distillation protocol is a sum of CSS code projections followed by completely CSS-preserving post-processing. (In fact, the proof line we give also allows one to generalise the result of Ref.~\cite{campbell_browne} to arbitrary $n$-$k$ stabilizer protocols).

An $n$-to-$k$ CSS code projection protocol is an operation $\K$ from $n$ to $k$ qubits that acts as 
\begin{equation}
	\K(\cdot) \coloneqq \tr_{k+1,\dots,n}[\U \circ \P(\cdot)], \label{eq:K_operation}
\end{equation}
where $\U$ and $\P$ are respectively a unitary decoding channel and codespace projection for an $[[n,k]]$ CSS code. Given $n$ copies of a noisy magic state $\rho$, $\K$ acts as 
\begin{equation}
	\K(\rho^{\otimes n}) = p\rho',
\end{equation}
where $\rho'$ is the output magic state on $k$ qubits and we have defined the acceptance probability ${p\coloneqq \tr[P\rho^{\otimes n}]}$ for a single successful run of $\K$. Distillation is successful if the output $\rho'$ from a successful run has a greater fidelity per qubit with respect to a target (pure) magic state of choice than $\rho$.

Since code projection protocols are not trace-preserving, the majorization constraints do not immediately apply. However, this can be remedied by preparing a specially designated CSS state $\sigma$ on $k$ qubits whenever an $n$-to-$k$ code projection protocol fails, while continuing to distinguish between successful (labelled `0') and unsuccessful (labelled `1') runs of the protocol by recording this information in an ancillary qubit. We can therefore extend $\K$ into the following trace-preserving operation $\E$:
\begin{align}
    \E(\cdot) \coloneqq \K(\cdot) \otimes \ketbra{0} + \tr[\overline{\P} (\cdot)] \sigma \otimes \ketbra{1}, \label{eq:CSS_code_reduction_channel}
\end{align}
where ${\overline{\P} \coloneqq \overline{P}(\cdot)\overline{P} \coloneqq (\id^{\otimes n} - P)(\cdot)(\id^{\otimes n} - P)}$ performs the projection onto the orthogonal complement of $\C$, and $\sigma$ is an arbitrary CSS state. We conclude that there exists an $n$-to-$k$ CSS code projection such that $\rho^{\otimes n} \mapsto p \rho'$ if and only if there exists a trace-preserving $n$-to-$k$ CSS code projection $\E$ identified in \eqref{eq:CSS_code_reduction_channel} such that 
\begin{align}
\label{eq:rho_prime_p}
\E[\rho^{\otimes n}]&= p \rho' \otimes \ketbra{0} + (1-p) \sigma \otimes \ketbra{1} \coloneqq \rho_p. 
\end{align}

\subsection{General constraints on CSS code projections}
We now exploit \thmref{thrm:general_qubit_bound} to derive majorization conditions that apply across \emph{all} $n$-to-$k$ CSS code projection protocols. Crucially, the trace-preserving CSS code projection identified in \eqref{eq:CSS_code_reduction_channel} can be implemented as a CSS circuit (see Appendix \ref{appx:CSS_code_proj} for proof), which leads to the following Lemma:
\begin{restatable}[]{lemma}{ntoonestoch} \label{lemma:n_to_1_stoch}
Every trace-preserving CSS code projection can be executed as a sequence of completely CSS preserving operations, and is therefore stochastically represented.
\end{restatable}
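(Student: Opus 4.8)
The plan is to show that the trace-preserving extension $\E$ defined in \eqref{eq:CSS_code_reduction_channel} can be realised as a composition of operations each of which appears on the list of \lemref{lemma:CCSSP_powers}, and then invoke the conclusion (established immediately after that lemma) that every CSS circuit is completely CSS-preserving and hence stochastically represented. So the work is entirely in decomposing $\E$ into CSS-circuit primitives. First I would treat the codespace projection $\P$. For an $[[n,k]]$ CSS code the stabilizer group is generated by $X$-type and $Z$-type Pauli observables, i.e.\ CSS observables in the sense defined in \secref{subsec:CSS_states_ops}. Projecting onto the codespace is implemented by sequentially measuring each of these $n-k$ CSS stabilizer generators and post-selecting the $+1$ outcome — this is exactly primitive 3 (projective measurement of a CSS observable, with classical control on the outcome) of \lemref{lemma:CCSSP_powers}. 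Recording whether all syndromes were trivial into the ancilla flag qubit $\ketbra{0}$ vs.\ $\ketbra{1}$ is done by introducing that ancilla (a CSS state, primitive 1), and applying classically-controlled $X_i$ gates from the syndrome outcomes (members of $\G$, primitive 2). The orthogonal-complement branch weighted by $\tr[\overline{\P}(\cdot)]$ is just the complementary post-selection, handled by the same measurements.

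Next I would handle the decoding unitary $\U$. The key point — which I would cite from the surrounding discussion and Appendix \ref{appx:CSS_operations} — is that the Clifford decoder for a CSS code can always be chosen to lie in the group $\G(n) = \langle \CNOT(i,j), Z_i, X_i\rangle$, since CSS codes are built from two classical linear codes and Gaussian elimination over $\mathbb{Z}_2$ is realised by $\CNOT$s (together with single-qubit $X,Z$ to fix signs). Thus $\U$ is a completely CSS-preserving gate, primitive 2. Then tracing out the $n-k$ syndrome qubits in \eqref{eq:K_operation} is primitive 4 (discarding qubits). Finally, the failure branch prepares a fixed CSS state $\sigma$ on $k$ qubits tensored with $\ketbra{1}$ — again primitive 1 (introducing a CSS state) — and the whole channel $\E$ is the classically-conditioned combination of the success and failure branches, which is a statistical mixture of such sequences and hence still covered by \lemref{lemma:CCSSP_powers}. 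Since every primitive used is on the list, $\E$ is a CSS circuit, therefore completely CSS-preserving, therefore stochastically represented by the discussion following \thmref{theorem:J_CSS_stoch_rep}.

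The main obstacle I anticipate is the claim that the CSS decoding unitary $\U$ can always be taken inside $\G(n)$ rather than merely inside the full Clifford group — a generic Clifford decoder could contain Hadamards or phase gates that mix $X$ and $Z$ and would break CSS-preservation. Resolving this requires the structural fact that, for a CSS code defined by parity-check matrices $H_X, H_Z$, one can find a decoding circuit built purely from $\CNOT$s (row-reducing $H_X$ and $H_Z$ independently) plus Pauli corrections; equivalently, one appeals to the fact that CSS codes admit transversal/CNOT-only encoders. I would relegate the careful verification of this to Appendix \ref{appx:CSS_code_proj}, where the explicit circuit for $\E$ is constructed, and in the main text simply state that the construction shows each building block is a primitive of \lemref{lemma:CCSSP_powers}. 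A secondary, more minor point is to confirm that the classically-controlled syndrome corrections and the branch selection genuinely fall under "classical control conditioned on outcome" and "statistical mixtures of such sequences" as phrased in \lemref{lemma:CCSSP_powers}; this is routine given the explicit wording of the lemma.
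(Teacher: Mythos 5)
Your proposal is correct and follows essentially the same route as the paper: decompose the trace-preserving CSS code projection into the primitives of \lemref{lemma:CCSSP_powers} (CSS syndrome measurements, a $\G(n)$ decoding unitary, discarding, CSS state preparation, classical control/flag tagging), and you correctly identify that the one nontrivial technical step is showing the decoder lies in $\G(n)$ rather than the full Clifford group, which the paper supplies as \lemref{lemma:CNOT_gaussian_elim} and Corollary~\ref{corollary:completely_CSS_decoding} via CNOT-based Gauss--Jordan elimination plus Pauli sign fixes. The only imprecision is your phrase "row-reducing $H_X$ and $H_Z$ independently": the CNOTs used to reduce the $Z$-type stabilizers also act on the $X$-type ones, and the paper's argument relies on the observation that, once the $Z$-stabilizers are brought to $Z_1,\dots,Z_r$, commutativity forces the transformed $X$-stabilizers to be supported on the remaining qubits, so the second round of elimination does not undo the first — a subtlety worth making explicit if you write out the appendix details.
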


A natural reference process can be chosen for all trace-preserving CSS code projections by exploiting the fact that their successful components are sub-unital. To see this, we first note that the identity operator on $n$ qubits can be decomposed as $\id^{\otimes n} = P + \overline{P}$ for the codespace projector $P$ of \emph{any} $[[n,k]]$ CSS code $\C$. The successful component $\K$ in the trace-preserving code projection of $\C$ therefore acts as 
\begin{align}
\K(\id^{\otimes n}) = \K(P+\overline{P}) = \K(P).
\end{align}
Since $P$ is the logical identity on $k$ logical qubits, i.e.,
\begin{equation}
 P=  \sum_{\bmk \in \{0,1\}^k} \ketbra{\bmk_L} \equiv \id_L,
\end{equation} 
the decoding of $P$ in \eqref{eq:K_operation} must give an output state that is proportional to the maximally mixed state on $k$ physical qubits, so we obtain $\K(\id^{\otimes n}) \propto \id^{\otimes k}$. This confirms the successful component of any trace-preserving CSS code projection to be sub-unital. Furthermore, since $P$ is a rank-$2^k$ projector, the acceptance probability associated with this protocol is $ p=\tr\left[P \frac{\id^{\otimes n}}{2^n}\right]= 2^{k-n}$. Putting everything together, we find that every $n$-to-$k$ trace-preserving CSS code projection $\E$ maps the maximally mixed state to
\begin{align}
\E \left[\left(\frac{\id}{2}\right)^{\otimes n}\right] &= \tau_{n,k} \label{eq:code_projection_reference_process}
\end{align}
where we have defined the output state
\begin{align}
\tau_{n,k} &\coloneqq 2^{k-n} \frac{\id^{\otimes k}}{2^k} \otimes \ketbra{0}+(1-2^{k-n})\sigma \otimes \ketbra{1}.
\end{align}
Since $\frac{\id}{2}$ and $\tau_{n,k}$ are both full-rank CSS states (for the appropriate choice of $\sigma\in \D_{CSS}$), \eqref{eq:code_projection_reference_process} is a valid reference process for \emph{all} trace-preserving CSS code projections. 

We therefore conclude that, if there exists an $n$-to-$k$ CSS code projection such that $\rho^{\otimes n} \mapsto p\rho'$, then
\begin{align} \label{eq:Delta_D_alpha_defn}
	\Delta D_\alpha \coloneqq nD_\alpha\left(W_\rho\bigg|\bigg|W_{\frac{\id}{2}}\right) - D_\alpha\left(W_{\rho_p}||W_{\tau_{n,k}}\right) \geq 0
\end{align}
for all $\alpha \in \A$. We define $\Delta D_\alpha$ over the restricted domain $n \in [k, \infty]$ (as the number of logical qubits cannot exceed the number of physical qubits). We highlight the satisfying fact that $\Delta D_\alpha$ is independent of the choice of CSS state $\sigma$ in \eqref{eq:CSS_code_reduction_channel}. This follows from resource-theoretic arguments as well as properties of the $\alpha$-R\'{e}nyi divergence (see Appendix \ref{appx:bounds_sigma_indep} for details).

\subsection{Bounds on the code length of CSS code projection protocols}
\begin{figure}[t]
	\centering
	\includegraphics[width=0.95\linewidth]{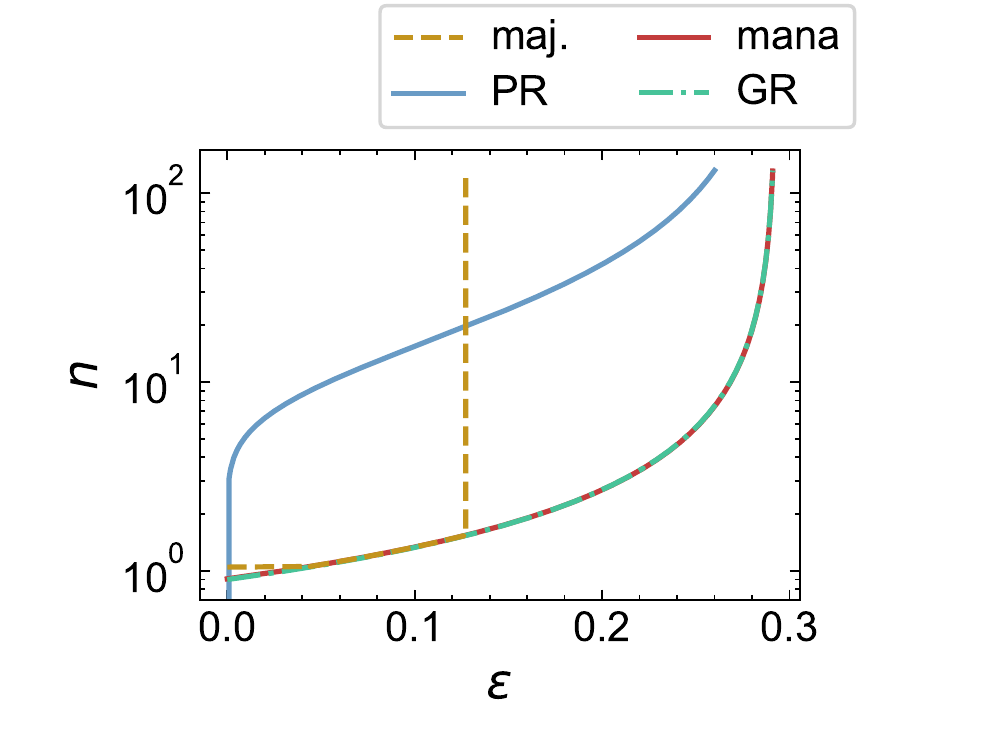}
	\caption{\textbf{(Lower bound comparison).} We plot lower bounds on the number of copies $n$ of the noisy Hadamard state $(1-\epsilon) \ketbra{H} + \epsilon \frac{\id}{2}$ required to distil a single output qubit $\ket{H}$ with output error rate $\delta=10^{-9}$ and acceptance probability $p=0.9$ under a CSS code projection protocol as a function of input error rate $\epsilon$. Our tightest lower bound from majorization (maj.) is shown to be tighter those from mana~\cite{veitch2014resource} and generalized robustness (GR)~\cite{Seddon2021Quantifying}. However, it only outperforms the lower bound from projective robustness (PR)~\cite{Regula2022Probabilistic} in the high $p$, high $\epsilon$ regime. \label{fig:lower_bounds_comparison}}
\end{figure}

The entropic constraints of \eqref{eq:Delta_D_alpha_defn} on CSS code projections can be used to bound many metrics on their performance as magic distillation protocols. In this section, we apply these constraints to bounding the code length of any CSS code projection protocol that could achieve some target combination of noise reduction and acceptance probability from a given supply of noisy magic states.  

We first highlight some properties of the relative entropy difference $\Delta D_\alpha$ in the following lemma, proofs of which can be found in~\appref{appx:fn_properties_proof}.
\begin{restatable}[]{lemma}{fnProperties}\label{lemma:properties_fn}
	The following properties of the relative entropy difference $\Delta D_\alpha$ hold for any noisy input rebit magic state $\rho$, output $k$-rebit magic state $\rho'$, acceptance probability $p<1$ and $\alpha \in \A$:
	\begin{enumerate}[label=\normalfont (\roman*)]
		\item \label{fproperty_concavity} $\Delta D_\alpha$ is concave over the domain $n\in [k,\infty]$.
		\item \label{fproperty_lim_one} $\Delta D_\alpha$ is negative in the limit where $n=k$: \begin{equation}\lim_{n\rightarrow k^+}\Delta D_\alpha<0.\end{equation}. 
		\item \label{fproperty_lim_infty} If $H_\alpha[W_\rho]>1$, then $\Delta D_\alpha$ is also negative in the asymptotic limit
		\begin{equation}\lim_{n\rightarrow  \infty}\Delta D_\alpha<0.\end{equation}
	\end{enumerate}
\end{restatable}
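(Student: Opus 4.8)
The plan is to write $\Delta D_\alpha$ explicitly as a function of $n$ using the definition \eqref{eq:Delta_D_alpha_defn} together with the structure of $\rho_p$ and $\tau_{n,k}$ from Eqs.~(\ref{eq:rho_prime_p}) and (\ref{eq:code_projection_reference_process}), and then read off the three claimed properties. First I would compute the two divergence terms. Since $W_{\id/2} = W_{\id/2^n}$ is the uniform distribution on $\P_n$ (up to the correct normalization; here the relevant reference is the maximally mixed rebit state, whose Wigner representation is uniform on the $2^n$-point support), the first term collapses to $D_\alpha(W_\rho \| W_{\id/2}) = -H_\alpha(W_\rho) + \text{const}$, so $n D_\alpha(W_\rho\|W_{\id/2})$ is \emph{linear} in $n$ with slope $[\,1 - H_\alpha(W_\rho)\,]$ (in the appropriate units where the constant is $1$ per qubit). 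For the second term, I would use the fact that both $\rho_p$ and $\tau_{n,k}$ are block-diagonal with respect to the $\ketbra{0}$/$\ketbra{1}$ flag qubit, and that the Wigner representation factorizes over the flag (property \ref{A_main_text_property:qubit_phase_point_op_tensor_product} and \eqref{eq:parallel_respect}). This gives a two-branch decomposition of the sum defining $D_\alpha$ in \eqref{eq:D}: a ``success'' branch weighted by $p$ against reference weight $2^{k-n}$, and a ``fail'' branch weighted by $(1-p)$ against reference weight $(1-2^{k-n})$, with the $\sigma$-dependent pieces cancelling in ratio (as already asserted via Appendix~\ref{appx:bounds_sigma_indep}). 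Carrying the $\frac{1}{\alpha-1}\log$ through, $D_\alpha(W_{\rho_p}\|W_{\tau_{n,k}})$ becomes $\frac{1}{\alpha-1}\log\big[\, p^\alpha (2^{k-n})^{1-\alpha} 2^{-(1-\alpha) H_\alpha(W_{\rho'})\cdot k\,/\,\text{norm}} + (1-p)^\alpha(1-2^{k-n})^{1-\alpha}\,\big]$ up to bookkeeping constants; the key qualitative point is that this is a $\frac{1}{\alpha-1}\log$ of a sum of two terms, one of which is a power of $2^{k-n}$ and one of which is a power of $(1-2^{k-n})$.

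With this closed form in hand, the three parts follow. For \ref{fproperty_concavity}, I would substitute $x \coloneqq 2^{k-n} \in (0,1]$ (a monotone reparametrization of $n \in [k,\infty]$); the first term is affine in $n$ hence affine in $-\log x$, and the second term is $\frac{1}{\alpha-1}\log(a x^{1-\alpha} + b(1-x)^{1-\alpha})$ for constants $a,b>0$. Concavity of $\Delta D_\alpha$ in $n$ then reduces to a one-variable calculus check: one shows $g(x) \coloneqq \frac{1}{1-\alpha}\log(a x^{1-\alpha}+b(1-x)^{1-\alpha})$ is such that $\Delta D_\alpha$ is concave, using that $-\log(\text{sum of powers})$-type expressions are (log-)convex/concave via Hölder; more directly, $D_\alpha(\cdot\|\cdot)$ is jointly convex / has the right monotonicity in mixing parameters for $\alpha \in \A \subseteq [1,\infty]$, and the map $n \mapsto (\rho_p, \tau_{n,k})$ traces out a line in a way that makes the composition concave. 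I expect this to be the main obstacle: pinning down concavity cleanly either requires a careful second-derivative computation in the variable $x$, or an appeal to the correct convexity property of the Rényi divergence in its second argument along the specific curve $\tau_{n,k}$. I would handle it by direct differentiation of $g(x)$ and checking $g''$ has the right sign on $(0,1)$, since the $\alpha$-dependence is explicit.

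For \ref{fproperty_lim_one}, taking $n \to k^+$ sends $x = 2^{k-n} \to 1$, so the reference $\tau_{n,k} \to \frac{\id^{\otimes k}}{2^k}\otimes\ketbra{0}$ (the fail branch vanishes) while $\rho_p \to p\rho'\otimes\ketbra{0} + (1-p)\sigma\otimes\ketbra{1}$ retains weight $1-p>0$ on the flag value $1$, which has \emph{zero} reference weight in the limit. Hence the success branch gives $n D_\alpha(W_\rho\|W_{\id/2}) - D_\alpha(W_{\rho'}\|W_{\id/2^k}) - \frac{\alpha}{1-\alpha}\log p$ evaluated at $n=k$, which I would bound by noting that at $n=k$ the linear term equals $k[1-H_\alpha(W_\rho)]$ while the subtracted term is $k[1 - H_\alpha(W_{\rho'})] \geq k[1 - H_\alpha(W_\rho)]$ whenever the output is at least as ordered as the input per qubit — but more robustly, one simply observes that the mismatched-support fail branch forces $D_\alpha(W_{\rho_p}\|W_{\tau_{k,k}}) = +\infty$ for $\alpha > 1$ (a term $(1-p)^\alpha \cdot 0^{1-\alpha}$ with $1-\alpha<0$ diverges), so $\Delta D_\alpha \to -\infty < 0$; for $\alpha=\infty$ the analogous max-divergence argument applies. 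For \ref{fproperty_lim_infty}, taking $n\to\infty$ sends $x\to 0$: now the linear term $n[1-H_\alpha(W_\rho)] \to -\infty$ since $H_\alpha(W_\rho)>1$ by hypothesis, while the subtracted term $D_\alpha(W_{\rho_p}\|W_{\tau_{n,k}})$ grows only like $-\frac{\alpha}{1-\alpha}\log(2^{k-n}) = (k-n)\frac{\alpha}{1-\alpha}\cdot(\text{sign depending on }\alpha)$, i.e.\ at most \emph{linearly} in $n$ with a coefficient that, after comparing against the $1-H_\alpha(W_\rho)<0$ slope, is dominated; I would make this precise by showing $\lim_{n\to\infty}\Delta D_\alpha/n < 0$, which suffices. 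A short comparison of the two linear coefficients (the input-state slope $1-H_\alpha(W_\rho)$ versus the reference/acceptance slope coming from $2^{k-n}$) closes the argument, and I would flag that this is exactly where $H_\alpha(W_\rho)>1$ is used.
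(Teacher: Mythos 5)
Your approach matches the paper's: derive the explicit closed form of $\Delta D_\alpha$ in $n$ (the paper's \lemref{lemma:f__n_expanded}), read off concavity by direct second-derivative computation, obtain (ii) from the factor $(2^{n-k}-1)^{1-\alpha}$ diverging as $n\to k^+$ so that $D_\alpha(W_{\rho_p}\|W_{\tau_{n,k}})\to+\infty$, and obtain (iii) from the linear-in-$n$ growth rate $1-H_\alpha(W_\rho)<0$. The overall skeleton is right, and (ii) and (iii) are essentially correct in spirit.

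There are, however, a few slips worth flagging. First, $D_\alpha(W_\rho\|W_{\id/2})=2-H_\alpha(W_\rho)$ per qubit, because the phase space $\P_1$ has $4$ points, not $2$; so $n\,D_\alpha(W_\rho\|W_{\id/2})$ has slope $2-H_\alpha$, and the net slope $1-H_\alpha$ only emerges after subtracting the $(n-k)$ linear growth of $D_\alpha(W_{\rho_p}\|W_{\tau_{n,k}})$ (see the paper's \eqref{eq:DDD}). Relatedly, the asymptotic growth of the subtracted term is $+(n-k)$, not $-\tfrac{\alpha}{1-\alpha}\log(2^{k-n})$; the latter simplifies to $\tfrac{\alpha}{1-\alpha}(n-k)$, which is negative for $\alpha>1$ and would give the wrong comparison. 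You still reach the correct endpoint, but the bookkeeping needs to be fixed before it is a proof.

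The substantive gap is in (i): you correctly identify concavity as the main obstacle and commit to "direct differentiation of $g(x)$", but this is exactly the step the paper actually has to do the work on (it computes $\partial_n^2\Delta D_\alpha$ explicitly and shows the numerator is a sum of manifestly non-negative terms). Two cautions here: (a) if you differentiate in $x=2^{k-n}$ rather than $n$, concavity in $x$ does not transfer to concavity in $n$ without tracking the Jacobian of the monotone but nonlinear reparametrisation; (b) a generic appeal to convexity of Rényi divergences in the second argument does not apply directly because the curve $n\mapsto\tau_{n,k}$ is not an affine path in the simplex. So (i) should be regarded as stated-but-not-proved in your proposal; the rest is essentially the paper's argument with minor constant errors.
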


An immediate consequence of \lemref{lemma:properties_fn} is that $\Delta D_\alpha$ is either negative for all $n$, which implies no CSS code projection protocol can distil $\rho'$ from a supply of $\rho$ with acceptance probability $p$, or $\Delta D_\alpha$ has one or two roots located at $n^\alpha_{L}$ and $n^\alpha_U$. These roots therefore constitute lower and upper bounds on the code length $n$ of any CSS code projection that can carry out the desired distillation. We formalise these observations in the following Theorem:
\begin{theorem}
	\label{thrm:lower_bound_n_CSS_code}
	Let $\rho$ be a noisy rebit magic state. Any $n$-to-$k$ CSS code projection can distil out the $k$-rebit magic state $\rho'$ from a supply of $\rho$ at acceptance probability $p$  must have a code length $n$ such that
	\begin{align}
	n &\ge n_L^\alpha \coloneqq  \begin{cases}
		\inf_n \{ n : \Delta D_\alpha \ge 0  \} & \text{ if } \exists n: \Delta D_\alpha \geq 0,\\
		\infty & \text{ otherwise.}
	\end{cases} \label{eq:n_L_alpha}\\
	n &\le n_U^\alpha  \coloneqq \begin{cases}
		\sup_n \{ n : \Delta D_\alpha \ge 0  \} & \text{ if } \exists n: \Delta D_\alpha \geq 0,\\
		-\infty & \text{ otherwise.} 
	\end{cases}
	 \label{eq:n_U_alpha}
	\end{align}
	for all $\alpha \in \A$. Moreover, given any $\alpha$ such that ${H_\alpha[W_\rho]> 1}$, the second expression yields a \textit{finite} upper bound on $n$. 
\end{theorem}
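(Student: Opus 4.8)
The plan is to read off \thmref{thrm:lower_bound_n_CSS_code} from two ingredients already in place: the scalar inequality $\Delta D_\alpha \ge 0$ that \thmref{thrm:general_qubit_bound}, specialised through the canonical reference process of \eqref{eq:code_projection_reference_process}, imposes on every $n$-to-$k$ CSS code projection (this is \eqref{eq:Delta_D_alpha_defn}), together with the qualitative shape of the map $n \mapsto \Delta D_\alpha$ furnished by \lemref{lemma:properties_fn}. All that remains is to convert these into statements about the superlevel set $S_\alpha \coloneqq \{\, n \in [k,\infty] \, : \, \Delta D_\alpha \ge 0 \,\}$.

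First I would observe that, by the derivation preceding \eqref{eq:Delta_D_alpha_defn}, the existence of an $n$-to-$k$ CSS code projection carrying $\rho^{\otimes n} \mapsto p\rho'$ forces $\Delta D_\alpha \ge 0$ for every $\alpha \in \A$; equivalently, the code length of any such protocol must lie in $S_\alpha$ for each $\alpha$. Hence $n \ge \inf S_\alpha$ and $n \le \sup S_\alpha$, and when $S_\alpha \ne \emptyset$ these are exactly the $n_L^\alpha$ and $n_U^\alpha$ of the theorem. If instead $S_\alpha = \emptyset$, then no value of $n$ is consistent with $\Delta D_\alpha \ge 0$, so no such protocol exists at all; setting $n_L^\alpha = \infty$ and $n_U^\alpha = -\infty$ keeps the pair of inequalities $n \ge n_L^\alpha$, $n \le n_U^\alpha$ literally valid (and jointly unsatisfiable, which is the correct no-go conclusion). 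Optimising over $\alpha \in \A$ then yields the tightest bound. This settles the two displayed inequalities.

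For the ``moreover'' statement I would show that $S_\alpha$ is a bounded subinterval of $[k,\infty]$ as soon as $H_\alpha[W_\rho] > 1$. Concavity of $\Delta D_\alpha$ on $[k,\infty]$ (property (i) of \lemref{lemma:properties_fn}) makes the superlevel set $S_\alpha$ an interval. Since $\Delta D_\alpha$ is concave and real-valued, hence continuous, on $(k,\infty)$, the inequality $\lim_{n\to k^+}\Delta D_\alpha < 0$ (property (ii)) upgrades to $\Delta D_\alpha < 0$ on some right-neighbourhood $(k,k+\varepsilon)$, so $S_\alpha$ is bounded away from the left endpoint; and, under the hypothesis $H_\alpha[W_\rho] > 1$, $\lim_{n\to\infty}\Delta D_\alpha < 0$ (property (iii)) gives $\Delta D_\alpha < 0$ for all sufficiently large $n$, so $S_\alpha$ is bounded above. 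Therefore $\sup S_\alpha < \infty$ whenever $S_\alpha \ne \emptyset$, while $\sup S_\alpha$ is replaced by $-\infty$ when $S_\alpha = \emptyset$; in either case $n_U^\alpha < \infty$, so the second displayed inequality is a genuine finite restriction on the code length rather than the vacuous $n \le \infty$.

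I do not anticipate a real obstacle: the analytic content has been absorbed into \thmref{thrm:general_qubit_bound} and \lemref{lemma:properties_fn}, and the present argument is essentially bookkeeping. The two points that deserve care are the handling of the degenerate case $S_\alpha = \emptyset$ together with the $\pm\infty$ conventions for $n_L^\alpha$ and $n_U^\alpha$, and the small step from ``the limit of $\Delta D_\alpha$ at an endpoint is negative'' to ``$\Delta D_\alpha$ is strictly negative on a neighbourhood of that endpoint'', which relies on continuity of concave functions on $(k,\infty)$ and is what makes ``$S_\alpha$ is bounded away from its endpoints'' rigorous rather than merely suggested by the limits.
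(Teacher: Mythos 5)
Your proposal is correct and follows exactly the argument the paper sketches in the paragraph preceding the theorem: the constraint $\Delta D_\alpha \ge 0$ from \eqref{eq:Delta_D_alpha_defn} forces $n$ into the superlevel set, and \lemref{lemma:properties_fn} (concavity plus negativity at both endpoints when $H_\alpha[W_\rho]>1$) makes that set a bounded interval. The paper does not spell out a separate proof, treating the theorem as a formalisation of those observations, and the extra care you take with continuity of concave functions and the $S_\alpha = \emptyset$ conventions just makes explicit what the paper leaves implicit.
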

For sufficiently low $k$, these bounds can be computed numerically using basic root-finding methods. However, we also find analytic upper and lower bounds on $n$ in \secref{sec:extensions_n_m}. The values of $n^\alpha_L$ and $n^\alpha_U$ when $D_\alpha < 0$ for all $n$ were chosen to indicate that no $n$-to-$k$ CSS code projection can carry out the desired distillation. 

We emphasise that $n$ in \thmref{thrm:lower_bound_n_CSS_code} refers to the code length (related to the resource cost $C$ by $C= \frac{ n}{p k}$) in a single run of a distillation protocol, as opposed to the the asymptotic overhead. However, single-run $n$ still constitutes a useful metric for analysing the actual resource cost of a given stage of a protocol. Moreover, distillation costs are typically dominated by the final round of a multi-stage distillation protocol (see Ref.~\cite{campbell2017roads} and references contained therein), so we expect the above bounds to be particularly informative in this context.

In \figref{fig:lower_bounds_comparison}, we plot the tighest lower bound produced by \thmref{thrm:lower_bound_n_CSS_code} on the code length of $n$-to-1 CSS code projection protocols for Hadamard state distillation. We consider input magic states of the form ${\rho(\epsilon) \coloneqq (1-\epsilon) \ketbra{H} + \epsilon \frac{\id}{2}}$, which can be generally assumed irrespective of the particular error model. This is because any state $\rho$ can be converted into this canonical form by applying the pre-processing channel 
\begin{equation}
\E(\rho) \coloneqq \frac{1}{2} (\id \rho \id + H \rho H) ,
\end{equation}
i.e. twirling with respect to the Clifford subgroup generated by the single Hadamard gate. In all parameter regimes, our lower bound is observed to be tighter than mana~\footnote{Under the restriction to the rebit sub-theory of quantum computing subject to CSS circuits, the link between negativity and magic has been restored~\cite{delfosse2015}. It therefore follows that mana is a monotone under the class of distillation protocols considered here. In fact, the monotonicity of mana is equivalent to $\Delta D_\alpha \ge 0$ in the limit $\alpha \rightarrow 1$.}, and the generalized robustness bound in Theorem 13 of Ref.~\cite{Seddon2021Quantifying}. Furthermore, in the high $p$, high $\epsilon$ regime our lower bound gives tighter constraints than the projective robustness bound~\cite{Regula2022Probabilistic}. In particular, there is a cut-off input error rate $\epsilon \approx 0.12$ at which our lower bound shoots up to infinity because, for any input error greater than this cut-off, one can always find some $\alpha$ such that $\Delta D_\alpha < 0$ for all $n$, so no CSS code projection can carry out the desired distillation given a higher input error rate (see \secref{subsec:why_upper_bounds} for physical intuition on the origin of this behaviour). In the low $p$ regime, our \textit{upper bounds} are still able to give additional constraints on code length beyond those given by the projective robustness bound. In particular, \figref{fig:bound_comparison} puts together information from our upper bounds and the lower bound from projective robustness to show that no CSS code projection can achieve some target combinations of output error and acceptance probability beyond a cut-off input error rate.  

\subsubsection{Extension to non-qubit code projection}
\label{sec:extensions_n_m}

\begin{figure}[t]
	\centering
	\includegraphics[width=0.9\linewidth]{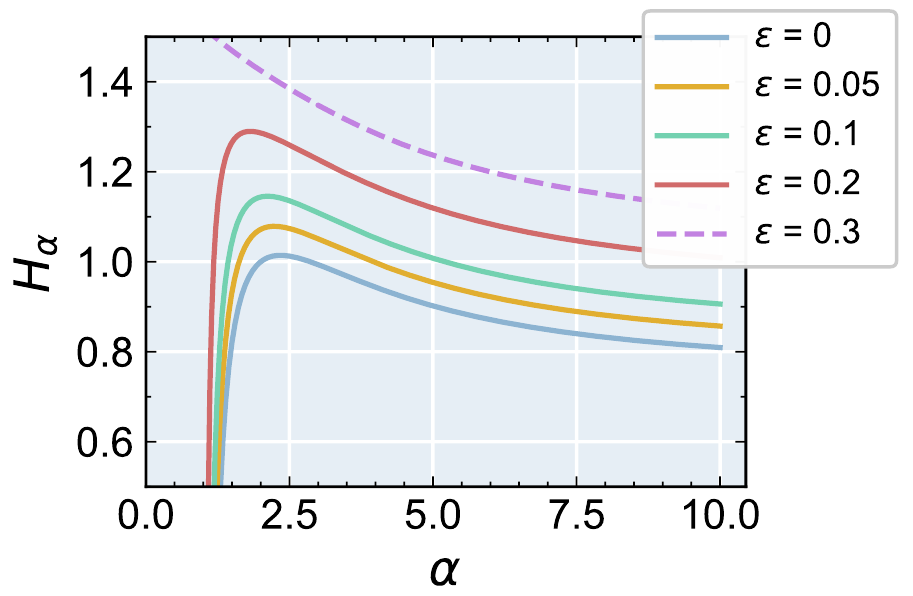} 
	\caption{ \textbf{(Wigner-R\'{e}nyi entropies \& magic distillation)} We plot the condition in \thmref{thrm:lower_bound_n_CSS_code} for the existence of finite upper bounds on $n$ in an $n$-to-1 CSS distillation for the qubit Hadamard state distillation, signified by region where $H_\alpha[W_{\rho(\epsilon)}]>1$. Even in the limit of zero input error $\epsilon=0$ we obtain a valid set of permissible $\alpha$, which implies that Hadamard state distillation under $n$-to-$1$ CSS code projection is ruled out in the asymptotic limit $n\rightarrow \infty$. We further highlight that the error rate $\epsilon =0.3$ (dashed curve) is outside of the region where $\rho(\epsilon)$ is magic ($0\le \epsilon < 1 - \frac{1}{\sqrt{2}}$), and therefore $W_{\rho(\epsilon)}$ is a proper probability distribution at $\epsilon = 0.3$, which is why $H_\alpha$ is only seen to satisfy standard monotonicity properties at this input error. We also highlight that the $\alpha \rightarrow 1$ divergence corresponds to a pole in $H_\alpha[W_\rho]$ for magic state $\rho$, and its residue is the mana of the state.\label{fig:entropy_conditions} }
\end{figure} 

Mathematically, the appearance of upper bounds on the code length of qubit CSS code projection protocols comes from the concavity of the objective function $\Delta D_\alpha $ in $n$. We now demonstrate that this feature is not peculiar to qubits, and in fact arises whenever stabilizer code projections on any quantum system are stochastic under a Wigner representation sufficiently similar to Gross's.

More precisely, we say that a Wigner representation $W$ of $n$ qudits with Hilbert space dimension $d$ is a \emph{generalised Gross's Wigner representation} if it represents each state $\rho$ as the function $W_\rho$ on a phase space $\mathbb{Z}^n_d \times \mathbb{Z}^n_d$,
\begin{align}
	W_\rho(\bmu) \coloneqq \frac{1}{d^n} \tr\left[A^\dagger_\bmu \rho\right],
\end{align} 
via phase point operators $\{A_\bmu\}$ satisfying \ref{A_main_text_property:qubit_phase_point_op_tensor_product}-\ref{A_main_text_property:identity}. For all stabilizer code projections on odd-dimensional systems, $W$ can simply be Gross's Wigner representation~\cite{koukoulekidis2022constraints}. For CSS code projections on qubits, $W$ can be the representation from \eqref{eq:W_rho}. We then have the following analytic upper and lower bounds on the resource cost of code projection protocols:

\begin{restatable}[Qudit code bounds]{theorem}{UpperBoundnk}
\label{thrm:upper_bound_n_m}
Consider the distillation of $k$ copies of a pure magic state $\psi$ from a supply of the noisy magic state $\rho$, where $\psi$ and $\rho$ are $d$-dimensional qudit states that are real-represented under a generalised Gross's Wigner representation $W$. Any stochastically-represented distillation protocol that projects onto the codespace of an $[[n,k]]$ stabilizer code and can use $n$ copies of $\rho$ to distil out a $k$-qudit state $\rho'$ with acceptance probability $p$ and output error $\delta \geq \norm{\rho' - \psi^{\otimes k} }_1$ must have a code length $n$ such that
\begin{align}
    n  \ge \frac{ k \left[ \log d - H_\alpha (W_{\psi})  \right]  - \frac{\alpha}{1-\alpha} \log\big( \frac{p}{1 + \delta d^{5/2} }\big)  }{\left[ \log d -H_\alpha (W_{\rho}) \right]},
    \label{eq:analytic_lower}
\end{align}
for all $\alpha \in \A$ for which $H_\alpha(W_\rho) < \log d$, and
\begin{align}
    n  \le \frac{ k \left[ H_\alpha (W_{\psi}) - \log d \right]  +\frac{\alpha}{1-\alpha} \log \big( \frac{p}{1 + \delta d^{5/2} }\big)  }{\left[ H_\alpha (W_{\rho}) - \log d \right]},
     \label{eq:analytic_upper}
\end{align}
for all $\alpha \in \A$ for which $H_\alpha(W_\rho) >\log d$.
\end{restatable}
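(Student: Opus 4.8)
The whole argument is a sequence of elementary manipulations that start from the single inequality $\Delta D_\alpha \ge 0$ and end by linearising its $n$-dependence so it can be solved for $n$. For qubits this inequality is exactly \eqref{eq:Delta_D_alpha_defn}: by \lemref{lemma:n_to_1_stoch} the trace-preserving code projection $\E$ of \eqref{eq:code_projection_reference_process} is a CSS circuit, hence stochastically represented, so \thmref{thrm:general_qubit_bound} applied with the reference process $(\id/2)^{\otimes n}\mapsto\tau_{n,k}$ gives $\Delta D_\alpha \ge 0$. For odd $d$ the same argument runs verbatim with Gross's Wigner representation in place of \eqref{eq:W_rho}: a stabilizer code projection is stochastically represented, the reference process is the maximally mixed state mapping to $\tau_{n,k}$ (now with acceptance probability $d^{k-n}$ since $P$ has rank $d^k$), and \thmref{thm:koukouD_alpha_monotone} yields $nD_\alpha(W_\rho\|W_{\id/d}) \ge D_\alpha(W_{\rho_p}\|W_{\tau_{n,k}})$; only properties \ref{A_main_text_property:qubit_phase_point_op_tensor_product}--\ref{A_main_text_property:identity} are used. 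Note $\A\subset(1,\infty]$, so throughout $\tfrac1{\alpha-1}>0$ and $w^\alpha=|w|^\alpha\ge0$ on quasiprobabilities, while the reference distributions are genuine probability distributions.

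\textbf{Step 2 (evaluate the output divergence and linearise in $n$).} Both $\rho_p$ and $\tau_{n,k}$ are block-diagonal with respect to the success/failure flag, so $D_\alpha(W_{\rho_p}\|W_{\tau_{n,k}})$ splits into a ``fail'' contribution -- manifestly nonnegative because $\alpha>1$ -- and a ``success'' contribution. Using the tensor factorisation $W_{\rho'\otimes\ketbra{0}}=W_{\rho'}\otimes W_{\ketbra{0}}$, the normalisation $\sum_\bmu W_{\ketbra{0}}(\bmu)=1$, and $W_{\id^{\otimes k}/d^k}\equiv d^{-2k}$, the success contribution is $p^\alpha d^{(\alpha-1)(n+k)}\|W_{\rho'}\|_\alpha^\alpha$ inside the logarithm. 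Discarding the nonnegative fail term gives the lower bound
\begin{equation*}
D_\alpha(W_{\rho_p}\|W_{\tau_{n,k}}) \ \ge\ \tfrac{\alpha}{\alpha-1}\log p + (n+k)\log d + \tfrac{\alpha}{\alpha-1}\log\|W_{\rho'}\|_\alpha .
\end{equation*}
Subtracting this from $nD_\alpha(W_\rho\|W_{\id/d})=n(2\log d-H_\alpha(W_\rho))$ turns $\Delta D_\alpha\ge0$ into an \emph{affine} upper bound on $\Delta D_\alpha$ with slope $\log d-H_\alpha(W_\rho)$ in the variable $n$; this collapse of the $n$-dependence to a single affine term is the mechanism behind the concavity and limiting behaviour recorded in \lemref{lemma:properties_fn}.

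\textbf{Step 3 (pass from $\rho'$ to the pure target) -- the main obstacle.} Write $\rho'=\psi^{\otimes k}+F$ with $\tr F=0$ and $\norm{F}_1\le\delta$. The Wigner map has controlled amplification: orthogonality of $\{A_\bmu/\sqrt{d^k}\}$ (property \ref{A_main_text_property:orthogonality}) gives $\|W_X\|_1\le d^{k/2}\|X\|_2$ via Cauchy--Schwarz over the $d^{2k}$ phase-space points and $\|W_X\|_2=d^{-k/2}\|X\|_2$; and $\|A_\bmu\|_\infty\le\|A_\bmu\|_2=d^{k/2}$ gives $\|W_X\|_\infty\le d^{-k/2}\|X\|_1$. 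Interpolating these $\ell_p$ estimates, together with the triangle inequality and the fact that the pure target has $\|W_{\psi^{\otimes k}}\|_2=d^{-k/2}$, should yield the continuity bound
\begin{equation*}
\|W_{\rho'}\|_\alpha \ \ge\ \frac{\|W_{\psi^{\otimes k}}\|_\alpha}{1+\delta\, d^{5/2}} .
\end{equation*}
Substituting this and then using $\|W_{\psi^{\otimes k}}\|_\alpha=\|W_\psi\|_\alpha^k$ together with $H_\alpha(W_\psi)=\tfrac{\alpha}{1-\alpha}\log\|W_\psi\|_\alpha$ rewrites the intercept from Step 2 purely in terms of $H_\alpha(W_\psi)$, $\log p$ and $\log(1+\delta d^{5/2})$, giving
\begin{equation*}
0 \ \le\ n\big[\log d-H_\alpha(W_\rho)\big] + \tfrac{\alpha}{1-\alpha}\log\!\Big(\tfrac{p}{1+\delta d^{5/2}}\Big) + k\big[H_\alpha(W_\psi)-\log d\big].
\end{equation*}
I expect this continuity estimate to be the hard part: $\ell_\alpha$ norms of quasiprobability distributions are not Lipschitz in trace distance in any naive way, and one must route the perturbation through $\ell_2$ -- where the Wigner map's amplification is exactly $d^{-k/2}$ and the pure target norm is exactly $d^{-k/2}$ -- so that the $k$-dependence cancels and the constant reduces to the stated $d^{5/2}$, uniformly over all $\alpha\in\A$.

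\textbf{Step 4 (solve for $n$).} If $H_\alpha(W_\rho)<\log d$ the bracket multiplying $n$ is positive and isolating $n$ gives \eqref{eq:analytic_lower}; if $H_\alpha(W_\rho)>\log d$ it is negative and dividing through (flipping the inequality) gives the finite upper bound \eqref{eq:analytic_upper}. The case $\alpha=\infty$ follows by continuity using density of $\A$ in $(1,\infty)$. Everything except Step 3 is bookkeeping once Step 1 is in place.
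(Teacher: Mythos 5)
Your Steps 1, 2 and 4 follow the paper's approach essentially line by line. Step 2 in particular is exactly the calculation in \lemref{lemma:general_mean} and the surrounding argument in \appref{appx:structure_necessary_condition}: using the block structure of $\rho_p$ and $\tau_{n,k}$ with respect to the success/fail flag to write
$Q_\alpha(W_{\rho_p}\|W_{\tau_{n,k}}) \ge p^\alpha\, d^{(n-k)(\alpha-1)} Q_\alpha(W_{\rho'}\|W_{\mu_k})$,
dropping the nonnegative failure term, and using $Q_\alpha(W_{\rho'}\|W_{\mu_k}) = d^{2k(\alpha-1)}\sum_\bmu W_{\rho'}(\bmu)^\alpha$, which collapses $\Delta D_\alpha\ge 0$ into the affine inequality $0 \le n[\log d - H_\alpha(W_\rho)] + H_\alpha(W_{\rho'}) - k\log d + \frac{\alpha}{1-\alpha}\log p$. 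Step 4 is the same sign-flip bookkeeping.

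Step 3 is where your proposal has a genuine gap. You correctly isolate the missing ingredient — an upper bound on $H_\alpha(W_{\rho'})$ in terms of $kH_\alpha(W_\psi)$ and $\delta$ — but the interpolation argument is only sketched, and as you yourself note it is not clear it closes: an $\ell_1$ and $\ell_2$ estimate on $W_{\rho'}-W_{\psi^{\otimes k}}$ do not by themselves control the $\ell_\alpha$-quasinorm ratio uniformly in $\alpha$, because the map $\bmw\mapsto\|\bmw\|_\alpha$ on a quasiprobability distribution is not a simple interpolant of those two endpoints. The paper does something more direct. \lemref{lem:trace_distance_wigner} converts the assumed trace-distance bound $\norm{\rho'-\psi^{\otimes k}}_1\le\delta$ into an $\ell_1$-bound $\norm{W_{\rho'}-W_{\psi^{\otimes k}}}_1\le\sqrt{D}\,\delta$ (with $D$ the global Hilbert-space dimension) via Schatten-$p$ monotonicity plus the orthogonality relation \ref{A_main_text_property:orthogonality} and Cauchy--Schwarz — essentially the two norm identities you wrote down. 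Then \lemref{lem:continuity_renyi_entropy} invokes a pre-existing continuity result for $\alpha$-R\'enyi entropies of quasiprobability distributions, Theorem 7(2) of Ref.~\cite{woods2019resource}, which bounds $|H_\alpha(\bmw)-H_\alpha(\bmw')|$ by $\frac{\alpha}{\alpha-1}\log[1+\norm{\bmw-\bmw'}_1 N]$ with $N$ the number of phase-space points, using monotonicity of $p$-norms on (quasi)distributions. That external lemma is the idea your sketch is missing.

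One more thing: your suspicion about the $k$-dependence is well founded, but not in the direction you hoped. If one applies \lemref{lem:trace_distance_wigner} and \lemref{lem:continuity_renyi_entropy} faithfully to the $k$-qudit states $\rho'$ and $\psi^{\otimes k}$ (global dimension $D=d^k$, phase-space cardinality $N=d^{2k}$), the combined factor is $d^{k/2}\cdot d^{2k}=d^{5k/2}$, not $d^{5/2}$. The cancellation you hoped for via routing through $\ell_2$ does not occur in the paper's own machinery. The $d^{5/2}$ in the statement is what those lemmas give only for $k=1$; for general $k$ the bound the proof actually establishes has $d^{5k/2}$ in place of $d^{5/2}$.
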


One might be concerned that the conditions ${H_\alpha(W_{\rho(\epsilon)})> \log d}$ given in Theorems~\ref{thrm:lower_bound_n_CSS_code} and \ref{thrm:upper_bound_n_m} for the existence of a finite upper bound on $n$ are never actually satisfied. This turns out not to be the case. For ${n\text{-to-}1}$ CSS code projection protocols for Hadamard distillation, there always exists a valid set of $\alpha$-values such that $H_\alpha[W_{\rho(\epsilon)}]>1$ for all $\epsilon$, so we always obtain a finite upper bound on $n$. This can be seen by upper bounding the R\'{e}nyi entropy of $W_{\rho(\epsilon)}$ as
\begin{align}
    H_\alpha\left[W_{\rho(\epsilon)}\right] & \ge  H_\alpha \left[W_{\rho(0)}\right] ,
\end{align}
and then examining \figref{fig:entropy_conditions}, which shows that there exists finite range of $\alpha$ such that $H_\alpha > 1$ at $\epsilon=0$. 

The trade-off relations given in \thmref{thrm:upper_bound_n_m} can be rearranged to bound other parameters such as the acceptance probability $p$. In \figref{fig:real_protocols}, we compare the tightest upper bound on $p$ produced by \thmref{thrm:upper_bound_n_m} for $n$-to-1 Hadamard distillation to those attained in existing protocols based on CSS codes given in Refs.~\cite{original_magic_states} and \cite{reichardt2005quantum}. We see that the acceptance probabilities of basic code projection protocols using the $[[7,1]]$ Steane and $[[23,1]]$ Golay codes in \figref{fig:real_protocols}(a) are orders of magnitude less than our upper bounds, suggesting that substantial room for improvement is not ruled out. Interestingly, in \figref{fig:real_protocols}(b) our upper bound is very close to the actual acceptance probability of the protocol based on the $[[15,1]]$ Reed-Muller code in Ref.~\cite{original_magic_states}, which we speculate may hint at something fundamental about the role of the intermediate Clifford corrections used in that protocol.

\begin{figure}[t]
	\centering
	\includegraphics[width=0.9\linewidth]{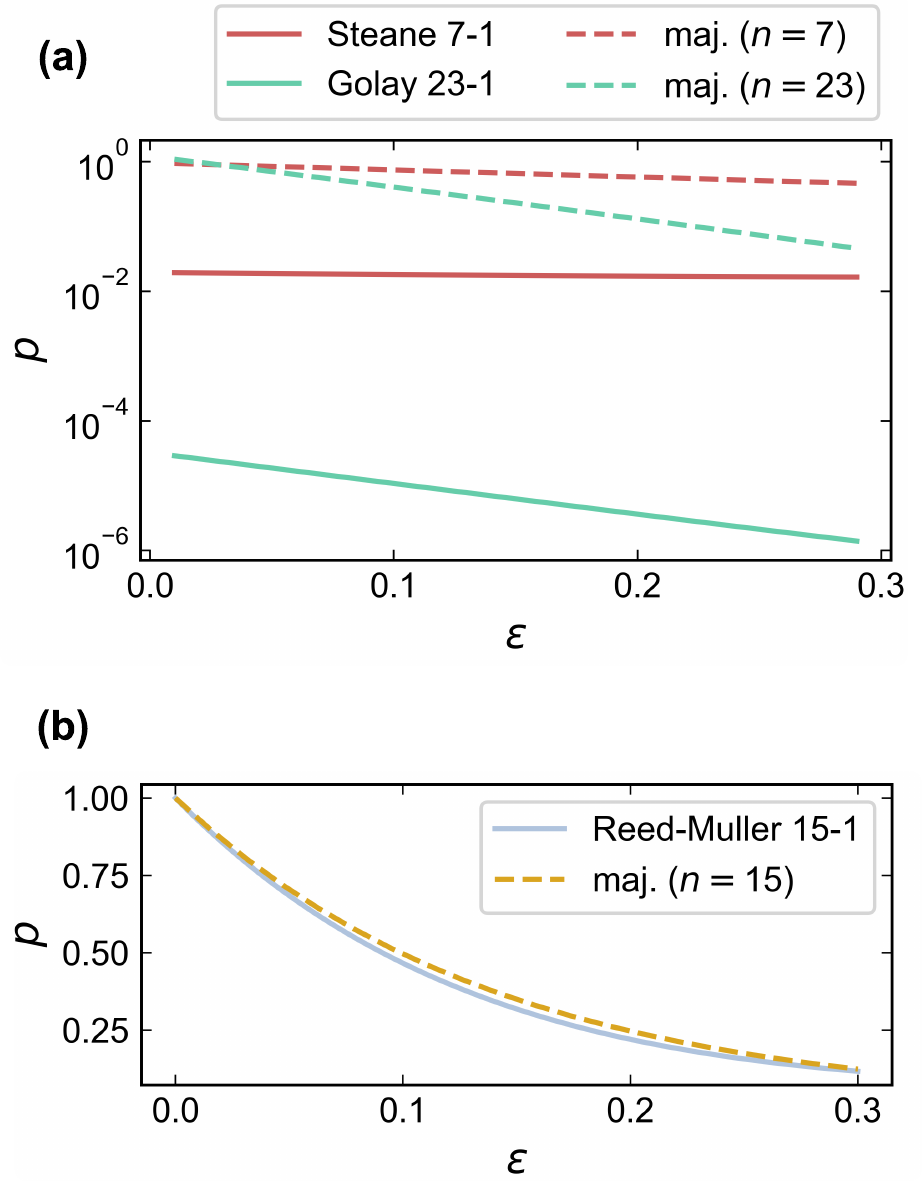}
	\caption{\textbf{(Explicit protocol comparison).} (a) We compare the majorization upper bounds (dashed lines) on the acceptance probability $p$ with which one can distil a noisy Hadamard state $(1-\epsilon) \ketbra{H} + \epsilon \frac{\id}{2}$ via an $n$-to-1 code projection against actual acceptance probabilities attained using the Steane code (purple) at $n=7$ and the Golay code (green) at $n=23$ (detailed in Ref.\cite{reichardt2005quantum}). Attained acceptance probabilities are orders of magnitude less than our upper bounds. (b) We plot the majorization upper bound (dashed line) on the acceptance probability $p$ of any 15-to-1 CSS code projection protocol with which one can distil the noisy magic state $(1-\epsilon) \ketbra{A} + \epsilon \frac{\id}{2}$. Interestingly, our bound is very close to the actual acceptance probability for the 15-to-1 protocol (blue line) given in~\cite{original_magic_states}, though we emphasise this latter protocol is \textit{not} a straightforward CSS code projection.}
		\label{fig:real_protocols}
\end{figure} 

\thmref{thrm:upper_bound_n_m} takes on a particularly simple form in the particular case of $n$-to-$1$ CSS code projection protocols for Hadamard state distillation. By evaluating the $\alpha=2$ condition explicitly, we find that, if there exists a CSS code projection that sends $n$ copies of $\rho(\epsilon)$ to a $k$-qubit state $\rho'$ with acceptance probability $p$ and output error $\delta \geq \norm{\rho' - \ketbra{H}^{\otimes k}}_1$, then $n$ is upper bounded as 
\begin{align}
     n\le n^* &\coloneqq 2 \log_{f(\epsilon)} \left[\frac{1+6\delta}{p}\right],
    \label{eq:n_star}
 \end{align}
 where the logarithm base is ${f(\epsilon) \coloneqq [1-\epsilon +\frac{\epsilon^2}{2}]^{-1}}$. This expression captures the fact that under a CSS code projection protocol, there is a fundamental trade-off between acceptance probability and output fidelity. For instance, \eqref{eq:n_star} shows that given a supply of noisy magic states ($\epsilon>0$), we cannot use CSS code projection to distil a perfect magic state ($\delta=0)$ with certainty ($p=1$), which was first shown in Ref.~\cite{Fang2020}. To further investigate this trade-off, in \figref{fig:dpi_vs_maj}(b) we plot the \textit{maximum achievable fidelity} with respect to the Hadamard state that can be achieved by an $n$-to-$1$ CSS code projection $\K$ via
\begin{equation}
    F_{\max}(\rho) = \max_{\K} \left\{\bra{H} \rho' \ket{H} \, : \, \K(\rho^{\otimes n}) \mapsto p \rho'\right\},
\end{equation}
where the maximization is performed over the set of all $n$-to-1 CSS code projection protocols.

\subsubsection{Why do we expect upper bounds?}
\label{subsec:why_upper_bounds}

Taking CSS circuits as our free operations, the appearance of upper bounds on $n$ might first seem to contradict a resource theory perspective, where we might expect $n+1$ copies of a noisy magic state to be at least as good as $n$ copies at distilling magic, since discarding subsystems is itself a CSS circuit. However, by specialising to stabilizer code projection protocols, we necessarily introduce a trade-off between $n$ and acceptance probability $p$, which we now show in a simple calculation. For any $n$-to-$k$ stabilizer code projection protocol for $d$-dimensional qudit distillation, the acceptance probability $p$ is given by how much of $n$ copies of the noisy input magic state $\rho$ projects onto the $d^k$-dimensional codespace spanned by the logical basis $\{\ket{j_L}\}_{j=0,\dots,d^k-1}$ of the code. Letting $\lambda_{\max}(\cdot)$ denote a state's largest eigenvalue, we immediately identify the following upper bound on $p$,
\begin{align}
    p = &\sum_{j=0}^{d^k-1} \bra{j_L}\rho^{\otimes n} \ket{j_L} \notag\\
    &\le d^k \lambda_{\max}(\rho^{\otimes n}) = d^k [\lambda_{\max}(\rho)]^n,
\end{align}
which falls monotonically towards $0$ as $n\rightarrow \infty$. At an intuitive level, the trade-off between $n$ and $p$ occurs because the codespaces of $[[n,k]]$ stabilizer codes remain the same size as we increase $n$, and so take up a vanishingly small part in the support of all the noisy input magic states used. Under the requirement that we have some threshold acceptance probability (below which the expected overhead would be too large), a corresponding upper bound on $n$ is then expected.  

\subsubsection{Comparison with the data-processing inequality (DPI)}
\label{subsec:DPI}

We have seen that stochastically represented CSS code projection protocols give rise to a set of upper bounds on $n$ (\thmref{thrm:upper_bound_n_m}). By comparing to the data-processing inequality (DPI), we see that although the existence of upper bounds is a general feature of code projection protocols, exploiting the stochasticity in the representation of CSS code projections gives strictly stronger bounds.

According to the DPI, if there exists a code projection (CSS or otherwise) that can distil out the $k$-qubit magic state $\rho'$ from $n$ copies of the noisy magic state $\rho$ with acceptance probability $p$, then 
\begin{equation}
	\Delta \tilde{D}_\alpha\coloneqq n\tilde{D}_\alpha\left(\rho\bigg|\bigg|\frac{\id}{2}\right) - \tilde{D}_\alpha\left(\rho_p||\tau_{n,k}\right) \ge 0,
\label{eq:DPI_constraint}
\end{equation}
for all $\alpha\in (1,\infty)$~\cite{beigi2013sandwiched}, where $\tilde{D}_\alpha (\rho ||\tau)$ is the sandwiched $\alpha$-R\'{e}nyi divergence \cite{mosonyi2014convexity,wilde2014strong} on the normalized quantum states $\rho$ and $\tau$, which is defined as
\begin{align}
    \tilde{D}_\alpha (\rho||\tau) \coloneqq \frac{1}{\alpha-1} \log \tr\left[\left(\tau^{ \frac{1-\alpha}{2\alpha}} \rho \tau^{ \frac{1-\alpha}{2\alpha}}\right)^\alpha \right].
\end{align}
We then have the following upper bound on $n$:
	\begin{align}
	n \le n^{DPI}_U  \coloneqq \min_{\alpha \in (1,\infty)} \max_n \{ n : \Delta \tilde{D}_\alpha \ge 0  \} .
	\end{align}
which is finite whenever $\alpha$ is such that $H_\alpha(\rho) >1$ (proof essentially identical to that of \thmref{thrm:lower_bound_n_CSS_code}). 

Given that we also obtain upper bounds on code length from simple data processing of quantum states, we should ask: does majorization give genuinely new constraints on magic state distillation beyond the DPI? Since our majorization conditions are a consequence of the stochastic representation of stabilizer circuits, while the DPI arises from the fact that all quantum channels are CPTP, this question may be loosely rephrased as asking whether stochasticity imposes any additional constraints beyond those imposed by CPTP on magic state distillation by stabilizer code projection. 

\figref{fig:dpi_vs_maj} allows us to answer this question in the affirmative, since the upper bound on code length due to majorization is observed to be stronger than that due to the DPI over a wide range of parameter regimes for CSS code projection protocols. In particular, extracting $n^{maj}_U \coloneqq \min_\alpha n^\alpha_U$ as the tightest bound due to majorization from Theorem \ref{thrm:lower_bound_n_CSS_code}, we find that, in the low acceptance probability $p$ and low input error $\epsilon$ regime of \figref{fig:dpi_vs_maj}(a), the difference in upper bounds $\Delta n_U \coloneqq n_U^{DPI} - n_U^{maj}$ (the amount by which majorization ``beats" DPI) is of the order $\Delta n_U  = O(10^4)$. We thus conclude that the constraints on CSS protocols stemming from majorization go beyond those from the DPI.

\begin{figure}[t]
	\centering
	\includegraphics[width=0.8\linewidth]{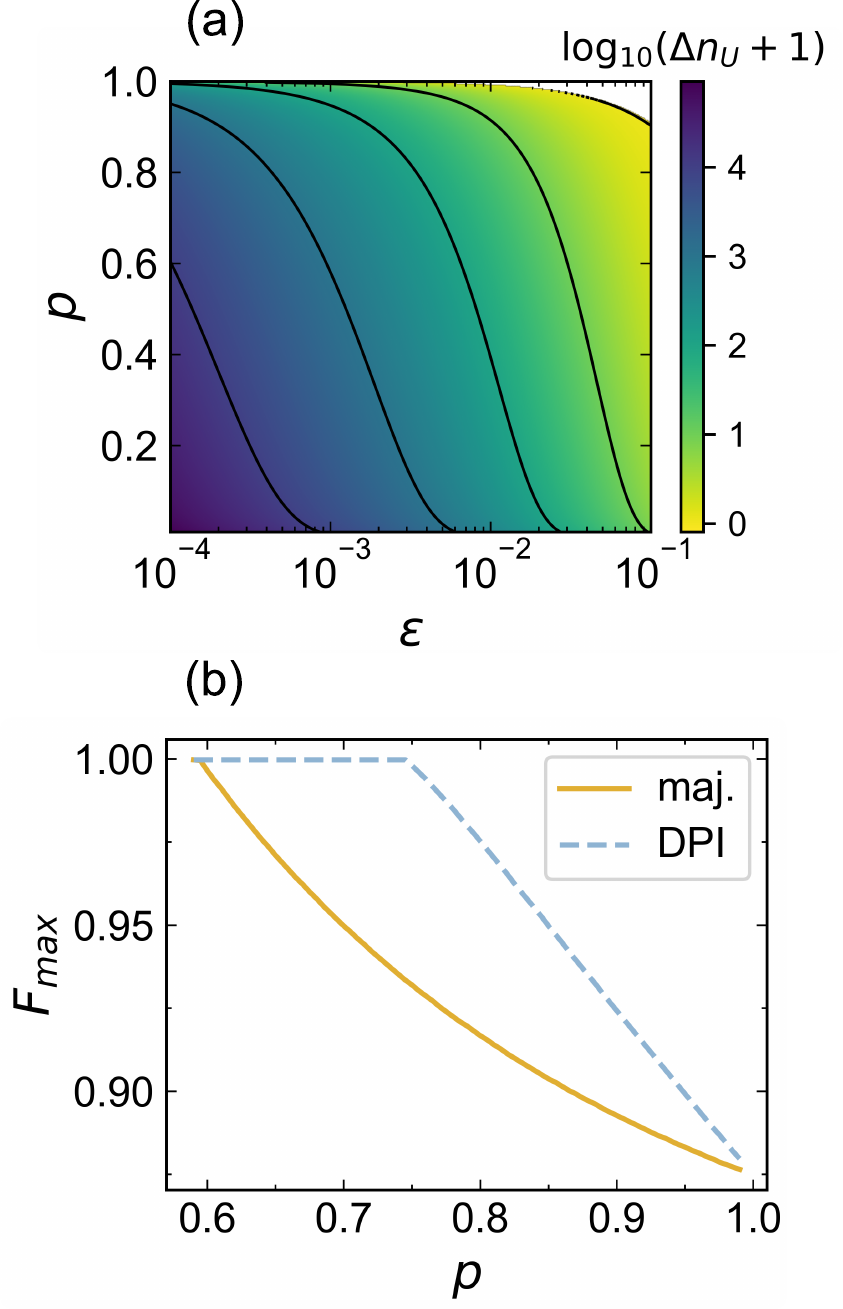}
	\caption{\textbf{(Majorization gives independent constraints over DPI).} \textbf{(a)} Shown is how (scaled) $\Delta n_U \coloneqq n^{DPI}_U - n^{maj}_U$ varies over all possible values of acceptance probability $p$ and a realistic range of input error $\epsilon$, with fixed $\delta = 10^{-9}$. Whenever we have $\log_{10}(\Delta n_U+1)>0$ means that upper bounds from majorization give tighter constraints than the DPI, reaching $\Delta n_U = O(10^4)$ in the low $p$, low $\epsilon$ regime. \textbf{(b)} We show the trade-off relation given by bounds on the maximum achievable fidelity $F_{\max}(\rho)$ vs. target acceptance probability $p$, under an $n$-$1$ CSS code projection, where $\rho=\frac{3}{4}\ketbra{H} + \frac{1}{8} \id$. For $p$ above a given threshold ($\approx 0.6$) no perfect distillation is theoretically possible, even for $n\rightarrow \infty$ copies of the input state. Majorization (maj.) is shown to give stronger constraints than that of DPI.
		\label{fig:dpi_vs_maj}}
\end{figure}

\section{Beyond rebit distillation}
\label{sec:beyond_rebits}

We have simplified our analysis thus far by restricting our attention to rebit magic states such as $\ket{H}$. However, many protocols such as the seminal 15-to-1 Bravyi-Kitaev protocol~\cite{original_magic_states} distil magic states such as $\ket{A} \propto   \ket{0} + e^{i \frac{\pi}{4}} \ket{1}$, which have complex density matrices in the computational basis. An argument can be made that since $\ket{A}$ is Clifford-equivalent to $\ket{H}$, this state should be considered equally resourceful for magic distillation. However, to address this concern more directly, we extend our majorization relations to states with complex Wigner representations. A discussion of how this can be achieved is given in \appref{appx:complex_maj}. The basic idea is that we can define valid $2d^2$-dimensional quasiprobability  distributions by forming the direct sum of the real and imaginary parts of the original distribution:
\begin{align}
 \w_\rho &\coloneqq \Re{W_\rho} \oplus \Im{W_\rho}, 
 \end{align}
while for our reference CSS states we can simply take:
 \begin{align}
 \r_\tau &\coloneqq \frac{1}{2}W_\tau \oplus W_\tau.
\end{align}
 It then follows that if there exists a completely CSS preserving operation $\E$ such that $\E(\rho) = \rho' $ and $\E(\tau) = \tau'$, then 
    \begin{align}
        D_\alpha (\w_\rho || \r_\tau) \ge  D_\alpha (\w_{\rho'} || \r_{\tau'}),
    \end{align}
 for all $\alpha \in \A$. Further study of the significance of these complex relative majorization conditions may be of foundational interest, but we leave this for future work.

\section{Discussion and Outlook}

We have shown that the statistical mechanical framework of Ref.~\cite{koukoulekidis2022constraints} can be extended to the experimentally significant case of qubit systems by focusing on the processing of magic states under CSS circuits -- i.e., the subset of stabilizer circuits where CSS states play the role of stabilizer states. To achieve this, we made use of a Wigner representation first introduced in Ref.~\cite{Catani_2017} wherein completely CSS-preserving channels correspond to stochastic transformations on phase space. This set of channels include CSS circuits, which are sufficient for universal quantum computing~\cite{delfosse2015,Catani2018CCSSP_universal} and consist precisely of the gateset that can be performed fault-tolerantly on surface code constructions~\cite{Kitaev_Surface_Code2003}.

Within this framework, we showed that relative majorization can be used to encode particular properties of an important class of distillation protocols that project onto CSS codes, in terms of which all protocols carried out by CSS circuits can be decomposed. We established general entropic constraints on such protocols in terms of upper and lower bounds on the code length $n$. 

In the context of achieving full fault-tolerance, a natural extension of this work would be to generalize our results to more sophisticated protocols. For instance, we might ask how the use of $m$ intermediate Clifford corrections in between the measurements of stabilizer generators might affect these fundamental constraints. One would expect to be able to obtain more refined bounds as a function of $m$. Moreover, while many protocols are based on CSS codes in part due to their relative ease of construction via tri-orthogonal matrices~\cite{BravyiHaah2012overhead}, from an operational perspective it would be of interest to see whether we can extend to the complete set of stabilizer operations on qubit systems.

We have also obtained a set of monotones $\{\Lambda_\alpha\}$ for completely CSS-preserving magic distillation, each of which forms a convex optimization problem. We speculate that an analogous monotone can be constructed for any resource theory for which the free operations are a subset of operations that completely preserve Wigner positivity. From the perspective of quantum optics experiments, wherein Gaussian operations and probabilistic randomness are readily available, it may be of interest  to consider the case of continuous variable systems under the set of Gaussian operations and statistical mixtures~\cite{Albarelli2018nonGaussianity}. Since the individual $\alpha$-R\'{e}nyi divergences on quasiprobability distributions were seen in~\secref{subsec:DPI} to typically produce stronger constraints than the corresponding constraints given by $\alpha$-R\'{e}nyi divergences on quantum states, it would be interesting to see how well these quasidistribution-based monotones perform relative to known state-based counterparts.

On a technical note regarding majorization theory, we point to two interesting directions for further study. 
Firstly, complex majorization constraints arise naturally when we extend our setup from rebit to all qubit states, where Wigner representations can become complex due to the non-Hermiticity of the operator basis $\{A_{\bm{u}}\}$.
We expect such constraints to take the form of a duplet of constraints applying separately to the real parts and imaginary parts of the Wigner function. In the context of non-Hermitian quantum mechanics~\cite{moiseyev2011}, results on complex majorization would also benefit theories that require an ordering between Hamiltonian eigenvalues, such as quantum thermodynamics. Secondly, the Wigner representation of Ref.~\cite{Catani_2017} recovers the covariance over symplectic affine transformations on qubit phase spaces, a property shared by Gross's Wigner function on odd-dimensional systems. This added structure on the phase space was ignored by our analysis, but could be utilised to tighten the obtained bounds in future work. In particular, as explained in the discussion of Ref.~\cite{koukoulekidis2022constraints}, the stochastic majorization used in our analysis is only a special case of $G$-majorization, where $G$ can be taken as a subgroup of the stochastic group such as the symplectic group. It can then be shown~\cite{giovagnoli_cyclic_1996,giovagnoli_1985,steerneman_1990} that one should expect to obtain a set of finite lower bound constraints on distillation, which will be tighter than stochastic majorization constraints.

\section*{Acknowledgements}
RA and NK are supported by the EPSRC Centre for Doctoral Training in Controlled Quantum Dynamics. SGC is supported
by the Bell Burnell Graduate Scholarship Fund. DJ is supported by the Royal Society and a University Academic Fellowship.

\appendix

\renewcommand{\appendixname}{APPENDIX}

\section{\uppercase{Wigner representation}}
\label{appx:wigner_rep}

For any $n$-qubit state $\rho$, we define the following complex-valued function $W_\rho : \P_n \mapsto \mathbb{C}$ as
\begin{align}
W_\rho(\bmu) \coloneqq \frac{1}{2^n}\tr[A^\dagger_\bmu \rho],
\end{align}
where $\{ A_\bmu \}$ are the set of $2^{2n}$ \textit{phase point operators} on $n$-qubits, which are defined as follows 
\begin{align}
	A_{\bm{0}} \coloneqq \frac{1}{2^n} \sum_{\bmu \in \P_n} D_\bmu, \quad A_{\bmu} \coloneqq D_\bmu A_{\bm{0}} D^\dagger_\bmu.
	\label{eq:A_phase_point}
\end{align}
As a consequence of \eqref{eq:displacement_reorder}, these phase-point operators can alternatively be expressed as 
\begin{align}
    A_\bmu = \frac{1}{2^n} \sum_{\bmv \in \P_n} (-1)^{[\bmu,\bmv]} D_\bmv,
\end{align}
which further reveals that every $A_\bmu$ is real in the computational basis.

Despite being complex-valued, $W_\rho$ transforms covariantly under the displacement operators - informally speaking, $\rho$ is shifted by the displacement operators around phase space - just like Gross' representation in odd dimensions. Concretely, we consider the Wigner representation of $D_\bmv \rho D_\bmv^\dagger$ for an arbitrary phase space displacement $\bmv$, which is
\begin{align}
    W_{D_\bmv \rho D^\dagger_\bmv}(\bmu) &= \frac{1}{2^n} \tr[A^\dagger_\bmu D_\bmv \rho D_\bmv^\dagger] = \frac{1}{2^n} \tr[D_\bmv^\dagger A^\dagger_\bmu D_\bmv \rho ] \notag \\ 
    &= \frac{1}{2^n} \tr\left[\sum_{\bma \in \P_n} (-1)^{[\bmu,\bma]} (D_\bma D_\bmv)^\dagger D_\bmv \rho\right]. \label{eq::local_1}
\end{align} 

Inserting the commutation relation for the displacement operators in \eqref{eq:displacement_reorder} into \eqref{eq::local_1}, we obtain
\begin{align}
	W_{D_\bmv \rho D^\dagger_\bmv}(\bmu)& = \frac{1}{2^n} \tr\left[\left(\sum_{\bma \in \P_n} (-1)^{[\bmu + \bmv,\bma]} D^\dagger_\bma\right) \rho\right]\notag \\ &= W_\rho(\bmu + \bmv),
\end{align}
which confirms that $W_\rho$ transforms covariantly under the action of the displacement operators.

\subsection{Properties of qubit phase point operators and Wigner function}
\label{appx:A_and_W_rho_props}

We first establish that the phase point operators of a joint system are simply tensor products of phase point operators on its subsystems:
\begin{enumerate}[label=\normalfont \textbf{(A\arabic*)}]	
	\item \textit{(Factorization).} \label{A_property:qubit_phase_point_op_tensor_product} On a bipartite system with subsystems $X$ and $Y$, $A_{\bmu_X \oplus \bmu_Y} = A_{\bmu_X} \otimes A_{\bmu_Y}$. 
\end{enumerate}
\begin{proof}
	From the definition of $D_\bmu$ in \eqref{eq:Dx}, it is clear that 
	\begin{align}
	D_\bmu = D_{\bmu_X} \otimes D_{\bmu_Y}.
	\end{align}
	Let $n_X$ and $n_Y$ be the numbers of qubits in subsystems $X$ and $Y$ respectively. Then the zero phase point operator on the bipartite system, $A_{\bm{0}}$, is
	\begin{align}
	A_{\bm{0}} &\coloneqq \frac{1}{2^{n_X+n_Y}} \sum_{\bmu \in \P_{XY}} D_\bmu \notag \\ 
	&= \frac{1}{2^{n_X+n_Y}} \sum_{\bmu_X \in \P_X, \bmu_Y \in \P_Y} D_{\bmu_X \oplus \bmu_Y} \notag \\ 
	&= \frac{1}{2^{n_X+n_Y}} \sum_{\bmu_X \in \P_X} \sum_{\bmu_Y \in \P_Y} D_{\bmu_X} \otimes D_{\bmu_Y} \notag \\ 
	&= A_{\bm{0}_X} \otimes A_{\bm{0}_Y}, 
	\end{align}
	which in turn implies that any phase point operator $A_\bmu \coloneqq A_{\bmu_X \oplus \bmu_Y}$ for some $\bmu_X \in \P_X$ and $\bmu_Y \in \P_Y$ is
	\begin{align}
	A_{\bmu} &\coloneqq  A_{\bmu_X \oplus \bmu_Y} =  D_{\bmu_X \oplus \bmu_Y} A_{\bm{0}} D^\dagger_{\bmu_X \oplus \bmu_Y} \notag \\ &= \left(D_{\bmu_X} A_{\bm{0}_X} D_{\bmu_X}^\dagger\right) \otimes \left(D_{\bmu_Y} A_{\bm{0}_Y} D_{\bmu_Y}^\dagger\right) \notag \\ &= A_{\bmu_X} \otimes A_{\bmu_Y} ,
	\end{align} as claimed.
\end{proof}

Property \ref{A_property:qubit_phase_point_op_tensor_product} enables us to break down any $n$-qubit phase-point operator $A_\bmu$ as a tensor product of single-qubit phase-point operators,
\begin{align}
A_\bmu = \bigotimes_{i=1}^n A_{\bmu_j}, \quad \bmu = \bigoplus_{i=1}^n \bmu_j, 
\label{eq:A_u_breakdown}
\end{align}
where $\bmu_j \in \mathbb{Z}_2 \times \mathbb{Z}_2$ is a co-ordinate in the phase space of the $j$th qubit \emph{only}. It is therefore instructive to calculate the single-qubit phase point operators, which are
\begin{align}
A_{0,0} &= \frac{1}{2}(\id + X + Z + iY),\notag \\ A_{0,1} &= \frac{1}{2}(\id - X + Z - iY), \notag \\ A_{1,0} &= \frac{1}{2}(\id + X - Z - iY), \notag \\ A_{1,1} &= \frac{1}{2}(\id - X - Z + iY).
\label{eq:single_qubit_A}
\end{align}

We next demonstrate how the explicit forms of single-qubit phase point operators can be leveraged via \eqref{eq:A_u_breakdown} to prove two further properties for general $n$-qubit phase point operators. In particular, we show how distinct $n$-qubit phase point operators are orthogonal under the Hilbert-Schmidt inner product:
\begin{enumerate}[label=\normalfont \textbf{(A\arabic*)}]
	\setcounter{enumi}{1}
	\item \textit{(Orthogonality).} \label{A_property:orthogonality} Let $A_\bmu$ and $A_\bmv$ be two $n$-qubit phase point operators. Then $\tr[A^\dagger_{\bmu} A_{\bmv}] = 2^n \delta_{\bmu,\bmv}$.
\end{enumerate}	
\begin{proof}
	Let us first decompose $\bmu$ and $\bmv$ as $\bmu = \bigoplus_{i=1}^n \bmu_j$ and ${\bmv = \bigoplus_{i=1}^n \bmv_j}$, where $\bmu_j$ and $\bmv_j$ are phase point co-ordinates on the $j$th qubit only. By \eqref{eq:single_qubit_A}, 
	\begin{align}
	\tr[A^\dagger_{\bmu_j} A_{\bmv_j}] = 2 \delta_{\bmu_j,\bmv_j}.
	\end{align}
	Therefore,
	\begin{align}
	\label{eq:A_orthogonal}
	\tr[A^\dagger_\bmv A_\bmu] = \prod_{j=1}^n \tr[A^\dagger_{\bmu_j} A_{\bmv_j} ]   &= \prod_{j=1}^n 2\delta_{\bmv_j,\bmu_j} \notag\\ &= 2^n \delta_{\bmu,\bmv}.
	\end{align} 
	as claimed.
\end{proof}

There are $\abs{P_n} = \abs{\mathbb{Z}_2^n \times \mathbb{Z}_2^n} = 4^n$ phase point operators on $n$-qubits. Property \ref{A_property:orthogonality} thus implies $\{A_\bmu\}_{\bmu \in \P_n}$ forms an orthogonal complex basis for the complex vector space  $\mathbb{C}^{2^n \times 2^n}$ of $2n \times 2n$ complex matrices under the Hilbert-Schmidt inner product. Therefore, $W_\rho$ is an \emph{informationally complete} representation of $n$-qubit states. More precisely, any $n$-qubit quantum state $\rho$ can be uniquely decomposed as
	\begin{equation}
	\rho = \sum_{\bmu} W_\rho(\bmu) A_{\bmu},
	\end{equation}
	where $W_\rho (\bmu) \coloneqq \frac{1}{2^n} \tr[A_\bmu^\dagger \rho]$ is a complex function on $\P_n$. 

Furthermore, every phase point operator has trace 1:
\begin{enumerate}[label=\normalfont \textbf{(A\arabic*)}]
	\setcounter{enumi}{2}
	\item \textit{(Unit trace).} \label{A_property:unit_trace} Let $A_\bmu$ be any $n$-qubit phase point operator. Then we have $\tr[A_{\bmu}] = 1$.
\end{enumerate}
\begin{proof}
	Let us first decompose $\bmu$ as $\bmu = \bigoplus_{j=1}^n \bmu_j$, where $\bmu_j$ is a point in the phase space of the $j$th qubit. We see that $\tr[A_{\bmu_j}] = 1$ from \eqref{eq:single_qubit_A}. Therefore,
	\begin{align}
	\tr[A_\bmu] = \prod_{j=1}^n \tr[A_{\bmu_j}] = \prod_{j=1}^n 1 = 1,
	\end{align} 
 which completes the proof.\end{proof}

Property \ref{A_property:unit_trace} implies that all $n$-qubit functions are \emph{normalized}. Since any $n$-qubit state $\rho$ has trace 1,
	\begin{align}
	1 = \tr[\rho] &= \tr\left[\sum_{\bmu \in \P_n} W_\rho(\bmu) A_\bmu\right] \notag \\
 &= \sum_{\bmu \in \P_n} W_\rho(\bmu) \tr[A_\bmu] \notag \\ &= \sum_{\bmu \in \P_n} W_\rho(\bmu),
	\end{align}
	where the last equality is established by the unit trace of phase point operators.

We will also find it useful to identify the following property of phase point operators:
\begin{enumerate}[label=\normalfont \textbf{(A\arabic*)}]
	\setcounter{enumi}{3}
	\item \textit{(Completeness Relation).} \label{A_property:identity}
	\begin{align}
	\sum_{\bmu \in \P_n} A_\bmu = 2^n \id_n
	\end{align}
\end{enumerate}
\begin{proof}
	Adopting the decomposition of each $A_\bmu$ in \eqref{eq:A_u_breakdown}, we see that
	\begin{align}
	\sum_{\bmu \in \P_n} A_\bmu &= \sum_{\bmu_1 \in \P_1},\dots,\sum_{\bmu_n \in \P_1} \left(\bigotimes_{j=1}^n A_{\bmu_j}\right) \notag \\ &= \bigotimes_{j=1}^n \left(\sum_{\bmu_j \in \P_1} A_{\bmu_j} \right).
	\end{align}
	Using the explicit forms of single-qubit phase-point operators in \eqref{eq:single_qubit_A}, we calculate that
	\begin{align}
	\sum_{\bmv \in \P_1} A_\bmv = 2 \id,
	\end{align}
	from which the result immediately follows.
\end{proof}

\subsection{Wigner representation for rebits}
\label{appx:rebit_rep}
Any $n$-qubit state $\rho$ can be decomposed as
\begin{equation}
\rho = \left[\frac{1}{2} \left(\rho + \rho^T\right)\right] + i\left[\frac{-i}{2}\left(\rho - \rho^T\right)\right],
\end{equation}
where the transposition is taken with respect to the computational basis. Because $\rho^* = \rho^T$ in any basis, we can identify 
\begin{align}
\rho^{(0)} &\coloneqq \frac{1}{2} (\rho + \rho^T) = \mathfrak{Re}(\rho), \\
\rho^{(1)} &= i\left[\frac{-i}{2}(\rho - \rho^T)\right] = \mathfrak{Im}(\rho), \label{eq:rho_real_im_defn}
\end{align}
i.e., $\rho^{(0)}$ and $\rho^{(1)}$ are respectively the real and imaginary components of the density matrix of $\rho$ in the computational basis. 

We first prove Lemma \ref{lemma:rho_W_real_im_correspondence}, which shows there is a direct correspondence between the real/imaginary components of a state's Wigner representation and those of its density matrix in the computational basis.
\rhoWrealim*
\begin{proof}
	Adopting the identification $\rho^{(0)} = \mathfrak{Re}(\rho)$ and ${\rho^{(1)} = \mathfrak{Im}(\rho)}$, we can then decompose $W_\rho(\bmu)$ as
	\begin{align}
	W_\rho(\bmu) = \frac{1}{2^n}\tr\left[A^\dagger_\bmu \rho^{(0)}\right] + i \frac{1}{2^n} \tr\left[A^\dagger_\bmu \rho^{(1)}\right].
	\end{align}
	Since $A_\bmu$ is real, and both $\rho^{(0)}$ and $\rho^{(1)}$ are real by construction, we conclude that $\tr\left[A^\dagger_\bmu \rho^{(0)}\right]$ and $\tr\left[A^\dagger_\bmu \rho^{(1)}\right]$ are both real for all $\bmu \in \P_n$. Therefore,
	\begin{align}
		&\mathfrak{Re}(W_\rho(\bmu)) = \frac{1}{2^n}\tr\left[A^\dagger_\bmu \rho^{(0)}\right] = W_{\rho^{(0)}}(\bmu)\\ &\mathfrak{Im}(W_\rho(\bmu)) = \frac{1}{2^n}\tr\left[A^\dagger_\bmu \rho^{(1)}\right] = W_{\rho^{(1)}}(\bmu)
	\end{align} 
\end{proof}

An $n$-rebit Wigner representation $W^{(0)}_\rho$ was introduced by Delfosse et al.~\cite{delfosse2015}, which is defined as
\begin{align}
W^{(0)}_\rho(\bmu) \coloneqq \frac{1}{2^n} \tr\left[A^{(0)\dagger}_\bmu \rho\right], \label{eq:W_delfosse}
\end{align} 
for all $\bmu \in \P_n$, where
\begin{align}
A_{\bmu}^{(0)} &\coloneqq  \frac{1}{2^n} \sum_{\bma \in \P^0_n} (-1)^{[\bmu,\bma]} D_{\bma} \label{eq:ax_rebit}
\end{align}
for the subset of phase space
\begin{align}	
\P^0_n \coloneqq \left\{ \bmu \, : \, \bmu_x \cdot \bmu_z = 0  \right\}. \label{eq:phase_space_rebit}
\end{align}

The definition of qubit displacement operators in \eqref{eq:Dx} implies the relations
\begin{align}
	&\bmu_x \cdot \bmu_z = 0 \Leftrightarrow D^\dagger_\bmu = D^T_\bmu = D_\bmu \label{eq:phase_space_symmetric}  \\
	&\bmu_x \cdot \bmu_z = 1 \Leftrightarrow D^\dagger_\bmu = D^T_\bmu = -D_\bmu, \label{eq:phase_space_antisymmetric} 
\end{align}
We then see from \eqref{eq:phase_space_symmetric} that $\P^{(0)}_n$ is the subset of phase space co-ordinates for \emph{real symmetric} displacement operators. The difference between $W_\rho$ and $W^{(0)}_\rho$ thus comes down to the fact that $A_\bmu$ sums over all displacement operators whereas $A^{(0)}_\bmu$ only sums over real symmetric ones, which implies $A^{(0)}_\bmu$ is itself real symmetric. 

We further observe from \eqref{eq:phase_space_antisymmetric} that the complement of $\P^{(0)}_n$ in $\P_n$,
\begin{align}
\P^{(1)}_n  &\coloneqq   \{ \bmu \, : \, \bmu_x \cdot \bmu_z = 1  \},
\end{align}
is the subset of phase space co-ordinates for \emph{real anti-symmetric} displacement operators. Paralleling \eqref{eq:ax_rebit}, we then introduce the set of real anti-symmetric phase point operators
\begin{align}
A^{(1)}_{\bmu} \coloneqq \frac{1}{2^n} \sum_{\bma \in \P^{(1)}_n} (-1)^{[\bmu, \bma]} D_{\bma} \text{ for any } \bmu \in \P_n,
\label{eq:A_IH}
\end{align}

By \eqref{eq:ax_rebit} and \eqref{eq:A_IH}, each $A_\bmu$ splits up as
\begin{align}
A_\bmu = A^{(0)}_\bmu + A^{(1)}_\bmu.
\end{align} 
We can correspondingly split up the Wigner representation of $\rho$ as 
\begin{align}
W_\rho(\bmu) &= \frac{1}{2^n}\tr[A^\dagger_\bmu \rho ]
=\frac{1}{2^n}\tr\left[\left(A^{(0)}_{\bmu} + A^{(1)}_\bmu \right)^\dagger\rho \right] \notag \\
&= \frac{1}{2^n}\tr\left[A^{(0)\dagger}_{\bmu} \rho \right] +  \frac{1}{2^n}\tr\left[A^{(1)\dagger}_{\bmu}\rho \right] \notag \\ &\eqqcolon W_{\rho}^{(0)}(\bmu) + W_{\rho}^{(1)}(\bmu),
\end{align}
where we have defined
\begin{equation}
W_{\rho}^{(1)}(\bmx) :=\frac{1}{2^n}\tr\left[A^{(1)\dagger}_{\bm{x}}\rho \right]=-\frac{1}{2^n}\tr\left[A^{(1)}_{\bm{x}}\rho \right].
\end{equation}

We can then prove
\begin{lemma}
	\label{lemma:ours_delfosse_relationship}
	Given any $n$-qubit state $\rho$, 
	\begin{align}
		&\mathfrak{Re}(W_\rho(\bmu)) = W^{(0)}_\rho(\bmu) \\
		&i\mathfrak{Im}(W_\rho(\bmu)) = W^{(1)}_\rho(\bmu)
	\end{align}
\end{lemma}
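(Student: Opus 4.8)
The plan is to assemble the statement from two facts already established above: the additive splitting $W_\rho(\bmu) = W^{(0)}_\rho(\bmu) + W^{(1)}_\rho(\bmu)$, and the symmetry type of the two families of phase point operators. From \eqref{eq:ax_rebit}, $A^{(0)}_\bmu$ is a real linear combination (coefficients $\pm 1$) of the displacement operators indexed by $\P^{(0)}_n$, which are real and symmetric by \eqref{eq:phase_space_symmetric}; hence $A^{(0)}_\bmu$ is real and symmetric, and therefore Hermitian. Likewise, from \eqref{eq:A_IH}, $A^{(1)}_\bmu$ is a real combination of the operators indexed by $\P^{(1)}_n$, which are real and anti-symmetric by \eqref{eq:phase_space_antisymmetric}; hence $A^{(1)}_\bmu$ is real and anti-symmetric, and therefore anti-Hermitian. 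Using $A^{(0)\dagger}_\bmu = A^{(0)}_\bmu$ and $A^{(1)\dagger}_\bmu = -A^{(1)}_\bmu$, I would first rewrite $W^{(0)}_\rho(\bmu) = \tfrac{1}{2^n}\tr[A^{(0)}_\bmu\rho]$ and $W^{(1)}_\rho(\bmu) = -\tfrac{1}{2^n}\tr[A^{(1)}_\bmu\rho]$.

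The single substantive step is then a reality check on these two traces. Since $A^{(0)}_\bmu$ and $\rho$ are both Hermitian, $\tr[A^{(0)}_\bmu\rho]$ is real, so $W^{(0)}_\rho(\bmu) \in \mathbb{R}$. Since $iA^{(1)}_\bmu$ is Hermitian, $\tr[(iA^{(1)}_\bmu)\rho]$ is real, whence $\tr[A^{(1)}_\bmu\rho]$ is purely imaginary and $W^{(1)}_\rho(\bmu) \in i\mathbb{R}$. Substituting into the splitting $W_\rho(\bmu) = W^{(0)}_\rho(\bmu) + W^{(1)}_\rho(\bmu)$ expresses $W_\rho(\bmu)$ as a real number plus a purely imaginary number, and uniqueness of the Cartesian decomposition of a complex number forces $\mathfrak{Re}(W_\rho(\bmu)) = W^{(0)}_\rho(\bmu)$ and $i\,\mathfrak{Im}(W_\rho(\bmu)) = W^{(1)}_\rho(\bmu)$, which is the claim.

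I do not anticipate any real obstacle here: all the content is in correctly tracking which operators are real and symmetric versus real and anti-symmetric, and that bookkeeping is already carried out in the discussion surrounding \eqref{eq:ax_rebit}--\eqref{eq:A_IH}. As a consistency check one can cross-validate through \lemref{lemma:rho_W_real_im_correspondence}: the trace of a real symmetric matrix against a real anti-symmetric matrix vanishes, so $\tr[A^{(0)}_\bmu\,\mathfrak{Im}(\rho)] = 0$, giving $W^{(0)}_\rho = W^{(0)}_{\mathfrak{Re}(\rho)}$, which coincides with $W_{\mathfrak{Re}(\rho)} = \mathfrak{Re}(W_\rho)$ on the rebit state $\mathfrak{Re}(\rho)$; but the direct reality argument above is the shorter route and the one I would write up.
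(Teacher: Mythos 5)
Your proof is correct and follows the same route as the paper: both hinge on $A^{(0)}_\bmu$ being Hermitian and $A^{(1)}_\bmu$ anti-Hermitian, from which one deduces that $W^{(0)}_\rho(\bmu)$ is real and $W^{(1)}_\rho(\bmu)$ purely imaginary, and then the additive splitting $W_\rho = W^{(0)}_\rho + W^{(1)}_\rho$ forces the identification. The paper phrases the reality check as $[W^{(k)}_\rho(\bmu)]^* = (-1)^k W^{(k)}_\rho(\bmu)$ via conjugating the trace, whereas you phrase it as ``Hermitian times Hermitian has real trace,'' but these are the same computation.
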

\begin{proof}
	Because $A^{(0)}_\bmu$ and $A^{(1)}_\bmu$ are respectively real symmetric and real anti-symmetric, we have ${A^{(0)\dagger}_\bmu = A^{(0)T}_\bmu = A^{(0)}_\bmu}$ and $A^{(1)\dagger}_\bmu = A^{(1)T}_\bmu = -A_\bmu$. Thus for $k=0,1$, we obtain
	\begin{align}
	\left[W^{(k)}_\rho(\bmu)\right]^* &= \frac{1}{2^n}\tr\left[A^{(k)\dagger}_{\bmu}\rho \right]^* \notag \\
	&= \frac{1}{2^n}\tr\left[(A^{(k)\dagger}_{\bmu}\rho)^\dagger \right] \notag \\
	&= \frac{1}{2^n}\tr\left[(-1)^k(A^{(k)\dagger}_{\bmu}\rho)\right]\notag \\ &= (-1)^k W^{(k)}_\rho(\bmu).
	\end{align}
	This implies $W^{(0)}_\rho(\bmu)$ is the real component of $W_\rho(\bmu)$ while $W^{(1)}_\rho(\bmu)$ is its imaginary component.
\end{proof}

By combining Lemmas \ref{lemma:rho_W_real_im_correspondence} and \ref{lemma:ours_delfosse_relationship}, we arrive at
\begin{align}
	W_{\mathfrak{Re}[\rho]}(\bmu) = W^{(0)}_\bmu(\rho).
\end{align} 
When $\rho$ is an $n$-rebit state, $\mathfrak{Re}(\rho) = \rho$, which implies ${W_\rho(\bmu) = W^{(0)}_\rho(\bmu)}$. 

\subsection{Wigner representation of qubit channels}
\label{appx:wigner_channel_rep}

We recall from the main text that the Wigner representation of a channel $\E: \B(\H^n_2) \rightarrow \B(\H^m_2)$ is the linear map $W_\E : \P_n \rightarrow \P_m$ on phase space defined as
\begin{align}
W_{\E}(\bmv |\bmu) \coloneqq 2^{2n} W_{\J(\E)}(\bmu \oplus \bmv),
\end{align}
for all $\bmv \in \P_m, \bmu \in \P_n$, where the Choi state~\cite{watrous_2018} of $\E$, $\J(\E)$, is defined as $\J(\E) = (\I \otimes \E) \ketbra{\phi^+_n}$ for the canonical maximally entangled state $\ket{\phi^+_n}$ on two sets of $n$ qubits,
\begin{align}
 	\ket{\phi^+_n}\coloneqq \frac{1}{\sqrt{2n}} \left(\sum_{\bmk \in \{0,1\}^n} \ket{\bmk} \otimes \ket{\bmk}\right).
\end{align}
One can straightforwardly verify that $\ket{\phi^+_n}$ is stabilized by $\langle Z_i Z_{n+i}, X_i X_{n+i}\rangle_{i=1,\dots,n}$ and is therefore a CSS state. 

The factorization property (\ref{A_property:qubit_phase_point_op_tensor_product}) of phase point operators implies that 
\begin{align}
W_{\E}(\bmv |\bmu) = \frac{2^n}{2^m} \tr \left[ \left( A^\dagger_{\bmu} \otimes A^\dagger_{\bmv} \right) \J(\E) \right].
\end{align}
Using the identity ${\E(X) = 2^n \tr_{1,\dots,n} \left[ (X^T \otimes \id^{\otimes m}) \J(\E) \right]}$ for transposition taken with respect to the computational basis, and recalling that $A_\bmu$ is real in the computational basis, we then conclude
\begin{align}
W_{\E}(\bmv |\bmu) = \frac{1}{2^m} \tr[A^\dagger_{\bmv} \E(A^*_{\bmu})] = \frac{1}{2^m} \tr[A^\dagger_{\bmv} \E(A_{\bmu})]. \label{eq:W_E_defn}
\end{align}

Therefore, if $\sigma = \E(\rho)$, then we obtain \eqref{eq:W_E_in_action} from the main text, i.e., 
\begin{align}
W_{\sigma}(\bmv) &= \frac{1}{2^m} \tr\left[A^\dagger_{\bmv}\E(\rho)\right] \notag \\
&= \frac{1}{2^m} \tr\left[\E\left( \sum_{\bmu \in \P_n} W_{\rho}(\bmu) A_{\bmu} \right) A^\dagger_{\bmv}\right] \notag \\ 
&= \frac{1}{2^m}\sum_{\bmu \in \P_n} \tr\left[ A^\dagger_{\bmv} \E\left(A_{\bmu} \right)\right]  W_{\rho}(\bmu) \notag \\ 
&= \sum_{\bmu \in \P_n} W_\E(\bmv | \bmu)  W_{\rho}(\bmu).
\end{align}
We thereby see that if $\E$ maps $\rho$ to $\sigma$, then $W_\E$ is a matrix that maps $W_\rho$ to $W_\sigma$, which justifies regarding $W_\E$ as the representation of $\E$ on phase space.

By property \ref{A_property:identity} of the phase point operators, we have that $\sum_{\bmv \in \P_m} A_\bmv = 2^m \id_m$. By applying this to the alternative formulation of $W_\E$ in \eqref{eq:W_E_defn}, we see that
\begin{align}
	\sum_{\bmv \in \P_m} W_\E(\bmv | \bmu) &= \frac{1}{2^m} \sum_{\bmv \in \P_m} \tr\left[A^\dagger_{\bmv} \E(A_{\bmu})\right]\notag \\ &= \frac{1}{2^m} \tr\left[\left(\sum_{\bmv \in \P_m} A_{\bmv}\right)^\dagger \E(A_{\bmu})\right] \notag \\ &= \frac{1}{2^m} \tr[2^m \id_m \E(A_\bmu)] \notag \\ &= \tr[\E(A_\bmu)].
\end{align}
Then recalling that $\tr[A_\bmu] = 1$ (property \ref{A_property:unit_trace}), we obtain \eqref{eq:W_normalization} from the main text, i.e.,
\begin{align}
	\sum_{\bmv \in \P_m} W_\E(\bmv | \bmu) = \tr[A_\bmu] = 1.
\end{align}
This means every column of $W_\E$ sums up to 1. 

Finally, we show that the representation we have chosen respects sequential and parallel composition of processes.

Let $\E: \B(\H^l_2) \rightarrow \B(\H^k_2)$ and $\F: \B(\H^n_2) \rightarrow \B(\H^m_2)$ be two multiqubit channels. Since $\{A_{\bmx}\}_{\bmx \in \P_m}$ are a complex orthogonal basis for $2^m \times 2^m$ complex matrices under the Hilbert-Schmidt inner product, we have that ${\F(A_\bmu) = \frac{1}{2^m} \sum_{\bmx \in \P_m} \tr[A^\dagger_{\bmx} \F(A_\bmu)]} A_{\bmx}$. Therefore, when $m = l$, we obtain
\begin{align}
	W_{[\E \circ \F]}(\bmv | \bmu) &= \frac{1}{2^k} \tr[A^\dagger_\bmv \E \circ \F (A_\bmu)] \notag\\
	&= \frac{1}{2^k} \tr\left[A^\dagger_\bmv \E \left(\frac{1}{2^m} \sum_{\bmx \in \P_m} \tr[A^\dagger_\bmx \F (A_\bmu)] A_\bmx\right)\right] \notag\\
	&= \sum_{\bmx \in \P_m} \frac{1}{2^k} \tr[A^\dagger_\bmv \E(A_\bmx)]  \frac{1}{2^m}  \tr[A^\dagger_\bmx \F (A_\bmu)] \notag\\
	&= \sum_{\bmx \in \P_m} W_\E(\bmv|\bmx) W_\F(\bmx|\bmu),
	\end{align}
or in matrix notation,
\begin{align}
	W_{\E \circ \F} = W_\E W_\F.
\end{align}
Furthermore, due to the factorization property of the phase point operators, we have that
\begin{align}
	W_{\E \otimes \F}(\bmx \oplus \bmy|\bmu \oplus \bmv) &= \frac{1}{2^{(m+k)}} \tr[A^\dagger_\bmx \otimes A^\dagger_\bmy \E \otimes \F (A_\bmu \otimes A_\bmv)] \notag\\ 
	&= \frac{1}{2^k} \tr[A^\dagger_\bmx \E(A_\bmu)] \frac{1}{2^m} \tr[A^\dagger_\bmy \F(A_\bmv)] \notag\\ 
	&= W_\E(\bmx|\bmu) W_\F(\bmy|\bmv),
\end{align} 
or in matrix notation
\begin{align}
	W_{\E \otimes \F} = W_\E \otimes W_\F.
\end{align}

\section{\uppercase{Completely CSS-preserving operations}} \label{appx:CSS}
\subsection{Completely CSS-preserving unitaries} \label{appx:CCSSP_vs_CSSP}
The group of CSS-preserving unitaries on $n$ qubits~\cite{delfosse2015} can be generated as
\begin{equation}
\G_+(n) \coloneqq \langle H^{\otimes n}, \CNOT(i,j), X_i, Z_i\rangle_{i,j=1,\dots,n, i \neq j}.
\end{equation}
We will also find it useful to note the following conjugation relations of the collective Hadamard gate,
\begin{align}
H^{\otimes n} \CNOT(i,j) &=\CNOT(j,i) H^{\otimes n} \label{eq:conjugation_collective_Hadamard_1}\\
H^{\otimes n} X(\bma) &= Z(\bma) H^{\otimes n}, \label{eq:conjugation_collective_Hadamard_2}
\end{align}
which respectively hold for all $ i,j \in \{1,\dots n\}$ where $i \neq j$ and $n$-bit strings $\bma \in \{0,1\}^n $.

\begin{lemma} \label{lemma:CCSSP_vs_CSSP}
The group of \emph{completely CSS-preserving unitaries} on $n$ qubits is 
\begin{equation}
	\G(n) \coloneqq \langle \CNOT(i,j), Z_i, X_i \rangle_{i,j = 1,\dots,n, i \neq j}.
\end{equation}
\end{lemma}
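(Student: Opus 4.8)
The plan is to prove the two inclusions $\G(n) \subseteq \{\text{completely CSS-preserving unitaries}\}$ and $\{\text{completely CSS-preserving unitaries}\} \subseteq \G(n)$ separately. For the forward inclusion, I would show each generator is completely CSS-preserving and then note that this property is closed under composition. It suffices to check that conjugating an arbitrary product of $X$- and $Z$-type generators (a CSS stabilizer group on a joint system $AB$, where the gate acts on $B$) again yields a group generated by purely $X$- or $Z$-type Paulis. For $X_i$ and $Z_i$ this is immediate since they commute with or flip signs of Pauli strings without mixing $X$ and $Z$ sectors. For $\CNOT(i,j)$ the key computation is the standard conjugation rules $\CNOT: X_i \mapsto X_i X_j$, $X_j \mapsto X_j$, $Z_i \mapsto Z_i$, $Z_j \mapsto Z_i Z_j$, which visibly send $X$-type to $X$-type and $Z$-type to $Z$-type; tensoring with $\id_A$ changes nothing. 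Hence each generator maps $\Omega_{css}$ (and its extensions to $AB$) into itself, so does every word in them, establishing $\G(n) \subseteq$ completely CSS-preserving unitaries.

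For the reverse inclusion, I would argue that any completely CSS-preserving unitary $U$ must in particular be CSS-preserving, hence (being a unitary that maps stabilizer states to stabilizer states) a Clifford unitary, so it acts on the Pauli group by a symplectic affine map. The CSS-preservation constraint then forces this symplectic map to respect the $X$/$Z$ grading: conjugating any $X$-type Pauli must give (up to sign) an $X$-type Pauli, and likewise for $Z$-type — because $U$ applied to a CSS basis state must give another CSS state, whose stabilizer generators are each purely $X$- or $Z$-type. One then shows the subgroup of the Clifford group (modulo Paulis and phases) consisting of such ``grade-preserving'' symplectic transformations is exactly generated by the $\CNOT$ transvections, and restoring the Pauli freedom gives the extra $X_i, Z_i$ generators. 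The ``complete'' (as opposed to merely CSS-preserving) qualifier is what rules out the collective Hadamard $H^{\otimes n}$, which swaps the $X$ and $Z$ sectors globally: although $H^{\otimes n}$ is CSS-preserving, acting on half of an entangled CSS state $\rho_{AB}$ it produces a non-CSS state (e.g. $H_2$ on one half of $\ket{\phi^+}$ gives the stabilizer $\langle X_1 Z_2, Z_1 X_2\rangle$, exactly the non-CSS example from the main text). So one must verify the grading is preserved \emph{per generator} rather than up to a global flip, which pins down $\G(n)$ rather than $\G_+(n)$.

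The main obstacle I anticipate is the reverse inclusion's final combinatorial step: showing that the grade-preserving elements of the symplectic group $\mathrm{Sp}(2n, \mathbb{Z}_2)$ are generated precisely by the $\CNOT$ transvections (together with the diagonal action, which here is trivial over $\mathbb{Z}_2$). Concretely, a Clifford acting on Paulis corresponds to a $2n \times 2n$ symplectic matrix over $\mathbb{Z}_2$ in block form $\begin{psmallmatrix} A & B \\ C & D \end{psmallmatrix}$; grade preservation forces the off-diagonal blocks $B$ and $C$ to vanish, leaving $\begin{psmallmatrix} A & 0 \\ 0 & D \end{psmallmatrix}$, and the symplectic condition then forces $D = (A^{-1})^T$ with $A \in GL(n, \mathbb{Z}_2)$. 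Since $GL(n,\mathbb{Z}_2)$ is generated by elementary (row-addition) matrices, each realized by a single $\CNOT$, the result follows. I would organize the proof as: (i) each generator of $\G(n)$ is completely CSS-preserving (direct conjugation check); (ii) a completely CSS-preserving unitary is Clifford and grade-preserving; (iii) grade-preserving Cliffords correspond modulo Paulis to block-diagonal symplectic matrices $\mathrm{diag}(A, (A^{-1})^T)$; (iv) these are generated by $\CNOT$s, and reinstating the Pauli factors yields the $X_i, Z_i$ generators; (v) conclude equality. I would also reference the conjugation relations \eqref{eq:conjugation_collective_Hadamard_1}--\eqref{eq:conjugation_collective_Hadamard_2} to make explicit why $H^{\otimes n} \in \G_+(n) \setminus \G(n)$, so the reader sees the role of ``complete''.
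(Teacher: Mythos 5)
Your proposal takes a genuinely different route from the paper. The paper's proof is short because it imports the Delfosse et al. characterization of the CSS-preserving unitaries $\G_+(n) = \langle H^{\otimes n}, \CNOT(i,j), X_i, Z_i \rangle$ as a given (stated at the start of the same appendix), then uses the conjugation relations for $H^{\otimes n}$ to decompose any $U_+ \in \G_+(n)$ as $[H^{\otimes n}]^b U$ with $U \in \G(n)$, establishes that $U \in \G(n)$ is completely CSS-preserving by the clean observation that $\G(n) \otimes \id$ sits inside $\G_+(n')$ for every $n' \ge n$, and rules out $H^{\otimes n}$ by a single counterexample. Your plan instead re-derives the characterization from scratch through the symplectic picture (grade-preserving Cliffords correspond to block-diagonal symplectic matrices $\mathrm{diag}(A,(A^{-1})^T)$, generated by $\CNOT$ transvections). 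This is more self-contained and perhaps more illuminating, but it is considerably longer and quietly re-proves what the paper cites.

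Two steps in your reverse inclusion need more care than the sketch gives them. First, "CSS-preserving, hence Clifford" is asserted but is itself a non-trivial fact -- it is essentially a component of the Delfosse result. A unitary that maps the set of CSS states into itself is not obviously a Clifford a priori; this needs an argument or a citation. Second, the claim that complete CSS-preservation forces each $U X_i U^\dagger$ to be $X$-type (and $U Z_i U^\dagger$ to be $Z$-type) is correct but the justification "$U$ applied to a CSS basis state must give another CSS state, whose stabilizer generators are each purely $X$- or $Z$-type" skips the key step: the mere existence of \emph{some} CSS generating set for the output stabilizer does not by itself pin down the type of each conjugated generator. The clean way to close this is to act on $(\I \otimes U)\ket{\phi^+_n}$, whose stabilizer is $\langle X_{A_i}(U X_{B_i} U^\dagger), Z_{A_i}(U Z_{B_i} U^\dagger)\rangle_i$, and use the fact that for $\ket{\phi^+_n}$ the map from stabilizer elements to their $A$-side marginals is a bijection onto the full Pauli group on $A$; any CSS generating set must contain an element with $A$-part exactly $X_{A_i}$, which can only be $X_{A_i}(U X_{B_i} U^\dagger)$ itself, forcing $U X_{B_i} U^\dagger$ to be $X$-type. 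Your discussion of $H^{\otimes n}$ and $\ket{\phi^+}$ shows you have the right intuition -- this is the mechanism you want to make precise rather than leave as a heuristic.
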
 
\begin{proof}
Let $U_+$ be any CSS-preserving unitary on $n$ qubits. We first observe that $U_+$ is either in $\G(n)$ or is a unitary from $\G(n)$ followed by the collective Hadamard gate on $n$ qubits, i.e., $U_+ = \left[H^{\otimes n}\right]^b U$ for some binary digit $b \in \{0,1\}$ and unitary $U \in \G(n)$. This follows from the conjugation relations \eqref{eq:conjugation_collective_Hadamard_1} and \eqref{eq:conjugation_collective_Hadamard_2} alongside the fact that $H^{\otimes n}$ is self-inverse.

If $U_+$ is in $\G(n)$, then because $\G(n)$ is a subset of $\G_+(n')$ for all $n' \ge n$, $U_+$ must be \emph{completely} CSS-preserving. If $U_+$ is not in $\G(n)$, then we must have $U_+ = H^{\otimes n} U$ for some $U \in \G(n)$, which implies $U_+$ cannot be completely CSS-preserving since $H^{\otimes n}$ is not completely CSS-preserving. Therefore, $U_+$ is completely CSS-preserving if and only if it is in $\G(n)$.
\end{proof}

\subsection{Completely CSS-preserving measurements}
Throughout the rest of this appendix, we extend, wherever necessary, the notion of being completely CSS-preserving to trace-decreasing operations -- i.e., a trace-decreasing operation $\E$ from $n$ to $n'$ qubits is completely CSS-preserving if, given any CSS state $\rho$ on $m+n$ qubits, we have that $(\I_m \otimes \E)(\rho)$ is always a (possibly subnormalised) CSS state on $(m+n')$ qubits. 

The projective measurement of any $n$-qubit Pauli observable $S$ is carried out using projectors ${P(\pm S) \coloneqq \frac{1}{2}(\id_n \pm S)}$ corresponding to the $\pm 1$ outcomes. Post-selection for the $\pm 1$ outcome is then carried out by the operation ${\P(\pm S) \coloneqq P(\pm S) (\cdot) P(\pm S)}$.

\begin{lemma}
	\label{lemma:CSS_proj}
	Post-selecting the $\pm 1$ outcome in the projective measurement of a CSS observable is completely CSS-preserving.   
\end{lemma}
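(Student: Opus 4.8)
The plan is to show that if $\rho$ is a CSS state on $m+n$ qubits and $S$ is a CSS observable on the last $n$ qubits (hence either an $X$-type or $Z$-type Pauli string), then $(\I_m \otimes \P(\pm S))(\rho)$ is a (possibly subnormalised) CSS state on $m+n$ qubits. First I would reduce to the case of $\rho$ pure: since the set of CSS states $\D_{css}$ is by definition the convex hull of pure CSS states $\Omega_{css}$, and post-selection is linear, it suffices to handle $\rho = \ketbra{\phi}$ with $\ket{\phi} \in \Omega_{css}$; the general case then follows by taking convex combinations of the resulting (subnormalised) pure CSS outputs. Note also that $\I_m \otimes S$ is itself a CSS observable on the full $m+n$ qubits (padding with identities on the first $m$ qubits preserves $X$- or $Z$-type), so it is enough to treat the $m=0$ case: post-selecting $\pm S$ on a pure CSS state $\ket{\phi}$ with $S$ a CSS observable on all the qubits.

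\medskip
The core step is then a stabilizer-formalism argument. Let $\S = \S(\ket{\phi}) = \langle g_1, \dots, g_n\rangle$ be generated by $X$-type and $Z$-type Pauli strings, and let $S$ be (without loss of generality) an $X$-type observable. There are two cases. If $\pm S \in \S$ or $\mp S \in \S$, then $P(\pm S)$ acts as either the identity or the zero operator on $\ket{\phi}$, so the output is $\ket{\phi}$ or $0$ — trivially CSS. Otherwise, consider the subgroup $\S' = \{g \in \S : g \text{ commutes with } S\}$; because $S \notin \S$ and $-\id \notin \S$, exactly half of $\S$ commutes with $S$, so $\S'$ is an index-2 subgroup, $\S' = \langle h_1,\dots,h_{n-1}\rangle$. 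The post-selected state $P(+S)\ket{\phi}$ (when nonzero) is a stabilizer state stabilised by $\langle h_1,\dots,h_{n-1}, S\rangle$. The key point is that we may choose the generating set $\{h_i\}$ of $\S'$ to consist of $X$-type and $Z$-type strings: since $S$ is $X$-type, every $Z$-type generator $g_j$ of $\S$ has a definite commutation relation with $S$, and we can always rearrange an $X$/$Z$-type generating set of $\S$ into a new $X$/$Z$-type generating set in which all but (at most) one generator lies in $\S'$ — replacing a pair of anticommuting-with-$S$ generators $g,g'$ by $g$ and $gg'$, where the product of two $Z$-type strings is again $Z$-type, and similarly for $X$-type; the single leftover generator that anticommutes with $S$ is dropped and replaced by $S$ itself, which is $X$-type. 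Hence $\langle h_1,\dots,h_{n-1},S\rangle$ is generated by CSS observables, so $P(+S)\ket{\phi}$ is a pure CSS state, and likewise for $P(-S)$.

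\medskip
Putting this together: for pure CSS $\ket{\phi}$, $\P(\pm S)(\ketbra{\phi})$ is a rank-$\le 1$ positive operator proportional to a pure CSS state, hence a subnormalised CSS state; linearity extends this to all of $\D_{css}$; and padding with $\I_m$ extends it to the "completely" statement. The main obstacle — and the step deserving the most care — is the combinatorial bookkeeping that the index-2 subgroup $\S'$ of a CSS stabilizer group admits a generating set that is again purely $X$-/$Z$-type. One clean way to organise this is to use the CSS structure directly: write $\S = \S_X \cdot \S_Z$ where $\S_X$ ($\S_Z$) is generated by the $X$-type ($Z$-type) generators. If $S$ is $X$-type it commutes with all of $\S_X$ automatically, so the anticommutation can only come from $\S_Z$; then $\S' = \S_X \cdot (\S_Z \cap \S')$ and $\S_Z \cap \S'$ is an index-$\le 2$ subgroup of $\S_Z$, for which a $Z$-type generating set is immediate. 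Adjoining $S$ gives a CSS generating set of the stabilizer of the post-measurement state, completing the proof. (The sign/global-phase subtleties — ensuring we never need a generator like $-\id$ or a $Y$-type string — are automatic because products of same-type Pauli strings stay that type up to sign, and signs are absorbed into the choice of $\pm$ outcome.)
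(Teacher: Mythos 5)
Your proposal is correct and follows essentially the same route as the paper's proof: reduce to pure CSS states by convexity, note that padding $S$ with identities keeps it CSS, split into the commuting/non-commuting cases, and in the non-trivial case rearrange the generating set so exactly one anticommuting generator remains and can be swapped for $\pm S$, using the fact that all generators anticommuting with an $X$-type ($Z$-type) $S$ must be $Z$-type ($X$-type) so their products stay CSS. The paper phrases the bookkeeping as ``multiply all other anticommuting generators by one of them,'' while you also offer the slightly cleaner $\S = \S_X \cdot \S_Z$ decomposition, but the underlying argument is identical.
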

\begin{proof}
	Let $S$ be a CSS observable on $n$ qubits and $\ket{\psi}$ be a CSS state on $m+n$ qubits for any $m \ge 0$. Then let $S_1,\dots,S_{m+n}$ be a set of $m+n$ CSS observables that generate the stabilizer group $\S(\ket{\psi})$ of $\ket{\psi}$. 
	
	Post-selecting the $\pm 1$ outcome in a projective measurement of $S$ on the last $n$ qubits of $\ket{\psi}$ yields the possibly subnormalised output
	\begin{align}
	\ket{\phi_\pm} &\coloneqq \left[\id^{\otimes m} \otimes P(\pm S)\right] \ket{\psi} \notag \\  
	&=   \left[\frac{1}{2}\left(\id^{\otimes m+n} \pm \id_m \otimes S\right)\right] \ket{\psi} \notag \\ 
	&=  P(\pm S') \ket{\psi}, 
	\end{align}     
	where we have defined the CSS observable $S' \coloneqq \id^{\otimes m} \otimes S$. There are now two possibilities:
	\begin{enumerate}
		\item[(a)] that $S'$ commutes with every generator of $\S(\ket{\psi})$, so $S'$ or $-S'$ must stabilize $\ket{\psi}$. Therefore, either $\ket{\phi_+} = \ket{\psi}$ and $\ket{\phi_-} = 0$ or vice versa, so $\ket{\phi_\pm}$ are possibly subnormalised CSS states.
		
		\item[(b)] that, without loss of generality, $S'$ does \emph{not} commute with just one CSS observable $S_1$ that generates $\S(\ket{\psi})$. This follows from the fact that, in any set of $m+n$ CSS observables that generate $\S(\ket{\psi})$, those that do \emph{not} commute with $S'$ must \emph{all} be $X$ or $Z$-type, so by picking one such generator and multiplying all others by it, we obtain another set of $m+n$ CSS observables generating $\S(\ket{\psi})$ in which only one generator does \emph{not} commute with $S'$. Then the states $\ket{\phi_\pm}$ have norm $\frac{1}{\sqrt{2}}$ and are stabilized respectively by $\langle \pm S, S_2,\dots,S_{m+n}\rangle$, so $\ket{\phi_\pm}$ are subnormalised CSS states. 
	\end{enumerate}
	Therefore, given any pure CSS state $\ket{\psi}$ on $m+n$ qubits, post-selecting the $\pm 1$ outcome in the projective measurement of a CSS observable on the last $n$ qubits of $\ket{\psi}$ always produces a (possibly subnormalised) CSS state. Since every CSS state is a statistical mixture of pure CSS states, we arrive at the Lemma result.
\end{proof}

\subsection{CSS circuits}\label{appx:CSS_operations}
In this section, we show that the subset of stabilizer operations covered by \lemref{lemma:CCSSP_powers}, which we referred to as \emph{CSS circuits}, are completely CSS-preserving.

To reiterate, a \emph{CSS circuit} is any sequence of the following four \emph{primitive CSS channels}:
\begin{enumerate}
	\item Introducing a CSS state on any number of qubits,
	\item Performing a completely CSS-preserving unitary,
	\item Projective measurement of any CSS observable, with the possibility of performing different sequences of primitive CSS channels depending on outcome, 
	\item Discarding any number of qubits, 
\end{enumerate}
as well as statistical mixtures of such sequences. We emphasise that all CSS circuits are trace-preserving. 

Any sequence of primitive CSS channels can be executed as a binary tree where the root node represents inputting qubits, the leaf nodes represent outputting qubits, and internal nodes represent primitive CSS channels. An illustrative example is provided in \figref{fig:binary_tree_conversion_eg}, from which we see that a sequence of primitive CSS channels can produce outputs distinguished by the sequences of measurement outcomes leading up to them.

\begin{figure}[h]
    \centering \includegraphics[width=0.9\linewidth]{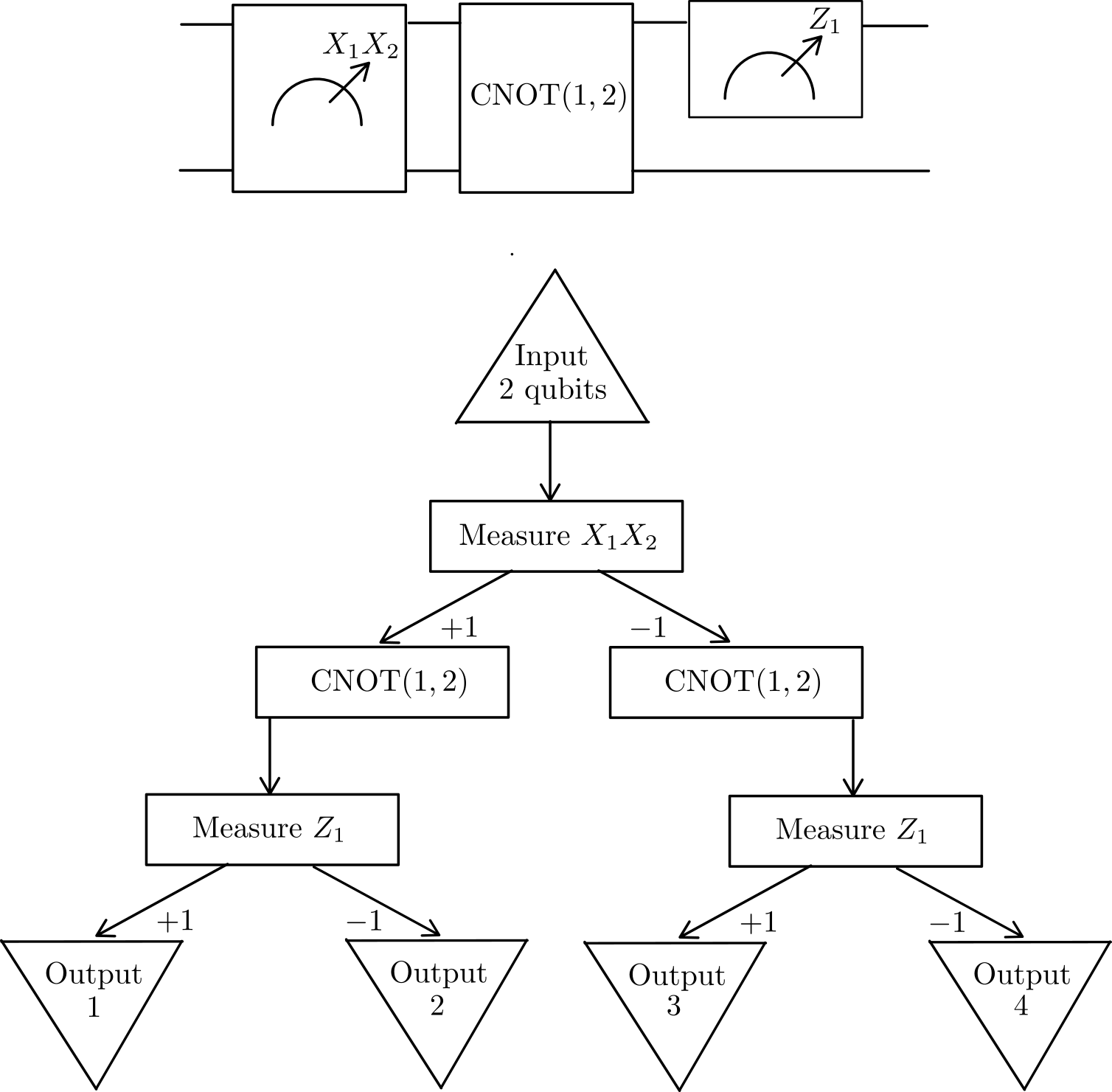}
	\caption{The sequence of primitive CSS channels on the top can be executed as the binary tree on the bottom.}
	\label{fig:binary_tree_conversion_eg}
\end{figure}

Different numbers of ancillary qubits may be introduced on different branches of a tree representing a sequence of primitive CSS channels. However, one can arbitrarily increase the number of qubits introduced on any branch, without affecting what it does, by introducing the maximally mixed state on qubits that are immediately discarded just before the branch's output. Furthermore, different branches may have different lengths. However, one can arbitrarily lengthen any branch, without affecting what it does, by inserting identity channels just before the branch's output. Since introducing the maximally mixed state and the identity channel are both primitive CSS channels, we can, without loss of generality, only consider sequences of primitive CSS channels executed as binary trees where every branch has the same length and introduces the same number of ancillary qubits. 

\begin{lemma}\label{lemma:CSS_operation_canonical_form}
	A CSS circuit $\E$ on $n$ qubits is a statistical mixture of channels $\E_i$ representing sequences of primitive CSS channels, in which each sequence $\E_i$ is a sum of (possibly trace-decreasing) operations $\E_{i,j}$ generating its distinguishable outputs. Thus one can write
	\begin{align}
	\E(\rho) &= \sum_i p_i \E_i,  \text{ where } \E_i = \sum_j \E_{i,j},  \notag \\\E_{i,j} &= \tr_{\R}\left[K_{i,j}(\rho\otimes \sigma_{i,j})K_{i,j}^\dagger\right], \text{ and }  \notag\\  K_{i,j} &= \prod_{l=1}^N P(S_{(i,j),l}) U_{(i,j),l},
	\end{align}
	in which $\{p_i\}$ forms a probability distribution, $\sigma_{i,j}$ is a CSS state on $m$ ancillary qubits, $\R$ is a subset of the $(n+m)$ input and ancillary qubits, $U_{(i,j),l}$ is a completely CSS-preserving unitary and $P(S_{(i,j),l})$ projects onto the $+1$ eigenspace of the CSS observable $S_{(i,j),l}$. Moreover, $P(S_{(i,j),1}),\dots,P(S_{(i,j),N})$ gives the sequence of measurement outcomes that operationally distinguish the $j$th possible output of sequence $\E_i$.
\end{lemma}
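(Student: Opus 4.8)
The plan is to proceed by structural induction on the binary-tree representation of a CSS circuit, exploiting the normal form (equal branch length, equal ancilla count) established in the paragraph preceding the Lemma statement. First I would recall that a CSS circuit is, by definition, a statistical mixture $\E = \sum_i p_i \E_i$ of deterministic sequences of primitive CSS channels, so it suffices to establish the claimed form for a single such sequence $\E_i$; the outer convex combination then reassembles the general case trivially. Next, I would argue that introducing ancillas can be pushed to the very front: since introducing a CSS state on any number of qubits is a primitive CSS channel and the normal form allows us to assume every branch introduces the same number $m$ of ancilla qubits, any ``introduce CSS state'' step appearing partway through the circuit can be commuted to the start (the qubits it produces simply sit idle until needed, which is the insertion of identity channels, itself a primitive CSS channel). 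This lets us write $\E_i(\rho) = \mathcal{F}_i(\rho \otimes \sigma)$ for a fixed CSS ancilla state $\sigma$ on $m$ qubits and a circuit $\mathcal{F}_i$ built only from completely CSS-preserving unitaries (item 2 of \lemref{lemma:CCSSP_powers}, whose generating set is $\G(n)$ by \lemref{lemma:CCSSP_vs_CSSP}), CSS-observable projections (item 3), and qubit discards (item 4), with no further ancilla introductions.

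The heart of the argument is then to track how the branching tree of $\mathcal{F}_i$ decomposes into its distinguishable output operations $\E_{i,j}$. Each leaf $j$ of the tree corresponds to a unique root-to-leaf path, along which one encounters a sequence of unitaries and measurement steps; a measurement step where we post-select outcome $+1$ of CSS observable $S$ contributes the Kraus map $\P(+S)$, and where we post-select outcome $-1$ it contributes $\P(-S) = \P(+(-S))$, which (since $-S$ is still a CSS observable) can be absorbed into the same form. I would then use the fact that measurement of a CSS observable can always be realized as a completely CSS-preserving unitary followed by projection onto the $+1$ eigenspace of a fixed single-qubit $Z$ observable on an ancilla — but given the normal form already fixes the ancilla, it is cleaner to simply note that the generating set $\G(n)$ for the unitaries together with the projectors $P(S)$ for CSS observables $S$ is closed under exactly the compositions appearing along a path, and that discards at internal nodes commute past subsequent gates acting on the remaining qubits (or can be deferred to the end by the standard ``principle of deferred measurement / deferred discard'' argument, since a discarded qubit cannot influence the rest of the computation). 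Collecting the unitaries and projectors encountered along the path in order gives $K_{i,j} = \prod_{l=1}^N P(S_{(i,j),l}) U_{(i,j),l}$ for suitable $U_{(i,j),l} \in \G(n+m)$ and CSS observables $S_{(i,j),l}$, with $N$ the common branch length; the final discard of the register $\R$ of non-output qubits yields $\E_{i,j}(\rho) = \tr_\R[K_{i,j}(\rho\otimes\sigma_{i,j})K_{i,j}^\dagger]$. Summing over leaves $j$ reconstitutes $\E_i = \sum_j \E_{i,j}$, since the leaves partition the measurement-outcome branches.

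The main obstacle I anticipate is the bookkeeping needed to justify commuting the ancilla-introduction and qubit-discard steps to the front and back of the circuit respectively, while simultaneously preserving the claim that all operations in between lie in the stated normal form — in particular, one must verify that after such rearrangements the unitaries are still drawn from $\G$ acting on the \emph{same} fixed $(n+m)$-qubit register on every branch, and that classically-controlled choices of subsequent gates (allowed by item 3) are correctly encoded by letting $U_{(i,j),l}$ and $S_{(i,j),l}$ depend on the branch index $j$ rather than requiring a single uniform circuit. This is conceptually routine — it is the standard principle of deferred measurement adapted to the CSS setting — but writing it carefully, especially handling branches of a priori different lengths and ancilla counts, requires the normalization preprocessing invoked just before the Lemma, and the proof should explicitly cite that reduction so the induction has a clean base of uniform-depth, uniform-width trees to work on.
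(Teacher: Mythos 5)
Your proposal is correct and takes essentially the same approach as the paper's proof, which is in fact quite terse: it invokes the WLOG normalization to equal branch length $N$ and ancilla count $m$, and then states only that the form of $\E_{i,j}$ follows ``by induction over the steps of the branch.'' Your more detailed treatment — pushing ancilla introductions to the front, deferring discards to the end, folding the $-1$ measurement outcome into $P(+(-S))$ with $-S$ a CSS observable, and absorbing classical control into the branch dependence of $U_{(i,j),l}$ and $S_{(i,j),l}$ — makes explicit exactly the bookkeeping the paper leaves implicit.
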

\begin{proof}
	Without loss of generality, every branch of every sequence forming the mixture of $\E$ introduces the \emph{same} number of ancillary qubits $m$ and has \emph{same} length $N$. One can show that the $j$th branch of the $i$th sequence must generate the channel $\E_{i,j}$ by induction over the steps of the branch. 
\end{proof}

Because each possible output from a sequence of primitive CSS channels results from a unique sequence of measurement outcomes, it is operationally meaningful to prepare a state conditioned upon obtaining the $j$th possible output in the $i$th sequence from the statistical mixture forming $\E$. We therefore have the following corollary:
\begin{corollary} \label{corollary:tagging}
	We can record which operationally distinguishable output $\E_{i,j}$ from a CSS circuit $\E$ was obtained using a classical register, 
	\begin{align}
	\E'(\rho) \coloneqq \sum_{i,j} p_i\ \E_{i,j}(\rho) \otimes \ketbra{r_{i,j}},
	\end{align}
	where $\ket{r_{i,j}}$ is a computational basis state on multiple qubits. We note that, since $\ket{r_{i,j}}$ is a CSS state, $\E'$ can be carried out as a CSS circuit. 
\end{corollary}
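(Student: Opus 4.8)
The plan is to take the canonical form of a CSS circuit supplied by Lemma~\ref{lemma:CSS_operation_canonical_form} and append a classical register written using only primitive CSS channels, so that the augmented operation is again a CSS circuit essentially by definition. By that Lemma, $\E = \sum_i p_i \E_i$ is a statistical mixture of sequences of primitive CSS channels, with each sequence $\E_i = \sum_j \E_{i,j}$ splitting into branch operations $\E_{i,j} = \tr_\R[K_{i,j}(\rho \otimes \sigma_{i,j})K_{i,j}^\dagger]$, where branch $j$ is operationally singled out by the ordered list of outcomes $P(S_{(i,j),1}),\dots,P(S_{(i,j),N})$ of the CSS-observable measurements along that branch. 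The observation I would use is that this list is a classical bit-string which can be copied into fresh ancillae without ever leaving the class of CSS circuits.

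Concretely, I would fix the sequence index $i$ and modify $\E_i$ in two ways. First, introduce at the outset a register of $N$ ancilla qubits in $\ket{0}^{\otimes N}$ together with a further block of qubits prepared in the computational-basis state encoding the value of $i$; this is permissible by the first primitive CSS channel, since computational-basis states are CSS. Second, immediately after the $l$th measurement in the branch, apply a single-qubit $X$ gate to the $l$th register qubit conditioned on the $-1$ outcome; this is permissible by the third primitive CSS channel (classical control conditioned on the outcome of a CSS measurement) together with the fact that the single-qubit $X$ lies in $\G(n)$ and is therefore completely CSS-preserving by Lemma~\ref{lemma:CCSSP_vs_CSSP}. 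At the end of each branch the register then holds a computational-basis state $\ket{r_{i,j}}$ that encodes the pair consisting of $i$ and the outcome string, and distinct $(i,j)$ yield distinct $r_{i,j}$ precisely because the outcome strings operationally distinguish the branches. The modified sequence $\E_i'$ is a sequence of primitive CSS channels with $\E_i'(\rho) = \sum_j \E_{i,j}(\rho)\otimes\ketbra{r_{i,j}}$; forming the statistical mixture over $i$ with weights $p_i$ — itself a permitted operation by the closure properties of Lemma~\ref{lemma:CCSSP_powers} — then yields exactly $\E'(\rho) = \sum_{i,j} p_i\,\E_{i,j}(\rho)\otimes\ketbra{r_{i,j}}$, which is therefore a CSS circuit.

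I do not expect a genuine obstacle here, since the argument is essentially bookkeeping built on Lemma~\ref{lemma:CSS_operation_canonical_form}; the only point requiring care is to invoke the normalisation convention established just before that Lemma — that all branches of a given sequence may be taken to have the same length $N$ and to introduce the same number of ancillae — so that one fixed $N$-qubit register suffices uniformly and the tagging map $(i,j)\mapsto r_{i,j}$ is well-defined and injective. One should also note that, since $i$ indexes a classical mixture rather than a coherent branch, ``recording $i$'' simply means deterministically preparing the label $\ket{i}$ along the $i$th arm of the mixture, which is consistent because a statistical mixture of CSS circuits is again a CSS circuit.
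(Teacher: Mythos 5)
Your proposal is correct and follows essentially the same route the paper takes: it leans on the canonical form of Lemma~\ref{lemma:CSS_operation_canonical_form}, observes that the branch index $j$ within a fixed sequence $i$ is in bijection with the outcome bit-string, and records that bit-string together with the mixture label $i$ using only primitive CSS channels (CSS state preparation, conditional $X$-gates from $\G(n)$, and classical mixing), then appeals to the normalisation convention established just before that Lemma to get a uniform register size. The paper simply states the conclusion as an observation inside the corollary — that $\ket{r_{i,j}}$ is CSS and its preparation conditioned on the observed outcomes is therefore a primitive CSS channel — rather than constructing the register explicitly as you do; your incremental bit-by-bit tagging is operationally equivalent to the paper's prepare-the-label-at-the-end phrasing and is a valid, slightly more mechanical rendering of the same argument.
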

In the next subsection, we use this corollary to obtain the trace-preserving CSS code projection studied throughout this paper.

We are finally in a position to prove \lemref{lemma:CCSSP_powers}, which is reproduced below. To this end, it is convenient to define the unique 0-qubit state 1 as CSS.
\begin{customlem}{3}
    Any CSS circuit is completely CSS-preserving.
\end{customlem}
\begin{proof}
	The decomposition of CSS circuits given in \lemref{lemma:CSS_operation_canonical_form} implies that if (i) performing completely CSS-preserving unitaries, (ii) conditioning on the $+1$ outcome in the projective measurement of a CSS observable, (iii) introducing a CSS state and (iv) discarding any number qubits are completely CSS-preserving, then all CSS circuits are completely CSS-preserving. 
	
	Now (i) is completely CSS-preserving by definition, we proved that (ii) is completely CSS-preserving in \lemref{lemma:CSS_proj}, and since the tensor product of two CSS states is always a CSS state, (iii) is completely CSS preserving.
	
	Therefore, to prove that all CSS circuits are completely CSS-preserving, we just have to prove that discarding any number of qubits is completely CSS-preserving. 
	
	Consider discarding $l$ qubits from $n$, where $n \ge l \ge 1$. Since we can freely relabel subsystems, we need only consider discarding the \emph{last} $l$ qubits. Let $\ket{\psi}$ be a pure CSS state on $m+n$ qubits for any $m \ge 0$. Discarding the last $l$ qubits of $\ket{\psi}$ then produces the state $\sigma \coloneqq \I_m \otimes \tr_{n-l+1,\dots,n}[\ketbra{\psi}]$. Since tracing out is unaffected by first performing a computational basis measurement on the last $l$ qubits, we have that
    \begin{align}
        \sigma = \sum_{\bmk \in \{0,1\}^l} &\tr_{m+n-l+1,\dots,m+n}\big[\id^{\otimes (m+n-l)} \otimes \ketbra{\bmk} \nonumber\\
        &\hspace{5pt} (\ketbra{\psi}) \id^{\otimes (m+n-l)}  \otimes \ketbra{\bmk} \big].
    \end{align}  
    We then observe that 
    \begin{align}
        &\id^{\otimes (m+n-l)} \otimes \ketbra{\bmk} \notag \\
       = &P((-1)^{k_1} Z_{m+n-l+1}) \circ \dots \circ P((-1)^{k_l} Z_{m+n}),
    \end{align}
    and so by \lemref{lemma:CSS_proj}, $\left(\id^{\otimes m+n-l} \otimes \ketbra{\bmk}\right)\ket{\psi}$ is a possibly subnormalised pure CSS state ${\sqrt{p_\bmk} \ket{\phi_\bmk} \otimes \ket{\bmk}}$, where $p_{\bmk}$ is the probability of getting the $\ket{\bmk}$ outcome in the computational basis measurement, and $\ket{\phi_\bmk}$ must be a (normalised) CSS state to keep the full state CSS. We thus obtain
	\begin{align}
		\sigma &= \sum_{\bmk \in \{0,1\}^l} \tr_{m+n-l+1, \dots, m+n}\left(p_\bmk \ketbra{\phi_\bmk} \otimes \ketbra{\bmk}\right) \notag \\ &= \sum_{\bmk \in \{0,1\}^l} p_\bmk \ketbra{\phi_\bmk},
	\end{align} 
	which is a CSS state on $(m+n-l)$ qubits. We conclude that, given any pure CSS state on $m+n$ qubits, discarding any $l \le n$ of its final $n$ qubits always produces a CSS state. This is true if and only if discarding any number of qubits is completely CSS-preserving.
\end{proof}

\subsubsection{Omission of the collective Hadamard gate} \label{appx:collective_hadamard_omission}
The collective Hadamard gate promotes CSS circuits to a subset of stabilizer circuits where CSS states play role of stabilizer states. One can reasonably ask why we have excluded the collective Hadamard gate from the construction of CSS circuits. Our justification is that one can conjugate the collective Hadamard gate past any primitive CSS channel and leave another primitive CSS channel behind. This follows from the conjugation relations given by \eqref{eq:conjugation_collective_Hadamard_1} and \eqref{eq:conjugation_collective_Hadamard_2} for completely CSS-preserving unitaries and projective measurements of CSS observables, from the cyclic property of the trace for discarding qubits, and from
\begin{align}
    &H^{\otimes n} \otimes \id^{\otimes m} (\rho \otimes \sigma) H^{\otimes n} \otimes \id^{\otimes m}  \nonumber\\
    =&H^{\otimes (n+m)} (\rho \otimes (H^{\otimes m} \sigma H^{\otimes m})) H^{\otimes (n+m)} 
\end{align} 
for introducing a CSS state on $m$ ancillary qubits to an $n$-qubit system, where we note that $H^{\otimes m} \sigma H^{\otimes m}$ is also a CSS state because $H^{\otimes m}$ is CSS-preserving on $m$ qubits. Therefore, circuits from this wider subset are operationally equivalent to CSS circuits followed by the collective Hadamard gate conditioned upon obtaining certain outputs, and are therefore not more powerful as magic distillation protocols. 

\subsection{CSS code projections}\label{appx:CSS_code_proj}
An $[[n,k]]$ CSS code, where $n \ge 1$ and $n > k \ge 0$, is a vector space $\C$ stabilized by a subgroup $\S$ of $n$-qubit Pauli observables such that $-\id^{\otimes n} \notin \S$ and $\S$ can be generated from $n-k$ independent and commuting \emph{CSS observables} $S_1,\dots,S_{n-k}$, of which none is the identity.

\begin{lemma}
	\label{lemma:CNOT_gaussian_elim}
	Let $\S \coloneqq \langle (-1)^{b_i} S_i \rangle_{i=1,\dots,n-k}$ be the stabilizer group of an $[[n,k]]$ CSS code, where each $S_i$ is a positive CSS observable and each $b_i$ is a binary digit. Then there exists a completely CSS-preserving unitary $U$ such that
	\begin{align}
	U^\dagger [(-1)^{b_i}S_i] U = \begin{cases}
	   Z_{k+i}& \text{ if $S_i$ is $Z$-type,}\\
	   X_{k+i}& \text{ if $S_i$ is $X$-type.}
	\end{cases}
	\end{align}
\end{lemma}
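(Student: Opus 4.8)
The plan is to realise the transformation by Gaussian elimination on the stabilizer generators, carried out entirely with gates from $\G(n)$, and to organise this as an induction on the number $n-k$ of generators. The base case $n=k$ is immediate, with $U = \id$. For the inductive step, consider the first generator $(-1)^{b_1}S_1$ and suppose first that $S_1$ is $Z$-type, i.e.\ $S_1 = Z(\bmv)$ for some $\bmv \in \{0,1\}^n \setminus \{\bmo\}$; the $X$-type case is handled identically after exchanging the roles of $X$ and $Z$ and of control and target in every $\CNOT$. Since $\bmv$ is nonzero it has a nonzero entry, and conjugating by a swap of two qubits — which lies in $\G(n)$, being $\CNOT(a,b)\CNOT(b,a)\CNOT(a,b)$ — we may assume $v_{k+1} = 1$.

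I would then clear the remaining support of $S_1$. For each $t \neq k+1$ with $v_t = 1$, conjugation by $\CNOT(t,k+1)$ sends $v_t \mapsto v_t + v_{k+1} = 0$ while leaving every other entry of $\bmv$ fixed; performing these conjugations in turn gives $S_1 \mapsto \pm Z_{k+1}$, and a further conjugation by $X_{k+1}$ corrects the sign so that $(-1)^{b_1}S_1 \mapsto Z_{k+1}$. The point to verify here — using only the $\CNOT$ conjugation rules together with the fact that $X_i$ and swaps preserve the $X$- and $Z$-type of a Pauli string — is that every other generator is mapped to $(-1)^{b_i'}S_i'$ with $S_i'$ again a positive CSS observable, so that the induction stays within the class of CSS codes.

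Next I would pass to a smaller code. Each transformed $S_i'$ with $i \geq 2$ still commutes with $Z_{k+1}$, since $S_i$ commuted with $S_1$ and conjugation preserves commutation. Hence any $X$-type $S_i'$ has no support on qubit $k+1$, and any $Z$-type $S_i'$ may be multiplied by $Z_{k+1}$ — which merely re-chooses the generating set without changing the group — to remove its support on qubit $k+1$. After this step $S_2', \dots, S_{n-k}'$ act trivially on qubit $k+1$, so they generate an $[[n-1,k]]$ CSS code on the remaining $n-1$ qubits. By the inductive hypothesis there is a completely CSS-preserving unitary on those qubits sending the $i$-th generator to $Z_{k+i}$ or $X_{k+i}$; because qubit $k+1$ has been deleted, the relabelling of the remaining qubits shifts indices by exactly one, so in the original labelling the target qubit of $S_i'$ is $k+i$, as needed. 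Viewing this unitary as an element of $\G(n)$ acting as the identity on qubit $k+1$, it commutes with $Z_{k+1}$ and thus fixes the image of $S_1$. The full composite $U$ of the swap, the $\CNOT$s, $X_{k+1}$, and this smaller unitary lies in $\G(n)$, hence is completely CSS-preserving by \lemref{lemma:CCSSP_vs_CSSP}, and it achieves the stated normal form.

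The main obstacle is bookkeeping rather than mathematical depth: one must check that each elementary conjugation preserves the $X$/$Z$-type of all remaining generators (so the induction remains an induction over CSS codes) and keep track of the single-step index shift incurred when qubit $k+1$ is deleted. The $X$-type case for $S_1$ should be written out separately, since $\CNOT$ acts on $X$- and $Z$-type strings by mutually transposed linear maps and so the roles of control and target are interchanged throughout; no difficulty arises from the signs, as the $X_i$ and $Z_i$ gates each flip exactly the sign of the anticommuting generator and may be applied independently.
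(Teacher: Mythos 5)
Your inductive route is genuinely different from the paper's. The paper reorders the generators so that all $Z$-type ones come first, performs Gauss--Jordan elimination with $\CNOT$ conjugations on the whole $Z$-block $M_Z$ at once (pure row operations on a full-column-rank matrix, hence never a change of generating set), and then deduces from commutation that the conjugated $X$-type generators are already trivial on the first $r$ qubits, so a second Gauss--Jordan round on qubits $r+1,\dots,n$ finishes the job. You instead peel off one generator at a time and recurse on an $[[n-1,k]]$ sub-code. Both hinge on the same two facts ($\CNOT$ implements row addition on each Pauli type independently, and commutation with $Z_{k+1}$ kills the qubit-$(k+1)$ support of the $X$-sector), but the paper's version never has to modify a generating set, which is exactly where your write-up has a gap.

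The gap is in the claim that the composite $U$ \emph{``achieves the stated normal form.''} The lemma asks for $U^\dagger[(-1)^{b_i}S_i]U$ to equal $Z_{k+i}$ or $X_{k+i}$ for the \emph{original} generators. But in the inductive step you explicitly re-choose the generating set: any $Z$-type $S_i'$ with support on qubit $k+1$ is replaced by $S_i'Z_{k+1}$ before the inductive hypothesis is invoked. The unitary $U_2$ supplied by induction therefore normalizes the \emph{re-chosen} generators, and since $U_2$ fixes $Z_{k+1}$, one finds for those indices $i$
\begin{align}
U^\dagger[(-1)^{b_i}S_i]U \;=\; U_2^\dagger[(-1)^{b'_i}S'_i]U_2 \;=\; Z_{k+1}\,Z_{k+i},
\end{align}
which is not the required $Z_{k+i}$. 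The conclusion follows only for the re-chosen set, not for the set actually named in the lemma. The fix is routine but must be said: after applying $U_2$, conjugate further by $\prod_i \CNOT(k+1,k+i)$ over exactly the indices $i$ where a $Z_{k+1}$ factor was absorbed; this maps $Z_{k+1}Z_{k+i}\mapsto Z_{k+i}$, fixes $Z_{k+1}$, and leaves every $X_{k+j}$ with $j\ge 2$ untouched, so it closes the gap and keeps $U\in\G(n)$. (Alternatively, one could avoid the regenerator move entirely by clearing qubit $k+1$ from the other $Z$-type generators with conjugations $\CNOT(k+1,j)$ before recursing, which is closer in spirit to the paper's pure-row-operation argument.) The symmetric cleanup with $X_{k+1}$ is needed in the case where $S_1$ is $X$-type. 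Apart from this one missing step, the inductive structure, the commutation argument for deleting qubit $k+1$ from the $X$-sector, the index shift, and the appeal to \lemref{lemma:CCSSP_vs_CSSP} are all sound.
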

\begin{proof}
	Let $r$ be the number of $Z$-type generators for $\S$. We first construct the completely CSS-preserving unitary $U$ for the case where all $Z$-type generators of $\S$ appear before $X$-type ones, i.e., 
	\begin{align}
	S_i = \begin{cases}
	Z(\bmu_i)& \text{ for } 1 \le i \le r\\
	X(\bmv_i)& \text{ for } r < i \le n-k
	\end{cases} \label{eq:local_2}
	\end{align}
	where each $\bmu_i, \bmv_i$ is a non-zero $n$-dimensional binary vector. 
	
	Let us define $\Z$ as the set of $Z$-type generators for $\S$ \emph{without their signs}, i.e., $\Z \coloneqq \{Z(\bmu_1),\dots,Z(\bmu_r)\}$. We now prove that there exists a sequence of $\CNOT$ operations that transforms $Z(\bmu_i)$ to $Z_i$ for all $1 \le i \le r$.  
	
	We observe that $\Z$ is a subset of positive $Z$-type observables on $n$ qubits, which form an $n$-dimensional vector space $V$ over the field $\mathbb{F}_2$. Choosing the basis $\{Z_1,\dots,Z_n\}$ for $V$, we can simply write $Z(\bmu_i)$ as $\bmu_i$, and we further have that $\{\bmu_1,\dots,\bmu_r\}$ are linearly independent. Therefore, the matrix $M_Z$ formed by taking members of $\Z$ as columns has rank $r$. Using Gauss-Jordan elimination, we can convert $M_Z$ into its unique reduced row echelon form $R_Z$,     
	\begin{align}
	M_Z \coloneqq \begin{bmatrix}
	\vert & \dots& \vert \\
	\bmu_1 & \dots& \bmu_r \\
	\vert & \dots& \vert
	\end{bmatrix} \mapsto R_Z \coloneqq \begin{bmatrix}
	I_{r,r}\\
	0_{n-r,r}
	\end{bmatrix}, \label{eq:Gauss_Jordan_Z}
	\end{align}
 	where $I_{r,r}$ is an $r \times r$ identity matrix while $0_{n-r,r}$ is a $(n-r) \times r$ null matrix. 
 	
 	On vector spaces over $\mathbb{F}_2$, Gauss-Jordan elimination consists of row swaps and additions. We now show how both can be done on any matrix $M$ whose columns are elements of $V$ in the basis $\{Z_1,\dots,Z_n\}$ using $\CNOT$-gates: 
	\begin{enumerate}
		\item \textit{Swapping rows $j$ and $l$.} This corresponds to swapping qubits $j$ and $l$, which is carried out by performing $\CNOT(l,j) \CNOT(j,l) \CNOT(a,b)$ on each positive $Z$-type observable forming a column in $M$.
		
		\item \textit{Adding row $j$ to row $l$.} The action of $CNOT(j,l)$ on $Z_m$ is
		\begin{align}
		\CNOT(l,j) Z_m \CNOT(l,j) = \begin{cases}
		Z_j Z_l& \text{ for } m = j,\\
		Z_m& \text{ otherwise.}
		\end{cases}
		\label{eq:CNOT_workings}
		\end{align}
		Therefore, given any vector $\bmu$ in $V$, we have that 
		\begin{align}
		\CNOT(l,j) Z(\bmu) \CNOT_{l,j} &= Z(\bmu + u_j \bme_l) 
		\end{align} 
		where arithmetic is modulo 2 and $\bme_l$ gives co-ordinates for the $l$th basis vector of $V$. In words, $\CNOT(j,l)$ adds the $j$th component of $\bmu$ to the $l$th. Therefore, performing $\CNOT(l,j)$ on each positive $Z$-type observable forming a column in $M$ would add the $j$th row of $M$ to the $l$th row.   
	\end{enumerate} 

    We conclude that there exists a sequence of $\CNOT$-gates that, when performed on each element of $\Z$, accomplishes the Gauss-Jordan elimination in \eqref{eq:Gauss_Jordan_Z}. Denoting this sequence of $\CNOT$-gates by the unitary $U_Z$, we have that $U_Z (Z(\bmu_i)) U^\dagger_Z = Z_i$ for all $1 \le i \le r$. 
	
	We next consider $X$-type generators for $\S$ \emph{without their signs}. The action of $\CNOT(j,l)$ on $X_m$ is
	\begin{align}
	\CNOT(j,l) X_m \CNOT(j,l) = 
	\begin{cases}
	X_jX_l& \text{ for } m = j,\\
	X_m& \text{ otherwise,}
	\end{cases}
	\label{eq:CNOT_workings_Z}
	\end{align}
	so $U_Z$ only transforms positive $X$-type observables into other positive $X$-type observables. In other words, we can find non-zero $n$-bit strings $\{\bmv'_{r+1},\dots,\bmv'_{n-k}\}$ such that $U_Z(X(\bmv_i))U^\dagger_Z = X(\bmv'_i)$ for all $r < i \le n-k$. 
	
	However, since the $Z$-type generators of $\S$ commute with the $X$-type generators, $X(\bmv'_i)$ must commute with $Z_1,\dots,Z_r$. Therefore, $X(\bmv'_i)$ must act trivially on qubits 1 through $r$, so the first $r$ bits of $\bmv'_i$ must be 0. Therefore $X(\bmv'_i) = \id^{\otimes r} \otimes X(\bmv''_i)$ forall $r < i \le n-k$, where $\bmv''_i$ are the last $(n-r)$ bits of $\bmv'_i$.
	
	Everything we have done for $\mathcal{Z}$ can now be repeated for $\mathcal{X} \coloneqq \{X(\bmv''_{r+1}),\dots,X(\bmv''_{n-k})\}$ on qubits $r+1$ through $n$. The only thing that needs to be checked is that row addition in any matrix whose columns are elements from the vector space of positive $(n-r)$-qubit $X$-type observables can be performed by $\CNOT$-gates. This can be confirmed using \eqref{eq:CNOT_workings_Z}, which implies
	\begin{align}
		\CNOT(j,l) X(\bmv) \CNOT(j,l) &= X(\bmv + v_j \bme_l)
	\end{align}
	where arithmetic is modulo 2 so $\CNOT(j,l)$ adds the $j$th component of $\bmv$ to the $l$th.
	
	We conclude that there also exists a sequence $U_X$ of $\CNOT$-gates such that $U_X(X(\bmv'_i))U^\dagger_X = X_i$ for all ${r < i \le n-k}$. Furthermore, $U_X$ acts trivially on qubits 1 through $r$, which implies $U_X(Z_i)U^\dagger_X = Z_i$ for all $1 \le i \le r$. Defining $\U_X \coloneqq U_X(\cdot)U^\dagger_X$ and $\U_Z \coloneqq U_Z(\cdot)U^\dagger_Z$, we therefore have
	\begin{align}
	&(\U_Z \circ \U_X) [(-1)^{b_i} S_i]  \notag \\ 
 =&\begin{cases}
	(-1)^{b_i} X_i& \text{ for } 1 \le i \le r\\
	(-1)^{b_i} Z_i& \text{ for } r < i \le n-k.
	\end{cases}  
	\end{align}
	We now define unitaries $U_C$ and $U_{\text{SWAP}}$, which respectively remove the signs from the generators of $\S$ and moves qubits 1 through $n-k$ to $k+1$ through $n$,
    \begin{align}
        U_C &\coloneqq \left[\prod_{i=r+1}^{n-k} X_i^{b_i}\right] \left[ \prod_{i=1}^r Z_i^{b_i}\right], \notag \\ 
        U_{\text{SWAP}} &\coloneqq \prod_{i=0}^{n-k-1} \text{SWAP}(n-i,n-k-i),
    \end{align}
    where $\text{SWAP}(i,j)\coloneqq\CNOT(i,j)\CNOT(j,i)\CNOT(i,j)$ swaps qubits $i$ and $j$. Defining ${\U_C \coloneqq U_C (\cdot) U_C}$ and $U_{\text{SWAP}} \coloneqq U_{\text{SWAP}}(\cdot)U_{\text{SWAP}}$, we obtain
	\begin{align}
	&(\U_{\text{SWAP}} \circ \U_C \circ \U_Z \circ \U_X) [(-1)^{b_i} S_i]  \notag \\ =& \begin{cases}
	X_i& \text{ for } 1 \le i \le r\\
	Z_i& \text{ for } r+1 \le i \le n-k.
	\end{cases}  
	\end{align}
	Since $U^\dagger \coloneqq U_C U_Z U_X$ is formed from $\CNOT$-, single-qubit $X$- and $Z$- gates, it is a completely CSS-preserving unitary that accomplishes the Lemma's claim for the ordering of $S_i$ given by \eqref{eq:local_2}. 
	
	By appropriate swaps among the last $n-k$ qubits, which we have seen is completely CSS-preserving, we can construct a completely CSS-preserving unitary accomplishing the Lemma's claim for any ordering of $S_i$.
\end{proof} 
As an immediate consequence of \lemref{lemma:CNOT_gaussian_elim},
\begin{corollary}\label{corollary:completely_CSS_decoding}
	Every $[[n,k]]$ CSS code has a completely CSS-preserving encoding unitary.
\end{corollary}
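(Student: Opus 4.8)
The plan is to translate Lemma~\ref{lemma:CNOT_gaussian_elim}, which is phrased at the level of stabilizer \emph{generators}, into the equivalent statement about \emph{codespaces}; the completely CSS-preserving unitary the lemma produces will then already be the desired encoding unitary, so nothing new needs to be built.

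First I would put the stabilizer group of the given $[[n,k]]$ CSS code $\C$ into the canonical form required by the lemma: choose independent generators $(-1)^{b_i}S_i$, $i=1,\dots,n-k$, with each $S_i$ a positive ($X$- or $Z$-type) CSS observable and each $b_i\in\{0,1\}$. Applying Lemma~\ref{lemma:CNOT_gaussian_elim} then yields a completely CSS-preserving unitary $U$ (equivalently $U\in\G(n)$, by Lemma~\ref{lemma:CCSSP_vs_CSSP}) with $U^\dagger[(-1)^{b_i}S_i]U = T_i$, where $T_i = Z_{k+i}$ when $S_i$ is $Z$-type and $T_i = X_{k+i}$ when $S_i$ is $X$-type.

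Next I would introduce the reference state $\ket{\mathrm{ref}}\coloneqq\bigotimes_{i=1}^{n-k}\ket{r_i}$ on qubits $k+1,\dots,n$, with $\ket{r_i}=\ket{0}$ if $S_i$ is $Z$-type and $\ket{r_i}=\ket{+}$ if $S_i$ is $X$-type; this is a CSS state, being stabilized by $\langle Z_{k+i}\text{ or }X_{k+i}\rangle_i$. The joint $+1$-eigenspace of $\{T_i\}_{i=1}^{n-k}$ is precisely the set of states $\ket\psi\otimes\ket{\mathrm{ref}}$ with $\ket\psi$ arbitrary on the first $k$ qubits, i.e.\ the ``unencoded'' register. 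Since $(-1)^{b_i}S_i = U T_i U^\dagger$, conjugating by $U$ carries this eigenspace onto the joint $+1$-eigenspace of $\{(-1)^{b_i}S_i\}_i$, which is $\C$ by definition. Hence $U$ maps $\ket\psi\otimes\ket{\mathrm{ref}}$ onto $\C$ (a bijection, since $U$ is unitary and both subspaces have dimension $2^k$), so $U$ is a completely CSS-preserving encoding unitary for $\C$ with CSS ancilla preparation $\ket{\mathrm{ref}}$; if the decoding direction is wanted instead, one uses that $\G(n)$ is a group, so $U^\dagger$ is also completely CSS-preserving.

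Since all the substantive work is already done in Lemma~\ref{lemma:CNOT_gaussian_elim}, I do not expect a genuine obstacle; the only point worth flagging is that the ancilla is prepared in $\ket{\mathrm{ref}}$ rather than in $\ket{0}^{\otimes(n-k)}$. This is forced: every generator of $\G(n)$ manifestly preserves the $X$-type/$Z$-type character of a Pauli under conjugation, so a completely CSS-preserving unitary can never convert a $\ket{+}$-type ancilla qubit into a $\ket{0}$-type one, and the $X$-type stabilizer generators of $\C$ necessarily pin the matching ancilla qubits to $\ket{+}$. This is harmless, since $\ket{\mathrm{ref}}$ is itself a CSS state.
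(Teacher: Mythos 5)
Your proof is correct and takes essentially the same route as the paper: both invoke Lemma~\ref{lemma:CNOT_gaussian_elim} to produce a completely CSS-preserving $U$, and both identify the ancilla reference state as $\ket{0}$ on qubits whose generator is $Z$-type and $\ket{+}$ on those whose generator is $X$-type, so that conjugating by $U$ carries the unencoded subspace onto $\C$. The paper states this in one line (Eq.~\eqref{eq:ur_logical_basis_states}); you additionally spell out the eigenspace/bijection argument and the observation that the $\ket{+}$ ancillas are forced, which are helpful but not substantively different.
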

\begin{proof}
 Let $\C$ be an $[[n,k]]$ CSS code whose stabilizer group is generated by CSS observables $S_1,\dots,S_{n-k}$, and let $\bms$ be a $k$-bit string (if $k=0$, then $\bms$ is the empty string). By \lemref{lemma:CNOT_gaussian_elim}, there exists a completely CSS-preserving unitary $U$ such that
\begin{align} 
	\ket{\bms_\C} &\coloneqq U\left(\ket{\bms} \bigotimes_{i=1}^{n-k} \ket{\phi_i}\right) \notag\\ \text{ where } \ket{\phi_i} &\coloneqq \begin{cases}
		\ket{0} &\text{ if $S_i$ is $Z$-type}\\ 
		\ket{+} &\text{ if $S_i$ is $X$-type.}
	\end{cases} \label{eq:ur_logical_basis_states} 
\end{align}
In words, $U$ can encode any computational basis state $\ket{\bms}$ on $k$ physical qubits (if $k=0$, then $\ket{\bms} = 1$) as a \emph{CSS logical basis state} $\ket{\bms_\C}$ in $\C$.
\end{proof} 

With Corollary \ref{corollary:completely_CSS_decoding} in hand, we can show that every trace-preserving CSS code projection can be executed as a CSS circuit, which leads to the following Lemma from the main text:
\ntoonestoch*
\begin{proof}
 Let $\C$ be an $[[n,k]]$ CSS code generated by CSS observables $S_1,\dots,S_{n-k}$.  We can then define a quantum channel $\F$ that carries out the following sequence of primitive CSS channels:
\begin{enumerate}
	\item Projectively measure $S_1,\dots,S_{n-k}$, which is equivalent to the syndrome measurement of $\C$. 
	\item If the no-error syndrome is obtained, decode onto the first $k$ qubit using a completely CSS-preserving decoding unitary, which always exists in accordance with Corollary \ref{corollary:completely_CSS_decoding}, and discard the final $n-k$ qubits.
	\item Otherwise, discard all qubits and prepare a $k$-qubit CSS state $\sigma$. 
\end{enumerate}
Since $\F$ is a CSS circuit by construction, it is completely CSS-preserving. By Corollary \ref{corollary:tagging}, we can convert $\F$ into a trace-preserving code projection $\E$ for $\C$ by distinguishing the output of the no-error syndrome from those of all other syndrome using a single-qubit classical register, i.e. by preparing an ancillary qubit in the state $\ket{0}$ if we obtain the no-error syndrome output, and in the state $\ket{1}$ otherwise. Corollary \ref{corollary:tagging} confirms that $\E$ can also be executed as a CSS circuit and is therefore completely CSS-preserving. We conclude by \lemref{lemma:CCSSP_powers} that $\E$ is stochastically represented.  
\end{proof}

\section{\uppercase{Entropic constraints on completely CSS-preserving protocols}}
\label{appx:general_bound}

In this section we prove \thmref{thrm:general_qubit_bound}, which gives entropic constraints on generic completely CSS-preserving protocols for qubits.

We first define the sets of positively-represented and real-represented quantum states in any Wigner representation $W$ as
\begin{align}
    W^+ &\coloneqq \{\rho :  W_\rho(\bmu) >0 , W_\rho(\bmu)\in \mathbb{R} , \forall \bmu \in \P\}, \\
    W^\mathbb{R} &\coloneqq \{\rho :  W_\rho(\bmu)\in \mathbb{R} , \forall \bmu \in \P\},
\end{align}

We then have the following Lemma, which is a generalisation of Theorem~11 of Ref.~\cite{koukoulekidis2022constraints}. 
\begin{lemma} \label{lemma:nick_restatement}
Let $\rho$ and $\tau$ be states of a $d$-dimensional qudit such that $\rho \in W^{\mathbb{R}}$ and $\tau \in W^+$ in some generalised Gross's Wigner representation $W$. Furthermore, let ${\E : \B(\H_{d}) \mapsto \B(\H_{d'})}$ be a stochastically represented channel. Then the $\alpha$-R\'{e}nyi divergence $D_\alpha (\cdot || \cdot)$ is well-defined and satisfies the following properties for $\alpha\in \A$:
\begin{enumerate}
    \item \label{property:D_nonnegative} $D_\alpha(W_\rho || W_\tau) \ge 0$. 
    \item \label{property:D_zero} $D_\alpha(W_\rho || W_\tau)=0$ if and only if $\rho =\tau$.
    \item \label{property:D_multiplicative} $D_\alpha (W_{\rho^{\otimes n}} ||W_{\tau^{\otimes n}}) = n D_\alpha(W_\rho || W_\tau) $ for all $n\in \mathbb{N}$.
    \item \label{property:D_dataprocessing} $D_\alpha(W_\rho|| W_\tau) \ge D_\alpha(W_{\E(\rho)}|| W_{\E(\tau)}) $ for all stochastically represented $\E$ such that $\E(\tau) \in \W^+$.
\end{enumerate}
\end{lemma}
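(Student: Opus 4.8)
The statement is a qudit generalisation of Theorem~11 of Ref.~\cite{koukoulekidis2022constraints}, so the plan is to check that the argument there invokes only the defining properties (A1)--(A4) of the phase point operators---hence transfers to any generalised Gross's Wigner representation---and to supply the one genuinely new ingredient, namely that a stochastically represented channel acts as a stochastic matrix on the representation (already established for qubits in \appref{appx:wigner_channel_rep}). Before any of this I would record why $D_\alpha(W_\rho\|W_\tau)$ in \eqref{eq:D} is well-defined: for $\alpha=\tfrac{2a}{2b-1}\in\A$ one reads $w^\alpha$ as $\bigl(w^{1/(2b-1)}\bigr)^{2a}=|w|^\alpha\ge 0$, a real number for every real $w$; since $\rho\in W^{\mathbb{R}}$ each $W_\rho(\bmu)$ is real and since $\tau\in W^+$ each $W_\tau(\bmu)$ is strictly positive, so each summand $W_\rho(\bmu)^\alpha W_\tau(\bmu)^{1-\alpha}$ is a well-defined non-negative real, and the total is strictly positive because normalisation (property (A3)) forbids $W_\rho$ from vanishing identically. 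The $\alpha=\infty$ case is the limit $D_\infty(W_\rho\|W_\tau)=\log\max_{\bmu}|W_\rho(\bmu)|/W_\tau(\bmu)$, treated by the obvious variants.

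For Properties~\ref{property:D_nonnegative} and~\ref{property:D_zero} I would write $\sum_{\bmu}W_\rho(\bmu)^\alpha W_\tau(\bmu)^{1-\alpha}=\sum_{\bmu}W_\tau(\bmu)\,\bigl|W_\rho(\bmu)/W_\tau(\bmu)\bigr|^\alpha$ and apply Jensen's inequality with the map $x\mapsto|x|^\alpha$, which is convex on all of $\mathbb{R}$ for $\alpha\ge 1$, using that $W_\tau$ is a genuine full-support probability distribution: this gives $\sum_{\bmu}W_\tau(\bmu)|W_\rho(\bmu)/W_\tau(\bmu)|^\alpha\ge\bigl|\sum_{\bmu}W_\rho(\bmu)\bigr|^\alpha=1$, hence $D_\alpha\ge 0$. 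For $\alpha>1$ the map $|x|^\alpha$ is strictly convex, so equality forces $W_\rho(\bmu)/W_\tau(\bmu)$ to be constant over $\bmu$; normalisation pins that constant to $1$, i.e. $W_\rho=W_\tau$, and informational completeness of the representation---a consequence of the orthogonality relation (A2)---then upgrades this to $\rho=\tau$. For $\alpha=\infty$, $\max_{\bmu}|W_\rho(\bmu)|/W_\tau(\bmu)\ge 1$ because $\sum_{\bmu}|W_\rho(\bmu)|\ge\sum_{\bmu}W_\rho(\bmu)=1=\sum_{\bmu}W_\tau(\bmu)$, with equality throughout forcing $W_\rho=W_\tau$.

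Property~\ref{property:D_multiplicative} follows from the factorisation of the representation over tensor products---a consequence of (A1), equivalently of \eqref{eq:parallel_respect}---which gives $W_{\rho^{\otimes n}}=W_\rho^{\otimes n}$ and $W_{\tau^{\otimes n}}=W_\tau^{\otimes n}$, so that $\sum_{\bmu}W_{\rho^{\otimes n}}(\bmu)^\alpha W_{\tau^{\otimes n}}(\bmu)^{1-\alpha}=\bigl(\sum_{\bmv}W_\rho(\bmv)^\alpha W_\tau(\bmv)^{1-\alpha}\bigr)^n$, and $\tfrac{1}{\alpha-1}\log$ turns the $n$-th power into a factor $n$ (identically for $\alpha=\infty$). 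Property~\ref{property:D_dataprocessing} is where the representation-specific structure enters: by hypothesis $\E$ is stochastically represented, so $W_\E$ is a column-stochastic matrix (non-negative entries, columns summing to $1$ by the normalisation identity, the analogue of \eqref{eq:W_normalization}), and the induced input--output relation $W_{\E(\rho)}=W_\E W_\rho$, $W_{\E(\tau)}=W_\E W_\tau$ (the analogue of \eqref{eq:W_E_in_action}) exhibits $(W_\rho,W_\tau)\succ(W_{\E(\rho)},W_{\E(\tau)})$ as a generalised relative majorization, the extra hypothesis $\E(\tau)\in W^+$ ensuring that $W_{\E(\tau)}$ is a legitimate strictly positive reference distribution. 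Applying \thmref{thm:koukouD_alpha_monotone} then yields $D_\alpha(W_\rho\|W_\tau)\ge D_\alpha(W_{\E(\rho)}\|W_{\E(\tau)})$ for all $\alpha\in\A$.

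I expect the only delicate points to be bookkeeping: pinning down the convention $w^\alpha=|w|^\alpha$ that simultaneously renders \eqref{eq:D} real-valued on quasiprobabilities \emph{and} lets Jensen's inequality go through, and verifying that the positivity hypotheses $\tau\in W^+$ and $\E(\tau)\in W^+$ are precisely what is needed for the second arguments of the divergences to be genuine full-support probability distributions. The substantive inequality, Property~\ref{property:D_dataprocessing}, is then essentially a direct appeal to the already-established monotonicity result \thmref{thm:koukouD_alpha_monotone} applied to the stochastic matrix $W_\E$.
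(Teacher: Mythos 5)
Your proof is correct and takes essentially the same approach as the paper. The paper's own argument is a one-liner that records $W_\rho(\bmu)^\alpha\ge 0$ for $\alpha\in\A$ (which is your observation that $w^\alpha=|w|^\alpha$ in this convention) and then defers the four properties to the proof of Theorem~11 in Ref.~\cite{koukoulekidis2022constraints}; the steps you spell out---Jensen's inequality with the convex map $x\mapsto|x|^\alpha$ for non-negativity and its equality condition, tensor factorisation of the representation for additivity, and an appeal to \thmref{thm:koukouD_alpha_monotone} for the data-processing inequality---are precisely the ingredients that deferred proof rests on, so the two routes coincide.
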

\begin{proof} 
In general, $W_\rho$ is a quasiprobability distribution, but for $\alpha \in \A $ we see that $W_\rho(\bmu)^\alpha\ge 0$ for all $\bmu \in \P$. Therefore $D_\alpha (W_\rho || W_\tau)$ is always well-defined and real-valued. The proofs of \ref{property:D_nonnegative}-\ref{property:D_dataprocessing} are then identical to the proof given for Theorem~11 in Ref.\cite{koukoulekidis2022constraints}. 
\end{proof}

Importantly for our purposes, this abstract but general result applies to input and output systems of any (even or odd) finite dimension. With this in hand, we can now give a proof of \thmref{thrm:general_qubit_bound}, which we restate for clarity:

\generalBound*
\begin{proof}
Since $\tau$ and $\tau'$ are in the interior of $\D_{css}$, they are positively represented. Moreover, we have established in \thmref{theorem:J_CSS_stoch_rep} that every trace-preserving and completely CSS-preserving operation is stochasically represented. The results of \lemref{lemma:nick_restatement} thus apply, from which properties \ref{property:D_multiplicative} and \ref{property:D_dataprocessing} combine to give the Lemma result.
\end{proof}

\section{\uppercase{Entropic constraints on code projection protocols}}

\subsection{Structure of the necessary conditions}
\label{appx:structure_necessary_condition}

We recall from \secref{sec:structure_CSS} that if there exists an $n$-to-$k$ CSS code projection that achieves the distillation ${\rho^{\otimes n} \mapsto \rho'}$ with acceptance probability $p$, then 
\begin{align}
    \Delta D_\alpha \ge 0
\end{align}
for all $\alpha\in \A$, where
\begin{align}
   \Delta D_\alpha
   =n D_\alpha(W_\rho ||W_{\frac{\id}{2}}) - D_\alpha\left(W_{\rho_p} \, || \, W_{\tau_{n,k}}\right).
   \label{eq:Delta_alpha_n}
\end{align}
We recall the following output states from the system and reference processes,
\begin{align}
    \rho_p &\coloneqq p\rho' \otimes\ketbra{0}+(1-p)\sigma \otimes\ketbra{1} \text{ and } \\
    \tau_{n,k} &\coloneqq 2^{k-n} \frac{\id^{\otimes k}}{2^k} \otimes \ketbra{0}+(1-2^{k-n})\sigma \otimes \ketbra{1},
\end{align}
where $\sigma$ is the CSS state prepared after a failed run.

We now prove the following lemma, which enables us to simplify the expression of our constraint functions $\Delta D_\alpha$. To this end, we will find it useful to first introduce the general mean $Q_\alpha(\bmw || \bmr)$ on a quasiprobabilty distribution ${\bmw \coloneqq (w_1, \dots w_N)^T}$ and probability distribution ${\bmr \coloneqq (r_1, \dots r_N)^T}$, 
\begin{align}
Q_\alpha(\bmw || \bmr)\coloneqq 2^{(\alpha -1) D_\alpha(\bmw || \bmr)} = \sum_{i=1}^N w_i^\alpha r_i^{1-\alpha}.
\end{align}

   \begin{lemma}\label{lemma:general_mean} 
   	Consider the following pairs of $k$-rebit states $(\rho_0, \rho_1)$ and $(\tau_0,\tau_1) \in \mathrm{Int}(\D_{css})$. Moreover let $\psi_0$ and $\psi_1$ be two distinct computational basis states on $m$ rebits, and let  $\{p_i\}_{i=0}^1$, $\{q_i\}_{i=0}^1$ be valid probability distributions. We then have the identity
    \begin{align}
    Q_\alpha(  W_{\sum_{i} p_i \rho_i \otimes \psi_i} ||  &W_{\sum_i q_i \tau_i \otimes \psi_i} ) = \notag \\
    &\sum_{i \in \{ 0 , 1\} }p_i^\alpha q_i^{1-\alpha} Q_\alpha(W_{\rho_i} || W_{\tau_i} ) ,
    \label{eq:Q_alpha_equality}
\end{align}
which in turn implies the inequality 
    \begin{align}
    Q_\alpha(  W_{\sum_{i} p_i \rho_i \otimes \psi_i} ||  &W_{\sum_i q_i \tau_i \otimes \psi_i} )\ge \notag \\
    &p_i^\alpha q_i^{1-\alpha} Q_\alpha(W_{\rho_i} || W_{\tau_i} ) ,  
    \label{eq:Q_alpha_inequality}
\end{align}
for each $i \in \{0,1\}$.
\end{lemma}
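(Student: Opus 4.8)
The plan is to reduce the claim to two structural properties of the representation $W$: linearity in the state, $W_{\sum_i c_i\rho_i}=\sum_i c_i W_{\rho_i}$ (immediate from the definition \eqref{eq:W_rho}), and factorization over tensor products, $W_{\rho\otimes\sigma}=W_\rho\otimes W_\sigma$ (property \ref{A_property:qubit_phase_point_op_tensor_product}). Writing a point of the joint $(k+m)$-rebit phase space as $\bmu\oplus\bmv$, with $\bmu$ in the $k$-rebit and $\bmv$ in the $m$-rebit phase space, these two facts give $W_{\sum_i p_i\rho_i\otimes\psi_i}(\bmu\oplus\bmv)=\sum_{i\in\{0,1\}}p_i\,W_{\rho_i}(\bmu)\,W_{\psi_i}(\bmv)$, and the analogous expression for the reference distribution with $p_i\to q_i$ and $\rho_i\to\tau_i$.

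The single fact I would then single out is that $W_{\psi_0}$ and $W_{\psi_1}$ have \emph{disjoint supports}. Using the single-qubit phase point operators in \eqref{eq:single_qubit_A}, one checks that $W_{\ketbra{0}}$ is supported on $\{(0,v_z)\}$ and $W_{\ketbra{1}}$ on $\{(1,v_z)\}$; combined with the factorization property, the Wigner function of an $m$-rebit computational basis state $\ket{\bmj}$ is supported exactly on the set of phase-space points whose $x$-sector equals $\bmj$. Hence distinct basis states are supported on disjoint regions $S_0\coloneqq\supp W_{\psi_0}$ and $S_1\coloneqq\supp W_{\psi_1}$, and each $W_{\psi_i}$ is a genuine probability distribution that sums to $1$ over $S_i$.

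With this in hand I would obtain \eqref{eq:Q_alpha_equality} by a direct manipulation of $Q_\alpha(\bmw||\bmr)=\sum_i w_i^\alpha r_i^{1-\alpha}$: split the phase-space sum over $\bmv$ according to whether $\bmv\in S_0$ or $\bmv\in S_1$, so that on each $S_i$ only the $i$-th term of the inner sums survives; since $p_i\ge0$, $q_i>0$ and $W_{\psi_i}(\bmv)\ge0$, the scalar prefactors pull out of the $\alpha$- and $(1-\alpha)$-powers, leaving $\sum_i p_i^\alpha q_i^{1-\alpha}\left(\sum_{\bmu}W_{\rho_i}(\bmu)^\alpha W_{\tau_i}(\bmu)^{1-\alpha}\right)\left(\sum_{\bmv\in S_i}W_{\psi_i}(\bmv)\right)$, where the last factor is $1$ by normalisation and the middle factor is exactly $Q_\alpha(W_{\rho_i}||W_{\tau_i})$. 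The inequality \eqref{eq:Q_alpha_inequality} is then immediate: every summand in \eqref{eq:Q_alpha_equality} is nonnegative, because $p_i^\alpha\ge0$, $q_i^{1-\alpha}>0$, and $Q_\alpha(W_{\rho_i}||W_{\tau_i})=\sum_\bmu W_{\rho_i}(\bmu)^\alpha W_{\tau_i}(\bmu)^{1-\alpha}\ge0$ since $W_{\rho_i}(\bmu)^\alpha\ge0$ for $\alpha\in\A$ (Lemma~\ref{lemma:nick_restatement}) and $W_{\tau_i}(\bmu)^{1-\alpha}>0$, so discarding all but the $i$-th term yields the bound.

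I expect the main obstacle to be not conceptual but careful bookkeeping in two spots. First, justifying that for $\alpha=2a/(2b-1)\in\A$ and a nonnegative scalar $c$ one has $(cx)^\alpha=c^\alpha x^\alpha$ for \emph{all} real $x$, including the negative values that $W_{\rho_i}$ may legitimately take; this rests on multiplicativity of the $(2b-1)$-th root over $\mathbb{R}$ together with $c^{1/(2b-1)}\ge0$, and it is the same mechanism that makes $x^\alpha\ge0$. Second, ensuring all denominators remain strictly positive: this uses that $\tau_i$ lies in the interior of $\D_{css}$ so $W_{\tau_i}>0$ pointwise, that the reference weights $q_i$ are strictly positive, and that $W_{\psi_i}>0$ on its support $S_i$; the limiting case $\alpha=\infty$ is then handled after the fact by continuity of $D_\alpha$.
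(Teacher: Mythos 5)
Your proof is correct and takes essentially the same route as the paper: both establish that $W_{\psi_0}$ and $W_{\psi_1}$ have disjoint supports, split the sum defining $Q_\alpha$ over those support sets so that only one term of each convex combination survives, pull out the scalar prefactors $p_i^\alpha q_i^{1-\alpha}$ (legitimate for $\alpha\in\A$), and use normalisation of $W_{\psi_i}$ to collapse the $\bmv$-sum to $1$. The only minor deviation is in how disjoint supports are justified---the paper argues abstractly from $\tr[\psi_0\psi_1]=0$ together with non-negativity of CSS Wigner functions (Eq.~\eqref{eq:overlappp}), while you compute the support of $W_{\ketbra{\bmj}}$ explicitly from the single-qubit phase point operators and factorization---but both are valid.
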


\begin{proof}
Since $\psi_0$ and $\psi_1$ are orthogonal, by \eqref{eq:A_orthogonal} this implies
\begin{equation}
\label{eq:overlappp}
\tr[\psi_0 \psi_1] = 2^m \sum_{\bmu \in \P_m} W_{\psi_0}(\bmu)W_{\psi_1}(\bmu) =0.
\end{equation}

As $\psi_i \in \D_{css}$ for each $i\in \{0,1\}$, we must have $W_{\psi_i}(\bmu) \ge 0$. We thus conclude from \eqref{eq:overlappp} that 
\begin{align}
    V_0 &\coloneqq \supp(W_{\psi_0}) \subseteq \ker(W_{\psi_1}), \\
    V_1 &\coloneqq \supp(W_{\psi_1}) \subseteq \ker(W_{\psi_0}).
\end{align}

With this in hand, we can explicitly evaluate:
 \begin{align}
      &Q_\alpha(  W_{\sum_{i} p_i \rho_i \otimes \psi_i} ||  W_{\sum_i q_i \tau_i \otimes \psi_i} )= \notag \\
    & \sum_{i \in \{0,1\}}\sum_{\bmu \in \P_k} \sum_{\bmv \in V_i} \left( p_i W_{\rho_i}(\bmu) W_{\psi_i}(\bmv)  \right)^\alpha \left( q_i W_{\tau_i}(\bmu) W_{\psi_i}(\bmv)  \right)^{1-\alpha}   \notag \\
    &= \sum_{i \in \{0,1\} } p_i^\alpha q_i^{1-\alpha} \sum_{\bmu \in \P_k} W_{\rho_i}(\bmu)^\alpha W_{\tau_i}(\bmu)^{1-\alpha} \notag \\
    &=\sum_{i \in \{0,1\} } p_i^\alpha q_i^{1-\alpha} Q_\alpha(W_{\rho_i} || W_{\tau_i}),
\end{align}
where in the second equality we used the normalisation of our chosen representation $W$. The inequality in the Lemma statement then follows from the fact that both terms on the right hand side of \eqref{eq:Q_alpha_equality} must be non-negative for all $\alpha \in \A$.\end{proof}

With this property in hand, we obtain the following Lemma, which makes the non-trivial $n$-dependence in $\Delta D_\alpha$ more explicit.
\begin{widetext}
\begin{lemma}
 \label{lemma:f__n_expanded}
Let $\mu_k$ denote the maximally mixed state on $k$ qubits. The function $\Delta D_\alpha$ can then be expressed as 
\begin{align}
   \Delta D_\alpha =  n \left(1- H_\alpha[W_\rho]\right)  +k - \frac{1}{\alpha -1}\log \left[ p^\alpha Q_\alpha\left(W_{\rho'}||W_{\mu_k}\right)+(1-p)^\alpha \left(\frac{1}{2^{n-k} -1}\right)^{\alpha -1} \right],
 \label{eq:f__n_expanded}
\end{align}
\end{lemma}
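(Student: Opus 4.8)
The plan is to evaluate the two relative-entropy terms in $\Delta D_\alpha = n D_\alpha(W_\rho\,||\,W_{\id/2}) - D_\alpha(W_{\rho_p}\,||\,W_{\tau_{n,k}})$ separately and recombine. For the first term I would note that the single-qubit maximally mixed state is represented uniformly on the four-point phase space $\P_1$: by the unit-trace property \ref{A_main_text_property:unit_trace}, $W_{\id/2}(\bmu) = \tfrac12\tr[A^\dagger_\bmu \tfrac{\id}{2}] = \tfrac14\tr[A_\bmu] = \tfrac14$ for every $\bmu$. Hence, using the base-$2$ convention and the identity $D_\alpha(\w\,||\,\mathrm{uniform}) = \log N - H_\alpha(\w)$ recorded just after \eqref{eq:D}, we get $D_\alpha(W_\rho\,||\,W_{\id/2}) = \log 4 - H_\alpha[W_\rho] = 2 - H_\alpha[W_\rho]$; equivalently this follows from applying multiplicativity (property \ref{property:D_multiplicative} of \lemref{lemma:nick_restatement}) directly to $\rho^{\otimes n}$ against $(\id/2)^{\otimes n}$ on $\P_n$.

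For the second term I would pass to the general mean $Q_\alpha(\cdot\,||\,\cdot) = 2^{(\alpha-1)D_\alpha(\cdot\,||\,\cdot)}$ and apply \lemref{lemma:general_mean} with the single flag qubit carrying the orthogonal computational basis states $\ket{0},\ket{1}$, the $k$-rebit pair $(\rho',\sigma)$ with weights $(p,1-p)$, and the $k$-rebit pair $(\mu_k,\sigma)$ with weights $(2^{k-n},1-2^{k-n})$ (the reference states $\mu_k$ and $\sigma$ are full-rank CSS, so the lemma applies). This gives
\[
Q_\alpha(W_{\rho_p}\,||\,W_{\tau_{n,k}}) = p^\alpha (2^{k-n})^{1-\alpha}\, Q_\alpha(W_{\rho'}\,||\,W_{\mu_k}) + (1-p)^\alpha (1-2^{k-n})^{1-\alpha}\, Q_\alpha(W_\sigma\,||\,W_\sigma).
\]
The key simplification is that $Q_\alpha(W_\sigma\,||\,W_\sigma) = \sum_\bmu W_\sigma(\bmu) = 1$ by normalisation of the representation, so the ancillary state $\sigma$ drops out entirely, which is precisely the $\sigma$-independence remarked on in the main text. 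Rewriting $(2^{k-n})^{1-\alpha} = 2^{(n-k)(\alpha-1)}$ and $(1-2^{k-n})^{1-\alpha} = 2^{(n-k)(\alpha-1)} (2^{n-k}-1)^{1-\alpha}$, factoring out the common $2^{(n-k)(\alpha-1)}$, and applying $\tfrac{1}{\alpha-1}\log$ yields
\[
D_\alpha(W_{\rho_p}\,||\,W_{\tau_{n,k}}) = (n-k) + \frac{1}{\alpha-1}\log\!\left[ p^\alpha Q_\alpha(W_{\rho'}\,||\,W_{\mu_k}) + (1-p)^\alpha \Big(\tfrac{1}{2^{n-k}-1}\Big)^{\alpha-1} \right].
\]

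Substituting both pieces into $\Delta D_\alpha$ gives $n(2 - H_\alpha[W_\rho]) - (n-k) - \tfrac{1}{\alpha-1}\log[\,\cdots\,]$, which collapses to $n(1 - H_\alpha[W_\rho]) + k - \tfrac{1}{\alpha-1}\log[\,\cdots\,]$, i.e.\ exactly \eqref{eq:f__n_expanded}. I do not anticipate a real obstacle: the result is essentially a repackaging of \lemref{lemma:general_mean} together with the explicit forms of $W_{\id/2}$, $\rho_p$ and $\tau_{n,k}$. The only points needing care are keeping the logarithm base fixed at $2$ throughout (so that the $(n-k)$ and $2 - H_\alpha$ terms come out with unit coefficients) and assuming $n > k$ so that $2^{n-k}-1$ is invertible, the boundary $n=k$ being handled via the one-sided limit used in \lemref{lemma:properties_fn}.
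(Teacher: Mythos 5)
Your proposal is correct and mirrors the paper's own derivation: it applies \lemref{lemma:general_mean} to expand $Q_\alpha(W_{\rho_p}\,||\,W_{\tau_{n,k}})$, uses $Q_\alpha(W_\sigma\,||\,W_\sigma)=1$ to drop the $\sigma$-dependence, factors out $(2^{n-k})^{\alpha-1}$, and combines with $D_\alpha(W_\rho\,||\,W_{\id/2}) = 2 - H_\alpha[W_\rho]$. The algebra and the auxiliary facts invoked match the paper exactly.
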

\begin{proof}
By \lemref{lemma:general_mean}, we have the expansion
\begin{align}
  Q_\alpha \left(W_{\rho_p} \, || \, W_{\tau_{n,k}}\right) 
  &=p^\alpha \left( 2^{k-n}\right)^{1-\alpha} Q_\alpha\left(W_{\rho' } \, || \, W_{\mu_k}\right)+ (1-p)^\alpha  \left(1 - 2^{k-n}\right)^{1-\alpha} Q_\alpha\left(W_{\sigma } \, || \, W_{\sigma }\right) \notag \\
  &= \left(2^{n-k}\right)^{\alpha -1} \left[ p^\alpha Q_\alpha\left(W_{\rho'}|| W_{\mu_k} \right)+(1-p)^\alpha \left(\frac{1}{2^{n-k} -1}\right)^{\alpha -1} \right],
\end{align}
where in the last equality we have used $Q_\alpha(\bmp||\bmp) = 1$ for all probability distributions $\bmp$. Therefore,
\begin{align}
    D_\alpha\left(W_{\rho_p} \, || \, W_{\tau_{n,k}}\right) = n-k + \frac{1}{\alpha -1}\log \left[ p^\alpha Q_\alpha\left(W_{\rho'}|| W_{\mu_k}\right)+(1-p)^\alpha \left(\frac{1}{2^{n-k} -1}\right)^{\alpha -1} \right].
    \label{eq:DDD}
\end{align}
Substituting \eqref{eq:DDD} and $D_\alpha\left(W_\rho \Big|\Big| W_{\frac{\id}{2}}\right)= 2 - H_\alpha[W_\rho]$ into \eqref{eq:Delta_alpha_n} gives the result as claimed. \end{proof}
\end{widetext}

\subsection{Constraints are independent of the choice of CSS state $\sigma$ prepared on failed code projection}
\label{appx:bounds_sigma_indep}

By inspection, the form for $\Delta D_\alpha$ given in \lemref{lemma:f__n_expanded} has no $\sigma$-dependence, which implies 
\begin{corollary}
The entropic constraints $\Delta D_\alpha \ge 0$ on $n$-to-$k$ CSS code projection protocols are independent of which CSS state $\sigma$ is prepared following failed runs. 
\end{corollary}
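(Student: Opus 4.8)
The plan is to obtain the corollary as an immediate consequence of \lemref{lemma:f__n_expanded}. That lemma provides the explicit closed form for $\Delta D_\alpha$ displayed in \eqref{eq:f__n_expanded}, whose right-hand side depends only on the code parameters $n$ and $k$, the acceptance probability $p$, the R\'{e}nyi index $\alpha$, the R\'{e}nyi entropy $H_\alpha[W_\rho]$ of the input Wigner function, and the general mean $Q_\alpha(W_{\rho'}\|W_{\mu_k})$ formed from the distilled output $\rho'$ and the maximally mixed state $\mu_k$. The CSS state $\sigma$ that is prepared following a failed run appears nowhere in this expression. Consequently $\Delta D_\alpha$ is manifestly independent of $\sigma$, and therefore so is each entropic constraint $\Delta D_\alpha \ge 0$; since that is precisely what the corollary asserts, the proof reduces to invoking \lemref{lemma:f__n_expanded}.

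For transparency I would also spell out \emph{where} the $\sigma$-dependence cancels inside the proof of \lemref{lemma:f__n_expanded}, as that is where the only real content lies. Both the test state $\rho_p$ of \eqref{eq:rho_prime_p} and the reference state $\tau_{n,k}$ of \eqref{eq:code_projection_reference_process} carry $\sigma$ solely in their ``failure'' branch, which is tagged by the computational-basis flag state $\ketbra{1}$ on the ancillary success/failure qubit and is therefore supported orthogonally to the ``success'' branch tagged by $\ketbra{0}$. Applying \lemref{lemma:general_mean} with these two orthogonal tags splits $Q_\alpha(W_{\rho_p}\|W_{\tau_{n,k}})$ into a success contribution and a failure contribution, and the failure contribution equals $(1-p)^\alpha(1-2^{k-n})^{1-\alpha}\, Q_\alpha(W_\sigma\|W_\sigma)$. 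Because $\sigma$ is CSS, $W_\sigma$ is a genuine probability distribution, so $Q_\alpha(W_\sigma\|W_\sigma)=\sum_{\bmu} W_\sigma(\bmu)=1$ for every $\alpha\in\A$ (independently of whether $W_\sigma$ has zeros, using the convention $0^\alpha=0$). The $\sigma$-dependent factor thus collapses to unity and no trace of $\sigma$ survives in $\Delta D_\alpha$.

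One can also see the result without expanding $\Delta D_\alpha$ at all: for any two CSS states $\sigma,\sigma'$ the flag-controlled map ``if the flag qubit reads $\ket{1}$, discard the $k$ output qubits and re-prepare $\sigma'$'' is a CSS circuit, hence completely CSS-preserving and stochastically represented by \lemref{lemma:CCSSP_powers}, and it carries $\rho_p\mapsto\rho_p'$ and $\tau_{n,k}\mapsto\tau_{n,k}'$ while leaving both flag states fixed; applying the data-processing property~\ref{property:D_dataprocessing} of \lemref{lemma:nick_restatement} once in each direction ($\sigma\to\sigma'$ and $\sigma'\to\sigma$) then forces $D_\alpha(W_{\rho_p}\|W_{\tau_{n,k}}) = D_\alpha(W_{\rho_p'}\|W_{\tau_{n,k}'})$. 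I would present this only as supporting intuition, since making it fully rigorous requires checking the positivity hypotheses on the reference states, whereas the route through \lemref{lemma:f__n_expanded} is entirely elementary. There is essentially no obstacle to overcome: the single point needing care is that $Q_\alpha(W_\sigma\|W_\sigma)=1$ relies on $W_\sigma$ being a bona fide probability distribution, which holds because $\sigma$ is chosen CSS in the construction \eqref{eq:CSS_code_reduction_channel}, and the substantive combinatorics is already contained in Lemmas~\ref{lemma:general_mean} and \ref{lemma:f__n_expanded}.
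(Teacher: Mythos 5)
Your proof matches the paper's own argument: both rely primarily on the observation that the explicit expansion of $\Delta D_\alpha$ in \lemref{lemma:f__n_expanded} contains no $\sigma$-dependence (a cancellation you correctly trace to $Q_\alpha(W_\sigma\|W_\sigma)=1$ via \lemref{lemma:general_mean}), and both then offer the identical secondary resource-theoretic argument in which data processing is applied in each direction under a flag-conditioned re-preparation map. One small aside: your parenthetical that $Q_\alpha(W_\sigma\|W_\sigma)=1$ holds ``independently of whether $W_\sigma$ has zeros'' via the convention $0^\alpha=0$ is unneeded and not quite right for $\alpha>1$ (one would also hit $0^{1-\alpha}$); the framework anyway requires $\sigma$ in the interior of $\D_{css}$, so $W_\sigma$ is strictly positive and the question never arises.
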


This result also follows from resource-theoretic arguments. Consider a CSS circuit $\E$ on $k+1$ qubits that performs a $Z$-basis measurement on the last qubit and re-prepares the first $k$ qubits in a CSS state $\omega$ conditioned upon the $-1$ outcome. Thus one can write
\begin{align}
  \E_\omega(\cdot) \coloneqq \I \otimes P_0 (\cdot) + \omega \tr \otimes P_1 (\cdot),
\end{align}
where $P_k (\cdot) \coloneqq \ketbra{k} (\cdot) \ketbra{k}$, from which one straightforwardly verifies that $\E_\omega(\rho_p)$ and $\E_\omega(\tau_{n,k})$ simply replaces $\sigma$ in $\rho_p$ and $\tau_{n,k}$ respectively by $\omega$. Since $\E$ is stochastically represented, given any $\omega$ in the interior of $\D_{css}$, we have by Property \ref{property:D_dataprocessing} in \lemref{lemma:nick_restatement} that
\begin{align}
    D_\alpha\left(W_{\rho_p} || W_{\tau_{n,k}}\right) &\ge  D_\alpha\left(W_{\E_\omega[\rho_p]} || W_{\E_\omega[\tau_{n,k}]}\right)  \notag \\ 
    D_\alpha\left(W_{\rho_p} || W_{\tau_{n,k}}\right)&\ge D_\alpha\left(W_{\E_\sigma \circ\E_\omega[\rho_p]} || W_{\E_\sigma \circ \E_\omega [\tau_{n,k}]}\right) \notag \\ 
    &=   D_\alpha\left(W_{\rho_p} ||  W_{\tau_{n,k}}\right).
\end{align}
We therefore conclude that
\begin{align}
    D_\alpha\left(\E_\omega(W_{\rho_p}) || \E_\omega (W_{\tau_{n,k}})\right) = D_\alpha\left(W_{\rho_p} || W_{\tau_{n,k}}\right),
\end{align}
so the entropic constraints on CSS code projections are unaffected by varying $\sigma$ in the interior of $\D_{css}$.

\subsection{Proof of \lemref{lemma:properties_fn}}
\label{appx:fn_properties_proof}

We now prove the following properties of the constraint function $\Delta D_\alpha$ from the main text:
\fnProperties*
\begin{proof} 
	Let us denote the maximally mixed state on $k$ qubits by $\mu_k$. We further simplify notation by defining the constants $c_1 \coloneqq p^\alpha Q_\alpha(W_{\rho'} || W_{\mu_k}) $ and ${c_2 \coloneqq (1-p)^\alpha}$.

\textbf{\textit{Proof of \ref{fproperty_concavity}:}} Let us define the following function
\begin{align}
    g(n) \coloneqq \left[ c_1+c_2 \left(\frac{1}{2^{n-k} -1}\right)^{\alpha -1} \right].
\end{align}
This means that from \lemref{lemma:f__n_expanded} we can write
\begin{equation}
   \Delta D_\alpha =  n \left(1- H_\alpha[W_{\rho}]\right)  +k- \frac{1}{\alpha -1} \log g(n),
\end{equation}
and since the first term is linear we need only check the second derivative of the second term to establish that $\Delta D_\alpha$ is concave. We have
\begin{widetext}
\begin{align}
    &\partial_n^2 \Delta D_\alpha = -\frac{1}{\alpha-1} \partial_n^2  \log g(n) = \\
    & -\frac{  \ln 2 \, c_2 2^{k+n} \left(c_1 \left(2^k+(\alpha -1) 2^n\right) \left(2^{n-k}-1\right)^{\alpha }+c_2 \left(2^n-2^k\right)\right)}{\left(2^n-2^k\right) \left(c_1 2^k \left(2^{n-k}-1\right)^{\alpha }+c_2 \left(2^n-2^k\right)\right)^2}. \nonumber
\end{align}
\end{widetext}

Since $c_1,c_2 \ge 0$ for all $\rho'$ and $p$, the term in square brackets is non-negative for all $n > k, \alpha>1, \rho'$ and $p$ (strictly positive for $p<1$), which implies $\partial_n^2 \Delta D_\alpha$ is non-positive everywhere on our restricted domain. Therefore $\Delta D_\alpha$ is concave, as claimed.

\textbf{\textit{Proof of \ref{fproperty_lim_one}:}}
Recalling that $\alpha > 1$, we have from \lemref{lemma:f__n_expanded} that
\begin{align}
\lim_{n\rightarrow k^+}\Delta D_\alpha  
&= - k H_\alpha[W_{\rho'}] - \notag \\ \frac{1}{\alpha -1}&\lim_{n\rightarrow k^+} \left\{\log \left[c_1+c_2 \left(\frac{1}{2^{n-k}-1}\right)^{\alpha -1} \right]\right\} \notag \\ &= - \infty<0,
\end{align}
so long as $c_2 > 1$, which is true if and only if $p < 1$. 

\textbf{\textit{Proof of \ref{fproperty_lim_infty}:}}
We have:
\begin{align}
\lim_{n\rightarrow \infty}\Delta D_\alpha 
&= k - \frac{1}{\alpha -1} \log [c_1] + \lim_{n \rightarrow \infty}\left\{n \left(1- H_\alpha[W_\rho]\right) \right\} \notag \\
&=H_\alpha[W_{p \rho'}] -k + \lim_{n \rightarrow \infty}\left\{n \left(1- H_\alpha[W_{\rho}]\right) \right\} \notag \\
&= \begin{cases} -\infty , & H_\alpha[W_\rho]> 1, \\
+ \infty , & H_\alpha[W_\rho]< 1, \\
H_\alpha[W_{p \rho'}] -k , & \mathrm{ otherwise}. \end{cases}
\end{align}

Therefore, if $H_\alpha[W_\rho]> 1$ then $\lim_{n\rightarrow  \infty}\Delta D_\alpha<0$, as claimed. 

This completes the proof.\end{proof}

\subsection{Analytic bounds on code length in qudit code projection protocols}
\label{appx:extension_n_m}

In this section, we consider $[[n,k]]$ stabilizer code projections for quantum systems of \emph{arbitrary} Hilbert space dimension $d$, which are stochastic under some generalised Gross's Wigner representation.

Following \eqref{eq:CSS_code_reduction_channel}, we can define the following trace-preserving projection $\E$ on a $[[n,k]]$ stabilizer code $\C$ for $d$-dimensional quantum systems,
\begin{align}
    \E(\cdot) \coloneqq \tr_{k+1,\dots,n}\left[\U \circ \P(\cdot)\right] \otimes &\ketbra{0} + \notag\\
     &\tr[\overline{\P} (\cdot)] \sigma \otimes \ketbra{1},
\end{align}
where $\U$ and $\P$ are respectively the decoding channel and codespace projection for $\C$, $\overline{\P}$ is the projection onto the orthogonal complement of $\C$, and $\sigma$ is positively represented under some generalised Gross's Wigner representation of our choice. This channel transforms $n$ copies of an input noisy magic state $\rho$ as 
\begin{align}
   \E \left[ \rho^{\otimes n } \right] = p \rho' \otimes \ketbra{0} + (1-p) \sigma \otimes \ketbra{1} \eqqcolon \rho_p,
\end{align}
where we assume the output state $\rho'$ following successful code projection is $\delta$-close to $k$ copies of our pure target magic state $\psi$, as measured by the trace distance $\norm{\rho - \sigma}_1$ where $\norm{X}_1 \coloneqq \tr \abs{X} = \tr\sqrt{X^\dagger X}$ is the trace norm (also known as the Schatten-1 norm). Formally, we assume 
\begin{equation}
\norm{\rho' - \psi^{\otimes k}}_1 \le \delta< \norm{\rho - \psi}_1.
\end{equation}
We also define the Frobenius norm (also known as the Schatten-2 norm) as
\begin{align*}
    \norm{X}_2 \coloneqq \sqrt{\tr[X^\dagger X]}.
\end{align*}
We further define the $\ell_1$- and $\ell_2$-norms, respectively, of a vector $\bmw \in \mathbb{R}^d$ as
\begin{align}
    \norm{\bmw}_1 &\coloneqq \sum_{i=1}^d \abs{w_i} , \\
    \norm{\bmw}_2 &\coloneqq \left[\sum_{i=1}^d \abs{w_i}^2\right]^{\frac{1}{2}}.
\end{align}

We now make use of the following result from the literature on real vector spaces (e.g. see ~\cite{bhatia2013matrix}), which is a consequence of the Cauchy-Schwarz inequality.
\begin{lemma} \label{lemma:L1_upper_bound}
    For all $\bmw \in \mathbb{R}^{d}$ 
    \begin{align}
     \norm{\bmw}_1 \le \sqrt{d} \norm{\bmw}_2.
    \end{align}
\end{lemma}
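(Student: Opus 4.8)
The plan is to derive this as an immediate consequence of the Cauchy--Schwarz inequality applied to a well-chosen pair of vectors in $\mathbb{R}^d$. First I would introduce the all-ones vector $\bm{1} \coloneqq (1,\dots,1) \in \mathbb{R}^d$ together with the vector of entrywise absolute values $|\bmw| \coloneqq (|w_1|,\dots,|w_d|)$, and observe that the $\ell_1$-norm can be rewritten as a Euclidean inner product,
\begin{align}
    \norm{\bmw}_1 = \sum_{i=1}^d |w_i| = \langle |\bmw| , \bm{1} \rangle .
\end{align}

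Next I would apply the Cauchy--Schwarz inequality to $|\bmw|$ and $\bm{1}$, which gives $\langle |\bmw|, \bm{1}\rangle \le \norm{|\bmw|}_2 \, \norm{\bm{1}}_2$. Since $\norm{|\bmw|}_2 = \big(\sum_{i=1}^d |w_i|^2\big)^{1/2} = \norm{\bmw}_2$ and $\norm{\bm{1}}_2 = \big(\sum_{i=1}^d 1\big)^{1/2} = \sqrt{d}$, combining these two identities yields $\norm{\bmw}_1 \le \sqrt{d}\,\norm{\bmw}_2$, as claimed.

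I do not anticipate any substantive obstacle in this argument; the only points requiring minor care are keeping the orientation of the inequality correct (this is the non-trivial direction, which picks up the dimension-dependent factor $\sqrt{d}$, rather than the trivial bound $\norm{\bmw}_2 \le \norm{\bmw}_1$) and, optionally, remarking that equality is attained precisely when all the $|w_i|$ coincide. For the downstream use in \appref{appx:extension_n_m} --- where one bounds a trace norm by a Frobenius norm via the $\ell_1$- and $\ell_2$-norms of the underlying vector of spectral entries --- only the stated inequality is needed.
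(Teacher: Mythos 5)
Your proof is correct and matches what the paper intends: the paper simply cites this as a standard consequence of the Cauchy--Schwarz inequality (referring to Bhatia) without writing out the argument, and your application of Cauchy--Schwarz to $|\bmw|$ and the all-ones vector is exactly that standard argument.
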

This result enables us to show that vanishingly small variations in quantum states correspond to vanishingly small variations in their Wigner representations: 
\begin{lemma} \label{lem:trace_distance_wigner}
 If $ \norm{\rho - \sigma}_1 \le \epsilon$ then for any generalised Gross's Wigner representation $W$, we have
\begin{align} \label{eq:trace_distance_general}
     \norm{W_\rho - W_\sigma}_1 \le \sqrt{d} \epsilon.
\end{align}
\end{lemma}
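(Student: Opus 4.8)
The plan is to relate $\norm{W_\rho - W_\sigma}_1$ to the trace distance $\norm{\rho-\sigma}_1$ by routing through the Frobenius / Schatten-$2$ norm: \lemref{lemma:L1_upper_bound} will handle the $\ell_1$-to-$\ell_2$ step on the phase space, while a Parseval-type identity for the orthogonal operator basis $\{A_\bmu\}$ converts the $\ell_2$-norm of a Wigner function into a Schatten-$2$ norm of the corresponding operator. First I would set $\Delta \coloneqq \rho - \sigma$, record that $\Delta$ is Hermitian with $\norm{\Delta}_1 \le \epsilon$ by hypothesis, and note that in the real-represented setting of \thmref{thrm:upper_bound_n_m} the Wigner transform $W_\Delta \coloneqq W_\rho - W_\sigma$ is a real vector on the phase space $\P$ (and even in the general case \lemref{lemma:L1_upper_bound} applies verbatim to the nonnegative vector $(\abs{W_\Delta(\bmu)})_{\bmu \in \P}$, so a complex representation is no obstacle).

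The first key step is the identity $\norm{\Delta}_2^2 = d \, \norm{W_\Delta}_2^2$, where on the right $\norm{\cdot}_2$ is the $\ell_2$-norm over $\P$ and $d$ denotes the Hilbert space dimension of the system carrying $\rho$ and $\sigma$. This is immediate from properties \textbf{(A1)}--\textbf{(A4)}, which make $\{A_\bmu/\sqrt d\}_{\bmu \in \P}$ an orthonormal basis of the operator space --- orthogonality being the content of \textbf{(A2)}, $\tr[A_\bmu^\dagger A_\bmv] = d\,\delta_{\bmu,\bmv}$, and completeness being forced because $\abs{\P} = d^2$ matches the dimension of that space. Expanding $\Delta = \sum_\bmu W_\Delta(\bmu) A_\bmu$ via informational completeness and computing
\begin{equation}
\tr[\Delta^\dagger \Delta] = \sum_{\bmu,\bmv} \overline{W_\Delta(\bmu)}\, W_\Delta(\bmv)\, \tr[A_\bmu^\dagger A_\bmv] = d \sum_{\bmu} \abs{W_\Delta(\bmu)}^2 ,
\end{equation}
all off-diagonal terms drop and the claimed identity follows.

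The second key step is to apply \lemref{lemma:L1_upper_bound} to $W_\Delta$, whose index set $\P$ has cardinality $d^2$, yielding $\norm{W_\rho - W_\sigma}_1 \le \sqrt{d^2}\,\norm{W_\Delta}_2 = d\,\norm{W_\Delta}_2$. Chaining this with the first step gives $\norm{W_\rho - W_\sigma}_1 \le d \cdot d^{-1/2}\norm{\Delta}_2 = \sqrt d\,\norm{\Delta}_2$, and I would finish by invoking the standard ordering of Schatten norms, $\norm{\Delta}_2 \le \norm{\Delta}_1$, together with the hypothesis $\norm{\Delta}_1 \le \epsilon$, to conclude $\norm{W_\rho - W_\sigma}_1 \le \sqrt d\,\epsilon$.

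I do not expect any genuine difficulty here: every step is a one-line invocation of a named fact (Parseval for an operator orthonormal basis, Cauchy--Schwarz via \lemref{lemma:L1_upper_bound}, and Schatten-norm monotonicity). The only thing requiring care is the bookkeeping of dimension factors --- in particular that the $1/d$ in the normalization of $W_\rho$ (equivalently the $d$ in the orthogonality relation $\tr[A_\bmu^\dagger A_\bmv] = d\,\delta_{\bmu,\bmv}$) cancels against the $\sqrt{\abs{\P}} = d$ produced by Cauchy--Schwarz, so that exactly one power of $\sqrt d$ survives; the argument is otherwise indifferent to how the system factorizes into qudits.
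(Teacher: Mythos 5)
Your proof is correct and follows essentially the same route as the paper's: both pass through the Schatten-$2$ norm via the Parseval identity $\norm{\Delta}_2 = \sqrt{d}\,\norm{W_\Delta}_2$ from \ref{A_main_text_property:orthogonality}, then apply \lemref{lemma:L1_upper_bound} to the $d^2$-component vector $W_\Delta$ to get $\norm{W_\Delta}_1 \le d\,\norm{W_\Delta}_2$, and finish with the Schatten-norm inequality $\norm{\Delta}_2 \le \norm{\Delta}_1$. Your extra remark that a complex $W_\Delta$ is handled by applying the $\ell_1$-vs-$\ell_2$ bound to $(\abs{W_\Delta(\bmu)})_\bmu$ is a small clarification the paper leaves implicit, but the underlying argument and all dimension bookkeeping agree.
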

\begin{proof}

To simplify notation, we first define the state difference $\Delta \coloneqq \rho - \sigma$ such that $W_\Delta = W_\rho - W_\sigma$. Since the Schatten-$p$ norms are non-increasing with respect to $p$~\cite{bhatia2013matrix}, we obtain 
\begin{align}
    \norm{\rho - \sigma}_1 &\ge \norm{\rho - \sigma}_2 =  \norm{\sum_\bmx W_\Delta(\bmx) A_\bmx}_2\nonumber\\
    & = \sqrt{ \sum_{\bmx,\bmy} W_\Delta^*(\bmx) W_\Delta(\bmy) \tr [A_\bmx^\dagger A_\bmy]} \nonumber\\
    &=\sqrt{ \sum_{\bmx,\bmy} W_\Delta^*(\bmx) W_\Delta(\bmy) d \delta_{\bmx,\bmy}}\nonumber\\
    & = \sqrt{d} \norm{W_\Delta}_2 \ge \frac{1}{\sqrt{d}} \norm{W_\Delta}_1, 
\end{align}
where in the second inequality we employ \lemref{lemma:L1_upper_bound}.
Therefore $\norm{W_\rho - W_\sigma}_1 \le \sqrt{d} \norm{\rho - \sigma}_1$.
\end{proof}

\begin{lemma} \label{lem:continuity_renyi_entropy}
Let $\rho$ and $\sigma$ be two quantum states of a $d$-dimensional qudit such that $\norm{\rho - \sigma}_1 \le \epsilon$. Then given any generalised Gross's Wigner representation $W$, 
\begin{align}
 &\abs{  H_\alpha[W_\rho] - H_\alpha [W_\sigma] } \le \frac{\alpha}{\alpha -1} \log [1 + \epsilon d^{\frac{5}{2}}].
\end{align}
\end{lemma}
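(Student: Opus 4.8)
The plan is to reduce the statement to a two-sided estimate on the $\ell_\alpha$-norms of the Wigner functions and then feed in \lemref{lem:trace_distance_wigner}. First I would record that, since $\alpha \in \A$, one has $w^\alpha = \abs{w}^\alpha$ for every $w \in \mathbb{R}$, so that
\begin{align}
H_\alpha[\w] = \frac{1}{1-\alpha}\log \sum_i \abs{w_i}^\alpha = -\frac{\alpha}{\alpha-1}\log \norm{\w}_\alpha ,
\end{align}
where $\norm{\cdot}_\alpha$ is the ordinary $\ell_\alpha$ vector norm (a genuine norm because $\alpha \ge 1$). Consequently
\begin{align}
\abs{H_\alpha[W_\rho] - H_\alpha[W_\sigma]} = \frac{\alpha}{\alpha-1}\,\abs{\log \frac{\norm{W_\rho}_\alpha}{\norm{W_\sigma}_\alpha}},
\end{align}
and it suffices to prove $\norm{W_\rho}_\alpha / \norm{W_\sigma}_\alpha \le 1 + \epsilon d^{5/2}$ together with the same bound with $\rho$ and $\sigma$ interchanged.

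For the numerator I would use the reverse triangle inequality for $\norm{\cdot}_\alpha$ and the monotonicity of $\ell_p$-norms in $p$: $\abs{\norm{W_\rho}_\alpha - \norm{W_\sigma}_\alpha} \le \norm{W_\rho - W_\sigma}_\alpha \le \norm{W_\rho - W_\sigma}_1$, and \lemref{lem:trace_distance_wigner} (using $\norm{\rho - \sigma}_1 \le \epsilon$) bounds the right-hand side by $\sqrt{d}\,\epsilon$. For the denominator I need a lower bound on $\norm{W_\sigma}_\alpha$: on the $d^2$-point phase space $\mathbb{Z}_d \times \mathbb{Z}_d$, Hölder's inequality gives $\norm{W_\sigma}_1 \le (d^2)^{1-1/\alpha}\norm{W_\sigma}_\alpha$, while normalisation forces $\norm{W_\sigma}_1 \ge \abs{\sum_\bmu W_\sigma(\bmu)} = 1$; hence $\norm{W_\sigma}_\alpha \ge (d^2)^{-(1-1/\alpha)} \ge d^{-2}$. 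Combining the two estimates,
\begin{align}
\frac{\norm{W_\rho}_\alpha}{\norm{W_\sigma}_\alpha} \le 1 + \frac{\sqrt{d}\,\epsilon}{\norm{W_\sigma}_\alpha} \le 1 + \epsilon d^{5/2} ,
\end{align}
and by symmetry the same holds for $\norm{W_\sigma}_\alpha / \norm{W_\rho}_\alpha$; taking logarithms and multiplying by $\tfrac{\alpha}{\alpha-1}$ yields the Lemma. The edge case $\alpha = \infty$ is handled identically after replacing $\norm{\cdot}_\alpha$ by $\norm{\cdot}_\infty$ and $\tfrac{\alpha}{\alpha-1}$ by $1$.

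I do not expect a substantial obstacle here: the whole argument is elementary norm bookkeeping. The only points that need care are tracking the powers of $d$ so that the constant comes out as exactly $d^{5/2}$ — this is where the $\sqrt{d}$ supplied by \lemref{lem:trace_distance_wigner} meets the $d^{-2}$ lower bound on $\norm{W_\sigma}_\alpha$ coming from the $d^2$-element phase space together with normalisation — and verifying that $w^\alpha = \abs{w}^\alpha$ genuinely holds for all $\alpha \in \A$, which is what makes $H_\alpha$ a monotone function of the bona fide norm $\norm{\cdot}_\alpha$ and hence legitimises the first display.
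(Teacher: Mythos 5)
Your proof is correct, and it is essentially a self-contained re-derivation of the result the paper obtains by citation. The paper's proof consists of invoking Theorem~7(2) of Ref.~\cite{woods2019resource}, which gives the continuity estimate
\begin{align}
\abs{H_\alpha(\bmw) - H_\alpha(\bmw')} \le \frac{\alpha}{\alpha-1}\log\left[1 + \norm{\bmw - \bmw'}_1\,d^2\right]
\end{align}
for $d^2$-dimensional (quasi)distributions, together with a remark that the cited proof relies only on $\ell_p$-norm monotonicity and thus extends to quasidistributions; the bound of the Lemma then follows by inserting $\norm{W_\rho - W_\sigma}_1 \le \sqrt{d}\,\epsilon$ from \lemref{lem:trace_distance_wigner}. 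You instead prove this continuity estimate directly: rewriting $H_\alpha[\w] = -\tfrac{\alpha}{\alpha-1}\log\norm{\w}_\alpha$ (legitimate because $w^\alpha = \abs{w}^\alpha$ for all $\alpha\in\A$, which hinges on the even numerator and odd denominator), using the reverse triangle inequality plus $\ell_p$-monotonicity to bound the numerator, and Hölder plus normalisation to bound the denominator from below by $(d^2)^{-(1-1/\alpha)} \ge d^{-2}$. This gives the same constant $d^{5/2}$ once combined with \lemref{lem:trace_distance_wigner}. Your derivation is a bit stronger at intermediate $\alpha$ (the $(d^2)^{1-1/\alpha}$ factor is $\le d^2$), and it has the pedagogical merit of making explicit that the cited classical-distribution result genuinely survives the passage to quasidistributions, rather than taking that on faith. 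The only caveat, which you have already noted, is the $\alpha=\infty$ endpoint, where your formulas need the standard $\norm{\cdot}_\infty$ and $\tfrac{\alpha}{\alpha-1}\to 1$ replacements; this is routine.
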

\begin{proof}
Theorem 7 (2) of Ref.~\cite{woods2019resource} applies to quasiprobability distributions and tells us that for two $d^2$-dimensional distributions $\bmw,\bmw'$, we have the following continuity statement on the $\alpha$-R\'{e}nyi entropies
\begin{align}
    \abs{H_\alpha(\bmw) - H_\alpha(\bmw')} &\le \notag \\ \frac{\alpha}{\alpha-1} &\log [1 + \norm{\bmw -\bmw'}_1 d^2 ].
\end{align}
The proof of this just relies on the monotonicity of the $p$-norms
$\norm{\bmw}_p \coloneqq \left( \sum_{i=1}^{d^2} \abs{w_i}^p \right)^{1/p}$, i.e., for $1\le \alpha<\beta\le \infty$,
$\norm{\bmw}_{\alpha} \ge \norm{\bmw}_\beta$, which also holds for quasidistributions. The Lemma result then follows from \lemref{lem:trace_distance_wigner}.
\end{proof}

We are now in a position to prove \thmref{thrm:upper_bound_n_m}, which we restate here for clarity:

\UpperBoundnk*
\begin{proof}
Let $\mu_k$ dneote the maximally mixed state of $k$ qudits with Hilbert space dimension $d$. Following the same proof strategy as that of \lemref{lemma:general_mean}, we obtain
\begin{align}
    Q_\alpha(W_{\rho_p} || W_{\tau_{n,k}}) \ge \frac{p^\alpha}{d^{(n-k)(1-\alpha)}}  Q_{\alpha}( W_{\rho'} || W_{\mu_k}).
\end{align}
Since $\log(\cdot)/(\alpha-1)$ is a monotonically increasing function for $\alpha >1$, it therefore follows that
\begin{align}
    &D_\alpha(W_{\rho_p} || W_{\tau_{n,k}}) \ge \notag \\  &\frac{1}{\alpha -1 } \log \left[\frac{p^\alpha}{d^{(n-k)(1-\alpha)}} Q_{\alpha}( W_{\rho'}  || W_{\mu_k}) \right] \notag  \\
    &= D_\alpha (W_{\rho'} || W_{\mu_k }  ) + \frac{\alpha}{\alpha -1} \log p + (n-k)\log d \notag \\
    &= k \log d -H_\alpha(W_{\rho'})+ \frac{\alpha}{\alpha -1} \log p + n\log d ,
\end{align}
where in the final equality we have used the identity $D_\alpha(W_\rho || W_{\frac{\id_d}{d}}) = 2 \log d - H_\alpha(W_\rho) $. 

We can now make use of the continuity of the R\'{e}nyi entropy as stated in \lemref{lem:continuity_renyi_entropy} to further lower-bound this divergence as
\begin{align}
D_\alpha(W_{\rho_p} || &W_{\tau_{n,k}}) \ge k \left[\log d -  H_\alpha (W_{\psi}) \right]  \nonumber\\
&- \frac{\alpha}{1 -\alpha}\log \frac{p}{1+ \delta d^{\frac{5}{2}}} + n \log d,
\end{align}
where the equality follows from the factorisation of the $\alpha$-R\'{e}nyi entropy over subsystems. This gives rise to the following upper bound on the relative entropy difference $\Delta D_\alpha \coloneqq n D_\alpha(W_\rho || W_{\frac{\id}{d}})   -D_\alpha(W_{\rho_p} || W_{\tau_{n,k}})$,
\begin{align}
    0 \le \Delta D_\alpha  &\le     n[  \log d- H_\alpha (W_\rho)] + k \left[ H_\alpha (W_{\psi}) -\log d \right] \nonumber\\ 
    &\hspace{20pt}+ \frac{\alpha}{1 -\alpha}\log \frac{p}{1+ \delta d^{\frac{5}{2}}}.
\end{align}
This gives a weaker but still necessary constraint on stochastic transformations accomplishing $\rho^{\otimes n } \mapsto \rho_p$ and $\left(\frac{\id}{d}\right)^{\otimes n} \mapsto \tau_{n,k}$, which we can rearrange as
\begin{align}
    n[H_\alpha (W_\rho)-\log d ] &\le k \left[ H_\alpha (W_{\psi}) - \log d  \right] \notag \\ &+\frac{\alpha}{1 -\alpha}\log \frac{p}{1+ \delta d^{\frac{5}{2}}}. 
\label{apeq:upperbound_unarranged}
\end{align}

For $H_\alpha(W_\rho) < \log d$, we can rearrange \eqref{apeq:upperbound_unarranged} to obtain the lower bound in \eqref{eq:analytic_lower}. For $H_\alpha(W_\rho) > \log d$, we obtain the upper bound in \eqref{eq:analytic_upper}.
\end{proof}

\section{\uppercase{Decomposition of CSS magic distillation into code projections}}\label{sec:campbell_browne_reduction}
The goal of this appendix is to prove \thmref{theorem:CSS_campbell_browne}, which generalises Theorem 1 of Ref.~\cite{campbell_browne} to an arbitrary number of output qubits in the CSS setting.  

\begin{theorem} \label{theorem:CSS_campbell_browne}
    An \emph{$n$-to-$k$ CSS magic distillation protocol} is any CSS circuit that takes in $n \ge 2$ qubits and outputs onto the first $1 \le k < n$ qubits. Every $n$-to-$k$ CSS magic distillation protocol $\E$ can be decomposed as a sum of CSS code projections, followed by preparing CSS states and completely CSS-preserving post-processing. Thus one can write
	\begin{align}
		\E(\rho) = \sum_j p_j \E_j,\  \E_j \coloneqq \U_j \circ \left(\K_j(\rho) \otimes \ketbra{\varphi_j} \right), 
	\end{align}
	where $p_j$ is a probability, $\U_j$ is a completely CSS-preserving unitary channel on $k$ qubits, $\K_j$ is the codespace projection of an $[[n,k_j]]$ CSS code for some integer $k_j$ in the range $0 \le k_j \le k$, and $\ket{\varphi_j}$ is a CSS state on $(k-k_j)$ qubits. \label{theorem:CSS_Campbell_Browne}
\end{theorem}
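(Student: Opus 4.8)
The plan is to adapt the argument of Ref.~\cite{campbell_browne} to the CSS setting, taking as our starting point the canonical form of \lemref{lemma:CSS_operation_canonical_form} and feeding it through the ``CSS analogues'' of the stabiliser facts that Campbell and Browne rely on -- namely \lemref{lemma:CCSSP_vs_CSSP}, \lemref{lemma:CNOT_gaussian_elim} and Corollary~\ref{corollary:completely_CSS_decoding}. By \lemref{lemma:CSS_operation_canonical_form} we may write $\E = \sum_i p_i \sum_j \E_{i,j}$ with each branch $\E_{i,j}(\rho) = \tr_{\R}\!\big[K_{i,j}(\rho\otimes\sigma_{i,j})K_{i,j}^\dagger\big]$ and $K_{i,j} = \prod_l P(S_{(i,j),l}) U_{(i,j),l}$, where $\sigma_{i,j}$ is a CSS ancilla on $m$ qubits, the $U$'s are completely CSS-preserving unitaries and the $P(S)$'s project onto the $+1$ eigenspace of a CSS observable. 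First I would perform three harmless reductions. (i) Rewrite the partial trace over $\R$ as a sum over computational-basis projections of $\R$ (products of $Z$-eigenspace projectors, hence of CSS-observable projectors) followed by discarding $\R$; this merely splits each branch into further branches. (ii) Decompose every mixed $\sigma_{i,j}$ into pure CSS states, folding the mixing weights into the probabilities. (iii) By Corollary~\ref{corollary:completely_CSS_decoding} applied with zero logical qubits, write each pure CSS ancilla as $W\big(\ket{0}^{\otimes a}\otimes\ket{+}^{\otimes(m-a)}\big)$ for a completely CSS-preserving $W$, absorb $W$ into the unitaries, and present the ancilla preparation itself as a product of projectors $P(Z_{n+s})$ and $P(X_{n+s})$ applied to $\id^{\otimes m}$.

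After these reductions, and after commuting all completely CSS-preserving unitaries to the left using $P(S)U = U\,P(U^\dagger S U)$ -- which is legitimate because an element of $\G(\cdot)$ conjugates a CSS observable to a CSS observable up to a sign, and any such sign is produced by conjugating with a single-qubit Pauli, itself in $\G(\cdot)$ -- every branch takes the form $\rho\mapsto\tr_{\R}\!\big[V\,\Pi(\rho\otimes\id^{\otimes m})\Pi^\dagger V^\dagger\big]$, where $V$ is a completely CSS-preserving unitary on all $n+m$ qubits and $\Pi$ is a \emph{single} product of projectors onto $+1$ eigenspaces of CSS observables on the $n+m$ qubits (collecting the measured observables, the ancilla stabilisers and the $Z$-observables on the discarded qubits). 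The heart of the proof is then the following stabiliser-tableau lemma: up to a non-negative scalar $c$ and completely CSS-preserving unitaries $V_{\mathrm{L}}$ on all $n+m$ qubits and $V_{\mathrm{R}}$ on the $n$ input qubits,
\begin{equation*}
\Pi(\rho\otimes\id^{\otimes m})\Pi^\dagger \;=\; c\, V_{\mathrm{L}}\big[\big(\Pi_{\C}\, V_{\mathrm{R}}\rho V_{\mathrm{R}}^\dagger\,\Pi_{\C}\big)\otimes\ketbra{\chi}\big] V_{\mathrm{L}}^\dagger ,
\end{equation*}
where $\Pi_{\C}$ is the codespace projector of some $[[n,k_j]]$ CSS code $\C$ on the $n$ input qubits and $\ket{\chi}$ is a fixed pure CSS state on the remaining qubits.

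I would prove this lemma by processing the list of CSS observables defining $\Pi$ with the $\CNOT$-, $X$- and $Z$-gate ``Gaussian elimination'' of \lemref{lemma:CNOT_gaussian_elim}, carried out separately on the $Z$-sector and the $X$-sector, until the observables have become disjoint single-qubit $X$- and $Z$-eigenspace projectors: each \emph{commuting} single-qubit projector sitting on an input qubit becomes one stabiliser generator of $\C$, while each \emph{anticommuting} pair (a $Z$- and an $X$-projector on the same qubit) collapses that qubit -- contributing a factor $\tfrac12$ to $c$, pinning the qubit to a fixed CSS state absorbed into $\ket{\chi}$, and lowering $k_j$ -- and the ancilla qubits, starting in a fixed CSS state, likewise end in a fixed CSS state and decouple from the input-derived register. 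I expect the delicate points of this step to be the main obstacle: first, verifying that the input-derived and ancilla-derived registers genuinely decouple, so that $\C$ is honestly an $[[n,k_j]]$ code on the input alone rather than an $[[n+m,\cdot]]$ code; and second, tracking signs carefully enough that every auxiliary unitary introduced stays inside some $\G(\cdot)$ -- in particular, that the collective Hadamard is never required.

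To conclude, I would compose $\Pi_{\C}$ with the completely CSS-preserving decoding unitary of $\C$ guaranteed by Corollary~\ref{corollary:completely_CSS_decoding}, turning it into the $n$-to-$k_j$ CSS code projection $\K_j$ of \eqref{eq:K_operation}; then $\ket{\chi}$ splits as a computational ``no-error syndrome'' state on the $n-k_j$ discarded code qubits tensored with a pure CSS state $\ket{\varphi_j}$ on the $(k-k_j)$ remaining output qubits (one may always take $k_j\le k$ by decoding onto qubits that lie in the output register, any surplus logical qubits being pushed into $\R$). Since the leftover action of $V_{\mathrm{L}}$ together with $\tr_{\R}$ now acts only on $\rho$-independent CSS structure apart from the $k_j$ decoded qubits, which it merely relocates, it reduces to a completely CSS-preserving unitary channel $\U_j$ on the $k$ output qubits (a product of $\CNOT$ gates and single-qubit Paulis by \lemref{lemma:CCSSP_vs_CSSP}). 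Absorbing $V_{\mathrm{R}}$ into $\K_j$ (a relabelling of the code), collecting the scalars $c$ into the $p_j$, and summing over all branches gives the claimed decomposition. Finally, replacing ``CSS observable'' by ``Pauli observable'' and ``completely CSS-preserving'' by ``Clifford'' everywhere leaves every step valid, which is the promised generalisation of Theorem~1 of Ref.~\cite{campbell_browne} to arbitrary $n$-to-$k$ stabiliser protocols.
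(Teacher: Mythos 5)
Your proposal captures the same high-level strategy as the paper's proof: start from the canonical form of \lemref{lemma:CSS_operation_canonical_form}, commute unitaries past projectors via \lemref{lemma:CSS_unitary_observable_conjugate}, replace the partial trace with computational-basis projections, split mixed ancillas into pure CSS states, and reduce to a CSS code projection on the input qubits alone with CSS-state preparation and completely CSS-preserving post-processing. The closing remark about replacing ``CSS'' by ``Pauli/Clifford'' to recover the general stabiliser result also matches the paper's aside.

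However, there is a genuine gap. Your ``stabiliser-tableau lemma'' consolidates what the paper splits into \lemref{lemma:measure_then_decode} and \lemref{lemma:remove_ancillas} (backed by \lemref{lemma:codespace_input_ancilla_split}) into a single step, and you sketch its proof as a Gaussian elimination in the spirit of \lemref{lemma:CNOT_gaussian_elim}. But \lemref{lemma:CNOT_gaussian_elim} operates on a \emph{commuting, independent} set of CSS generators of a single stabiliser group; your $\Pi$ is an \emph{ordered product} of projectors onto $+1$ eigenspaces of CSS observables that need not commute, need not be independent, and whose order matters (so $\Pi$ is generally not a projector and $\Pi\ne\Pi^\dagger$). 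A flat symplectic Gaussian elimination over $\mathbb{F}_2$ ignores precisely this ordering and sign structure. The paper circumvents this by processing projections one at a time (\lemref{lemma:measure_then_decode}), maintaining as an invariant that the partial product is a CSS code projection with preparation and unitary post-processing, and carefully treating the three cases: $P(S)$ kills the state, $P(S)$ anticommutes with some stabiliser (giving a same-dimension code and a scalar $1/2$), or $P(S)$ is commuting and independent (lowering $k_j$ and preparing a single-qubit CSS state). You flag exactly the two obstacles that make this hard — decoupling of the input-derived and ancilla-derived registers, and ensuring no collective Hadamard is ever needed — but do not resolve them. The paper handles the first via \lemref{lemma:codespace_input_ancilla_split}, which shows that when a logical basis state factorises as $\ket{\psi_{\bms}}\otimes\ket{\psi}$ with $\ket{\psi}$ CSS, one can re-choose stabiliser generators so the code itself factorises over input and ancilla; and the second via the observation in \lemref{lemma:measure_then_decode} that the phase-adjustment unitary $\tilde U$ is CSS-preserving \emph{and} diagonal in the computational basis, hence lies in $\G(k)$ and cannot contain $H^{\otimes k}$. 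Both arguments are essential and neither follows from the Gaussian-elimination sketch as written; until you supply analogues of them, the central lemma of your proof is unsupported.
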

\begin{proof}
	To bring $\E$ into the desired form, we proceed in four steps (all auxiliary lemmas used will be presented after this proof):
	\begin{enumerate}
	\item By \lemref{lemma:CSS_magic_distillation_useful_form}, any $n$-to-$k$ CSS magic distillation protocol $\E$  can be decomposed as a sum of $n$-qubit operations 
		\begin{align}
		\E(\rho) = \sum_i q_i \E_i(\rho)
		\end{align}
		where $q_i$ is a probability and
		\begin{align} 
			\E_i(\rho) \coloneqq  \tr_{k+1,\dots,n+m}\left[K_i(\rho \otimes \ketbra{\psi_i}) K^\dagger_i\right]
		\end{align}
		for a CSS state $\ket{\psi_i}$ on $m$ ancillary qubits and Kraus operator $K_i$. This Kraus operator has the form
		\begin{align}
			K_i = U_i \left(\prod_{l=1}^N P(S_{i,l})\right),
		\end{align}
		where $U_i$ is a completely CSS-preserving unitary and $P(S_{i,l})$ is the projection onto the $+1$ eigenspace of a CSS observable $S_{i,l}$.

    \item By \lemref{lemma:glue}, each operation $\E_i$ can be decomposed into a sum 
    \begin{align}
        \E_i(\rho) =\sum_{\bms \in \{0,1\}^{n+m-k}} \E_{i,\bms}(\rho),
    \end{align}
    where each operation $\E_{i,\bms}$ first introduces $m$ ancillary qubits in the CSS state $\ket{\psi_i}$, then post-selects the $+1$ outcome in a sequence of projective measurements of CSS observables $S_{i,N}, \dots, S_{i,1}$, and then performs a CSS code projection on the input and ancillary qubits. Thus one can write (note that $\K$ depends on $i$ and $\bms$)
    \begin{align}
    \E_{i,\bms}(\rho) = \K \circ \P(S_{i,1}) \circ \dots \circ \P(S_{i,N})[\rho \otimes \ketbra{\psi_i}],
    \end{align}
    in which $\P(S_{i,j})$ post-selects the $+1$ outcome in a projective measurement of the CSS observable $S_{i,j}$, and $\K$ is the code projection of an $[[n+m,k]]$ CSS code.
    
    \item By repeated applications of \lemref{lemma:measure_then_decode}, we find that $\E_{i,\bms}$ performs a CSS code projection on the input and ancillary qubits, followed by preparing a CSS state and completely CSS-preserving post-processing. Thus one can write
    \begin{align}
    \E_{i,\bms}(\rho) = q'\ \U' \circ (\K'(\rho \otimes \ketbra{\psi_i})\otimes \ketbra{\varphi'}), \label{eq:local_1}
    \end{align}
    where $q'$ is a probability, $\U'$ is a completely CSS-preserving unitary channel on $k$ qubits, $\ket{\varphi'}$ is a CSS state on $k-k'$ qubits for some integer $k'$ in the range $0 \le k' \le k$, and $\K'$ is a code projection for an $[[n+m,k']]$ CSS code.
    
    \item By \lemref{lemma:remove_ancillas}, each CSS code projection $\K'$ on the input and ancillary qubits from \eqref{eq:local_1} can be reduced to a CSS code projection on the input qubits \emph{alone}, followed by preparing a CSS state and completely CSS-preserving post-processing. Thus one can write
    \begin{align}
    \K'(\rho \otimes \ketbra{\psi_i}) = q''\ \U'' \circ (\K''(\rho) \otimes \ketbra{\varphi''}),
    \end{align}
    where $q''$ is a probability, $\U''$ is a completely CSS-preserving unitary channel on $k'$ qubits, $\K''$ is the code projection for an $[[n,k'']]$ CSS code for some integer $k''$ in the range $0 \le k'' \le k'$, and $\ket{\varphi''}$ is a CSS state on $k'-k''$ qubits. Substituting back then immediately yields the result. \qedhere
\end{enumerate}
\end{proof}

\subsection*{Auxiliary lemmas}
\label{sec:auxiliary}
Before turning to the proofs of Lemmas used in each step of the main proof, we first present a result that will be useful throughout. 
	\begin{lemma}\label{lemma:CSS_unitary_observable_conjugate}
		Given any completely CSS-preserving unitary $U$ and CSS observable $S$ on $n$ qubits, ${S' \coloneqq U^\dagger S U}$ is another CSS observable of the same type as $S$. This further implies ${P(\pm S) U = U P(\pm S')}$.
	\end{lemma}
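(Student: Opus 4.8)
The plan is to reduce the first claim to a finite check on a generating set of the group $\G(n)$ of completely CSS-preserving unitaries, after which the second claim is a one-line consequence of unitarity. By \lemref{lemma:CCSSP_vs_CSSP}, any completely CSS-preserving $U$ can be written as a finite product $U = V_1 \cdots V_r$ with each $V_\ell \in \{\CNOT(i,j),\, X_i,\, Z_i\}$, so that $U^\dagger S U = V_r^\dagger \cdots V_1^\dagger\, S\, V_1 \cdots V_r$. Arguing by induction on $r$, it then suffices to show that conjugating an $X$-type (resp.\ $Z$-type) Pauli observable by a single one of these generators produces an $X$-type (resp.\ $Z$-type) observable, possibly with a flipped overall sign. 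This is the standard Clifford conjugation computation: $\CNOT(i,j)$ sends $X(\bma)$ to some $X(\bma')$ and $Z(\bma)$ to some $Z(\bma'')$ (exactly the relations already used in \eqref{eq:CNOT_workings} and \eqref{eq:CNOT_workings_Z}), while conjugation by a single-qubit $X_i$ or $Z_i$ fixes every $X$- or $Z$-string up to a factor $\pm 1$. Hence $S' = U^\dagger S U$ is a CSS observable of the same type as $S$. I would also remark that the word \emph{completely} is essential: by \lemref{lemma:CCSSP_vs_CSSP} a merely CSS-preserving unitary may differ from an element of $\G(n)$ by the collective Hadamard $H^{\otimes n}$, which by \eqref{eq:conjugation_collective_Hadamard_2} interchanges $X$-type and $Z$-type observables, so in that case only the CSS property and not the type would be preserved.

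For the second claim I would expand $P(\pm S) = \tfrac{1}{2}(\id^{\otimes n} \pm S)$ and use $U^\dagger U = \id^{\otimes n}$:
\begin{align}
P(\pm S)\, U &= \tfrac{1}{2}\big(U \pm S U\big) = \tfrac{1}{2}\big(U \pm U U^\dagger S U\big) \notag \\
&= U \cdot \tfrac{1}{2}\big(\id^{\otimes n} \pm S'\big) = U\, P(\pm S').
\end{align}
The point of invoking the first part is that $P(\pm S')$ is then genuinely a projector onto a $\pm 1$ eigenspace of a CSS observable, so that commuting $U$ past a CSS projection always yields another CSS projection -- precisely the manipulation needed when pushing decoding unitaries past syndrome projections.

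I do not anticipate a genuine obstacle here: the content is entirely bookkeeping. The only points requiring care are tracking signs (a ``CSS observable'' in this paper carries an overall $\pm 1$, as in the notation $(-1)^{b_i}S_i$ of \lemref{lemma:CNOT_gaussian_elim}, so the induction hypothesis must be stated with a sign), and the degenerate case $S = \pm\id^{\otimes n}$, which is fixed by every conjugation and so needs no separate argument.
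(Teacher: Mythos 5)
Your proof matches the paper's approach: both reduce to the generators of $\G(n)$ via \lemref{lemma:CCSSP_vs_CSSP}, verify the conjugation relations for $\CNOT(i,j)$, $X_i$, $Z_i$ acting on $X$- and $Z$-strings, and conclude by composing; your derivation of $P(\pm S)U = UP(\pm S')$ from unitarity is the same one-liner the paper leaves implicit. Your added remarks (the role of the word ``completely,'' sign bookkeeping, the degenerate case $S=\pm\id^{\otimes n}$) are correct and slightly more careful than the paper's terse proof but do not change the argument.
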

\begin{proof}
For convenience, let us label the $n$ qubits as $1,\dots,n$. Let $\bma$ be an arbitrary $n$-bit string, and $\bme_j$ be the $n$-bit string with 1 in its $j$th entry and 0 everywhere else. We then have the following conjugation relations 
\begin{align}
&Z_j [X(\bma)] Z_j = (-1)^{a_j} X(\bma)\\
&Z_j [Z(\bma)] Z_j = Z(\bma) \\
&X_j [X(\bma)] X_j = X(\bma)\\
& X_j [Z(\bma)] X_j = X_j (-1)^{a_j} Z(\bma)\\
&\CNOT(i,j) [X(\bma)] \CNOT(i,j) =  X(\bma + a_i \bme_j)\\
&\CNOT(i,j) [Z(\bma)] \CNOT(i,j) =  Z(\bma + a_j \bme_i),  
\end{align}
for any $i, j$ from the range $1,\dots,n$, where arithmetic is modulo 2. Since $S$ is of the form $\pm X(\bma)$ or $\pm Z(\bma)$, and $U$ is a product of $\CNOT(i,j), Z_j$ and $X_j$ for $i,j$ in the range $1,\dots,n$, we immediately arrive at the Lemma result.
\end{proof}

It is furthermore useful to recall that, given an $[[n,k]]$ CSS code $\C$ whose stabilizer group is generated by CSS observables $S_1,\dots,S_{n-k}$, Corollary \ref{corollary:completely_CSS_decoding} implies the code projection $\K$ of $\C$ can be represented using a completely CSS-preserving encoding unitary $U$ as  
\begin{align}
	&\K(\rho) \coloneqq \tr_{k+1,\dots,n}\left[U^\dagger P(\rho)P U\right], \nonumber\\
	&\text{ where }U^\dagger S_i U = \begin{cases} 
		Z_{k+i} \text{ if $S_i$ is $Z$-type}\\
		X_{k+i} \text{ if $S_i$ is $X$-type}
	\end{cases}  \nonumber\\
	&\text{ and } P = \prod_{i=1}^{n-k} P(S_i). \label{eq:generic_code_projection} 
\end{align} 
We remark that in the $k=0$ case, $\K$ simply projects onto a pure $n$-qubit CSS state and then discards it.

\subsubsection*{Step 1: Standard form for CSS magic distillation protocols}
	\begin{lemma}\label{lemma:CSS_magic_distillation_useful_form} 
		Any $n$-to-$k$ CSS magic distillation protocol $\E$ can be decomposed as a sum of $n$-qubit operations 
		\begin{align}
		\E(\rho) = \sum_i p_i \E_i(\rho),
		\end{align}
		where $p_i$ is a probability and
		\begin{align}
		 \E_i(\rho) \coloneqq  \tr_{k+1,\dots,n+m}\left[K_i(\rho \otimes \ketbra{\psi_i}) K^\dagger_i\right],
		\end{align}
		for a CSS state $\ket{\psi_i}$ on $m$ ancillary qubits and Kraus operator $K_i$. Furthermore, $K_i$ has the form
		\begin{align}
			 K_i = U_i \left(\prod_{l=1}^N P(S_{i,l})\right),
		\end{align}
		 where $U_i$ is a completely CSS-preserving unitary and $P(S_{i,l})$ projects onto the $+1$ eigenspace of a CSS observable $S_{i,l}$. 
	\end{lemma}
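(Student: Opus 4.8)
The plan is to start from the canonical form for a CSS circuit established in \lemref{lemma:CSS_operation_canonical_form} and massage each elementary branch into the advertised normal form. Since an $n$-to-$k$ CSS magic distillation protocol is by definition a CSS circuit, \lemref{lemma:CSS_operation_canonical_form} already writes $\E(\rho)=\sum_a p_a \sum_b \E_{a,b}(\rho)$ with $\E_{a,b}(\rho)=\tr_{\R_{a,b}}[K_{a,b}(\rho\otimes\sigma_{a,b})K_{a,b}^\dagger]$, where $\{p_a\}$ is a probability distribution, $\sigma_{a,b}$ is a CSS state on a common number $m$ of ancillary qubits (a common $m$ may be assumed without loss of generality, padding with discarded maximally mixed qubits, which are themselves CSS), $\R_{a,b}$ is a subset of the $n+m$ qubits of size $n+m-k$, and $K_{a,b}=\prod_{l}P(S_{(a,b),l})U_{(a,b),l}$ alternates $+1$-eigenspace projections of CSS observables with completely CSS-preserving unitaries. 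So only three things remain to be fixed: make the ancilla states pure, gather the unitaries to the left, and pin down which qubits are traced out.

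For the first, I would write each $\sigma_{a,b}$ as a finite convex combination of pure CSS states (possible because $\D_{css}$ is the convex hull of $\Omega_{css}$) and absorb the extra sum, together with the outer mixture, into a single index $i$; this replaces $\sigma_{a,b}$ by a pure $\ket{\psi_i}$ and yields $\E(\rho)=\sum_i p_i\,\tr_{\R_i}[K_i(\rho\otimes\ketbra{\psi_i})K_i^\dagger]$ with $K_i=\prod_{l=1}^{N}P(S_{i,l})U_{i,l}$. For the second step -- the real content -- I would repeatedly invoke \lemref{lemma:CSS_unitary_observable_conjugate}: for a completely CSS-preserving unitary $U$ and a CSS observable $S$ one has $P(S)U=U\,P(U^\dagger S U)$, with $U^\dagger S U$ again a CSS observable of the same ($X$- or $Z$-)type. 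Commuting $U_{i,1}$ past every projection to its left, then $U_{i,2}$, and so on, each projection it passes is replaced by a projection onto the $+1$-eigenspace of the original observable conjugated by a product of the unitaries moved so far -- hence still a CSS observable of the same type -- while the unitaries accumulate on the left as $U_i\coloneqq U_{i,1}U_{i,2}\cdots U_{i,N}$. This $U_i$ is completely CSS-preserving because $\G(n+m)$ is a group (\lemref{lemma:CCSSP_vs_CSSP}). Relabelling the resulting CSS observables $S_{i,l}$, we arrive at $K_i=U_i\prod_{l=1}^N P(S_{i,l})$ of exactly the stated form.

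Finally, since $\E$ outputs onto its first $k$ qubits, the retained register is $\{1,\dots,k\}$ and $|\R_i|=n+m-k$; any qubit permutation needed to make $\R_i=\{k+1,\dots,n+m\}$ is built from $\CNOT$ gates and is therefore completely CSS-preserving, so it can be absorbed into $U_i$ without leaving the normal form. Collecting everything gives $\E(\rho)=\sum_i p_i\,\tr_{k+1,\dots,n+m}[K_i(\rho\otimes\ketbra{\psi_i})K_i^\dagger]$ with $K_i=U_i\prod_l P(S_{i,l})$, which is the claim. I expect the only non-mechanical part to be the unitary-gathering argument: one has to check that conjugation by a completely CSS-preserving unitary keeps an observable CSS \emph{and} preserves its $X$/$Z$ type, so that the resulting projections are genuinely projections onto CSS-observable eigenspaces -- which is precisely what \lemref{lemma:CSS_unitary_observable_conjugate}, together with the group structure of $\G(n+m)$, supplies.
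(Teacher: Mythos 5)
Your proof is correct and follows essentially the same route as the paper: invoke \lemref{lemma:CSS_operation_canonical_form} for the canonical branch decomposition, conjugate the intermediate unitaries to the left via \lemref{lemma:CSS_unitary_observable_conjugate} (the order of this step and the purification of the ancilla CSS states is immaterial), and decompose each mixed ancilla CSS state into pure CSS states. Your extra remark about absorbing qubit permutations (\CNOT-built SWAPs) into $U_i$ to fix the traced-out register $\{k+1,\dots,n+m\}$ is a small bookkeeping detail the paper leaves implicit, but it does not change the substance of the argument.
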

\begin{proof}
    By \lemref{lemma:CSS_operation_canonical_form}, we can decomppose $\E$ as follows
    \begin{align}
	\E(\rho) = \sum_i q_i \E_i(\rho),
	\end{align}
	 where $q_i$ is a probability and
	 \begin{align}
	 \E_i(\rho) = \tr_{k+1,\dots,n+m}\left[K_i(\rho\otimes \sigma_i)K_i^\dagger\right]
	\end{align}
    for a CSS state $\sigma_i$ on $m$ ancillary qubits and Kraus operator $K_i$. Furthermore, $K_i$ has the form
    \begin{align}
    	K_i = \prod_{l=1}^N P(S_{i,l}) U_{i,l},
    \end{align}
    where $U_{i,l}$ is a completely CSS-preserving unitary and $P(S_{i,l})$ projects onto the $+1$ eigenspace of the CSS observable $S_{(i,j),l}$. We then conjugate every completely CSS-preserving unitary $U_{(i,j),l}$ to the beginning of its Kraus operator $K_{i,j}$ as shown in \lemref{lemma:CSS_unitary_observable_conjugate}, where we compose them into a single completely CSS-preserving unitary $U_i$. Decomposing each CSS state $\sigma_i$ on ancillary qubits into a statistical mixture of pure CSS states then yields the Lemma result. \qedhere 
\end{proof}

\subsubsection*{Step 2: Exposing the decoding}
\begin{lemma}\label{lemma:glue}
	Consider a channel $\E$ on $n \ge 2$ qubits that performs a completely CSS-preserving unitary $U$ and then discards the final $n-k$ qubits where $1 \le k < n$, 
	\begin{align}
		\E(\rho) = \tr_{k+1, \dots, n}\left[U^\dagger (\rho)U \right].
	\end{align}
	Then $\E$ can be decomposed into a sum over CSS code projections $\C_\bms$ indexed by the computational basis $\{\bms\}$ on the discarded qubits,
	\begin{align}
		\E(\rho) = \sum_{\bms \in \{0,1\}^{n-k}} \K_\bms(\rho).
	\end{align} 
\end{lemma}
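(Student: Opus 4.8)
The plan is to resolve the identity on the discarded qubits into a computational basis measurement, which splits $\E$ into post-selected branches, and then to recognise each branch as the codespace projection of a CSS code obtained by conjugating the trivial code generated by $Z_{k+1},\dots,Z_n$ through $U$.

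First I would use the elementary fact (already invoked in the proof of \lemref{lemma:CCSSP_powers}) that the partial trace over the last $n-k$ qubits is unchanged if one first performs a computational basis measurement on those qubits. Applying this inside $\tr_{k+1,\dots,n}$ and defining, for each syndrome string $\bms \in \{0,1\}^{n-k}$,
\begin{equation}
\K_\bms(\rho) \coloneqq \tr_{k+1,\dots,n}\left[(\id^{\otimes k}\otimes\ketbra{\bms})\,U^\dagger \rho\, U\,(\id^{\otimes k}\otimes\ketbra{\bms})\right],
\end{equation}
one gets $\E(\rho) = \sum_{\bms}\K_\bms(\rho)$. Using $\id^{\otimes k}\otimes\ketbra{\bms} = U^\dagger P_\bms U$ with $P_\bms \coloneqq U(\id^{\otimes k}\otimes\ketbra{\bms})U^\dagger$, this rearranges to $\K_\bms(\rho) = \tr_{k+1,\dots,n}\left[U^\dagger P_\bms \rho P_\bms U\right]$.

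Next I would identify $\C_\bms$ as the subspace stabilised by $\langle (-1)^{s_i}\,U Z_{k+i}U^\dagger\rangle_{i=1,\dots,n-k}$. Since $\id^{\otimes k}\otimes\ketbra{\bms}$ is the codespace projector of the code stabilised by $\langle(-1)^{s_i}Z_{k+i}\rangle_i$, conjugating by $U$ shows $P_\bms = \prod_{i=1}^{n-k}\tfrac{1}{2}\big(\id^{\otimes n}+(-1)^{s_i}UZ_{k+i}U^\dagger\big)$ is the codespace projector of $\C_\bms$. By \lemref{lemma:CSS_unitary_observable_conjugate} each generator is a ($Z$-type) CSS observable; they commute and are independent because the $Z_{k+i}$ are; none equals $\id^{\otimes n}$; and no nonempty product of them equals $-\id^{\otimes n}$ (such a product has the form $(-1)^{\sum_{i\in A}s_i}U\big(\prod_{i\in A}Z_{k+i}\big)U^\dagger$, which is $\pm\id^{\otimes n}$ only for $A=\emptyset$, giving $+\id^{\otimes n}$). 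Hence $\C_\bms$ is a genuine $[[n,k]]$ CSS code with $\tr[P_\bms]=2^k$. Finally, $U^\dagger$ lies in $\G(n)$ -- the completely CSS-preserving unitaries on $n$ qubits form the group $\G(n)$ by \lemref{lemma:CCSSP_vs_CSSP} -- and maps $\C_\bms$ onto the first $k$ qubits, so it is a completely CSS-preserving decoding unitary for $\C_\bms$; therefore $\K_\bms$ is exactly an $[[n,k]]$ CSS code projection in the sense of \eqref{eq:K_operation}. (If one wants the standard form of \eqref{eq:generic_code_projection}, with the syndrome register decoded into $\ket{\bmo}$, one replaces the decoding unitary $U^\dagger$ by $(\id^{\otimes k}\otimes X(\bms))\,U^\dagger$, which is still in $\G(n)$.) Summing over $\bms$ yields the claim.

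The only real work is the bookkeeping in the middle step: checking that the \emph{signed} stabiliser group $\langle(-1)^{s_i}UZ_{k+i}U^\dagger\rangle$ satisfies every defining condition of an $[[n,k]]$ CSS code -- in particular that the generators stay $X$- or $Z$-type under conjugation by $\G(n)$ and that $-\id^{\otimes n}$ is never produced -- and matching $\K_\bms$ to the precise definition of a code projection, folding the nontrivial measured syndrome $\bms$ into the (still completely CSS-preserving) decoding unitary. There is no genuine obstacle here; everything reduces to the group structure of $\G(n)$, \lemref{lemma:CSS_unitary_observable_conjugate}, and the measurement/partial-trace identity.
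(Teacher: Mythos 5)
Your proof is correct and follows essentially the same route as the paper's: resolve the identity on the discarded register into a computational-basis measurement, split $\E$ into branches $\K_\bms$, rewrite each branch as $\tr_{k+1,\dots,n}[U^\dagger P_\bms \rho P_\bms U]$, and recognise $P_\bms$ as the codespace projector of a CSS code with stabiliser generators obtained by conjugating $Z_{k+1},\dots,Z_n$ through $U$. The one cosmetic difference is how the syndrome signs $(-1)^{s_i}$ are handled: the paper absorbs them into the decoding unitary by composing with $U_\bms = X(\bms)$ on the discarded register (so that the code for $\C_\bms$ is always written with unsigned generators $U'Z_{k+i}U'^\dagger$), whereas you keep the signed generators $(-1)^{s_i}UZ_{k+i}U^\dagger$ and verify directly that they generate a valid $[[n,k]]$ CSS stabiliser group (no $-\id^{\otimes n}$, independence, $Z$-type preserved by \lemref{lemma:CSS_unitary_observable_conjugate}) -- and you even note the paper's normalisation in parentheses. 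Both are fine; the paper's choice makes the match to \eqref{eq:generic_code_projection} immediate, while yours makes the dependence of $\C_\bms$ on $\bms$ more visible and spells out the stabiliser-group checks explicitly.
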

\begin{proof}
	The channel $\E$ is unchanged by performing a measurement in the computational basis $\{\ket{\bms}\}$ of the final $n-k$ qubits before discarding them. Thus one can write
	\begin{align}
		\E(\rho) = \sum_{\bms \in \{0,1\}^{n-k}} \K_\bms(\rho),
	\end{align}
	where we have defined
	\begin{align} 
	\K_\bms(\rho) \coloneqq \tr_{k+1, \dots, n}\left[\id^{\otimes k} \otimes \ketbra{\bms} (U^\dagger(\rho)U) \id^{\otimes k} \otimes \ketbra{\bms}\right].
	\end{align}
  Let $U_\bms$ be a completely CSS-preserving unitary on the final $n-k$ qubits defined as ${U^\dagger_\bms  \coloneqq \bigotimes_{i=1}^{n-k} (X)^{s_i}}$. By the cylic property of the trace, we obtain
	\begin{align}
		\K_\bms(\rho) &= \tr_{k+1, \dots, n}\left[K(\rho)K^\dagger\right],
	\end{align}
	for the Kraus operator
	\begin{align}
		K &\coloneqq \id^{\otimes k} \otimes \left(U^\dagger_\bms\ketbra{\bms}\right) U^\dagger  \\
		& = \id^{\otimes k} \otimes \ketbra{\bm{0}} \left([\id^{\otimes k} \otimes U^\dagger_\bms] U^\dagger\right). 
	\end{align}
	Conjugating $U' \coloneqq U \left(\id^{\otimes k} \otimes U_\bms\right)$ past $\id^{\otimes k} \otimes \ketbra{\bm{0}} = \prod_{i=1}^{n-k} P(Z_{k+i})$ in accordance with \lemref{lemma:CSS_unitary_observable_conjugate}, we see by comparison with \eqref{eq:generic_code_projection} that $\K_\bms$ is a code projection for an $[[n,k]]$ CSS code stabilized by $\langle U' Z_{k+1} U^{'\dagger}, \dots, U' Z_n U^{'\dagger}\rangle$.\qedhere 
\end{proof} 

\subsubsection*{Step 3: Removing the projections}
\begin{lemma}\label{lemma:measure_then_decode}
		Let $\E$ be an operation on $n$ qubits that projectively measures a CSS observable $S$, post-selects the $+1$ outcome, and then carries out a code projection $\K$ of an $[[n,k]]$ CSS code $\C$. Thus one can write
		\begin{align}
			\E(\rho) \coloneqq \K \circ \P(S)(\rho),\ \P(S)(\cdot) \coloneqq P(S)(\cdot)P(S) 
		\end{align}
		
		There are three possibilities for how $\E$ transforms $\rho$:
		\begin{enumerate}
			\item that $\E(\rho) = 0$ for all $\rho$.
			\item that $\E$ is a code projection $\K'$ for another $[[n,k]]$ CSS code $\C'$, followed by completely CSS-preserving unitary post-processing $\tilde{\U}$. Thus one can write
			\begin{align}
			\E(\rho) = p\ \tilde{\U} \circ \K'(\rho),
			\end{align}
			where $p > 0$ is a probability. One can further find logical bases $\left\{\ket{\bms_{\C'}} | \bms \in \{0,1\}^k \right\}$ and $\left\{\ket{\bms_\C}|\bms \in \{0,1\}^k \right\}$, respectively generated by completely CSS-preseving encoding unitaries for the new code $\C'$ and the old code $\C$, such that 
			\begin{align}
				P(S)\ket{\bms_\C} \propto \ket{\bms_{\C'}}.
			\end{align}
			
			\item for $k \ge 1$, that $\E$ is a code projection $\K'$ for an $[[n,k-1]]$ CSS code $\C'$, followed by preparing a CSS state $\ket{\varphi}$ on a single qubit and completely CSS-preserving unitary post-processing $\tilde{\U}$, i.e.
			\begin{align}
				\E(\rho) = \tilde{\U} \circ [\K'(\rho) \otimes \ketbra{\varphi}].
			\end{align}
			Furthermore, we can find logical bases $\left\{\ket{\bms'_{\C'}} | \bms \in \{0,1\}^{k-1} \right\}$ and $\left\{\ket{\bms_\C}|\bms \in \{0,1\}^k \right\}$, respectively generated by completely CSS-preseving encoding unitaries for the new code $\C'$ and the old code $\C$, such that 
			\begin{align}
				P(S) \ket{f(\bms')_\C} \propto \ket{\bms'_{\C'}},
			\end{align}
			for some function $f:\{0,1\}^{k-1} \rightarrow \{0,1\}^k$.
		\end{enumerate} 
	\end{lemma}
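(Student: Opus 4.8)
The plan is to classify, up to a change of stabilizer generators, how the CSS observable $S$ sits relative to the stabilizer group $\S$ of $\C$, and to analyse each configuration by manipulating the projectors appearing in the definition of a code projection. Throughout I would use the representation \eqref{eq:generic_code_projection}: fix CSS generators $S_1,\dots,S_{n-k}$ of $\S$ and a completely CSS-preserving encoding unitary $U$ with $U^\dagger S_i U$ equal to $Z_{k+i}$ or $X_{k+i}$ (whose existence is Corollary~\ref{corollary:completely_CSS_decoding}), so that $\K(\cdot)=\tr_{k+1,\dots,n}[U^\dagger P_\C(\cdot)P_\C U]$ with $P_\C=\prod_i P(S_i)$. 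Then $\E(\rho)=\tr_{k+1,\dots,n}[U^\dagger P_\C P(S)\,\rho\,P(S)P_\C U]$, and the whole argument turns on the operator $P_\C P(S)$.

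First I would run the CSS-cleaning step used in the proof of \lemref{lemma:CSS_proj} (equivalently \lemref{lemma:CNOT_gaussian_elim}): among any CSS generating set, the generators anticommuting with $S$ are all of one Pauli type, so after multiplying generators together one may assume that either (i) every generator commutes with $S$, or (ii) exactly one generator $S_1$ anticommutes with $S$ (and is of the type opposite to $S$). In configuration (i) one has $S\in\S$ (so $P(S)P_\C=P_\C$ and $\E=\K$, which is possibility~2 with $\C'=\C$, $\tilde\U=\id$, $p=1$), or $-S\in\S$ (so $P(S)P_\C=0$, possibility~1), or $\pm S\notin\S$. In the last case $\S':=\langle S_1,\dots,S_{n-k},S\rangle$ is a valid stabilizer for an $[[n,k-1]]$ CSS code $\C'$ and $P(S)P_\C=P_\C P(S)=P_{\C'}$; choosing $U$ (by absorbing a CNOT-circuit on the first $k$ qubits and a sign-fixing gate, exactly as in the proof of \lemref{lemma:CNOT_gaussian_elim}) so that $U^\dagger S U$ acts on the range of $U^\dagger P_\C U$ as a single-qubit $X_k$ or $Z_k$, one rewrites $\E(\rho)$ and reads off $\E(\rho)=\K'(\rho)\otimes\ketbra{\varphi}$ with $\ket{\varphi}=\ket{+}$ or $\ket{0}$ — possibility~3 with $\tilde\U=\id$, the logical bases and the function $f$ being dictated by $U$ and by which logical value of $\C$ the outcome $S=+1$ forces.

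The substantive case is configuration (ii). Here I would use the elementary identities for anticommuting Hermitian involutions $S,S_1$: with $G\coloneqq\tfrac{1}{\sqrt{2}}(S+S_1)$ one has $G^2=\id$, $GSG=S_1$, $GS_1G=S$, $P(S)GP(S)=\tfrac{1}{\sqrt{2}}P(S)$, $P(S_1)GP(S_1)=\tfrac{1}{\sqrt{2}}P(S_1)$, whence $P_\C P(S)=\tfrac{1}{\sqrt{2}}P_\C G$ and $P_\C G=GP_{\C'}$ for $\C'$ the $[[n,k]]$ CSS code with stabilizer $\langle S,S_2,\dots,S_{n-k}\rangle$. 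This turns $\E(\rho)$ into $\tfrac{1}{2}\tr_{k+1,\dots,n}[(GU)^\dagger P_{\C'}\rho P_{\C'}(GU)]$, i.e. a code projection of $\C'$ but carried out with the non-CSS decoder $GU$ rather than a canonical CSS encoder $U'$ of $\C'$. To reach possibility~2 one writes $GU=U'W$; since $W$ conjugates one decoded single-qubit stabilizer set of $\C'$ onto another, it factorises as $W_{\mathrm{log}}$ on the $k$ output qubits tensored with a unitary on the $n-k$ discarded qubits, the latter dropping out under the partial trace, so $\E(\rho)=\tfrac{1}{2}\tilde\U\circ\K'(\rho)$ with $\tilde\U(\cdot)=W_{\mathrm{log}}^\dagger(\cdot)W_{\mathrm{log}}$. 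That $\tilde\U$ is completely CSS-preserving is seen by choosing the logical operators of $\C'$ to be CSS observables commuting with the destabilizer $S_1$ (possible by multiplying with $S\in\S'$ when needed, $S$ and $S_1$ having opposite type), so that $G$ fixes them and $W_{\mathrm{log}}$ only adjusts $X$-type logicals among themselves and $Z$-type among themselves; the relation $P(S)\ket{\bms_\C}\propto\ket{\bms_{\C'}}$ follows because $P(S)$ restricts to an invertible map $\C\to\C'$ with inverse proportional to $P(S_1)$.

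I expect configuration (ii) to be the main obstacle, specifically the bookkeeping needed to show that the leftover post-processing $W_{\mathrm{log}}$ really is completely CSS-preserving even though the intermediate unitary $G$ is Hadamard-like and is not, together with choosing decoders, logical operators and signs consistently so that the two logical bases and the map $f$ all come out compatibly; by contrast, possibility~1 and the commuting sub-cases of configuration (i) are comparatively routine consequences of the projector algebra.
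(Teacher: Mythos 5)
Your case split matches the paper's up to relabeling: your configuration~(i) with $S\in\S$ or $-S\in\S$ is the paper's case~(iii), your configuration~(i) with $\pm S\notin\S$ is the paper's case~(ii), and your configuration~(ii) is the paper's case~(i). The routine cases are handled essentially the same way. The substantive difference is in the anticommuting case. There the paper works directly with logical basis states: after cleaning the generating set of $\S(\ket{\bms_\C})$ so that exactly one CSS generator anticommutes with $S$, it computes $P(S)\ket{\bms_\C}=\tfrac{1}{\sqrt 2}\ket{\psi_\bms}$ and invokes \lemref{lemma:CNOT_gaussian_elim} to produce a CSS-encoded logical basis $\{\ket{\bms_{\C'}}\}$ with $\ket{\psi_\bms}=e^{-i\theta_\bms}\ket{\bms_{\C'}}$; the residual post-processing is the \emph{diagonal} phase unitary $\tilde U=\sum_\bms e^{i\theta_\bms}\ketbra{\bms}$, and the paper shows $\tilde U$ is completely CSS-preserving by an indirect argument (feed in CSS inputs, use \lemref{lemma:CCSSP_vs_CSSP} and diagonality to rule out a collective Hadamard factor). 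You instead introduce the "local Hadamard" $G=\tfrac{1}{\sqrt 2}(S+S_1)$ and use the operator identities $P_\C P(S)=\tfrac{1}{\sqrt 2}P_\C G=\tfrac{1}{\sqrt 2}G P_{\C'}$ to expose $GU$ as an (non-CSS) decoder of $\C'$, then write $GU=U'W$. This is a cleaner algebraic route and, unlike the paper's, makes the factor $p=\tfrac12$ manifest from the projector identity. But you then owe two extra arguments that the paper's route sidesteps: (a)~that the leftover unitary $W$ acts, on the decoded codespace, as $W_{\mathrm{log}}\otimes(\text{ancilla map})$ — this is true, but it is a statement about the restriction of $W$ to the codespace, not a global tensor factorization, and needs the standard "isometry between product subspaces" argument to be spelled out; and (b)~that $W_{\mathrm{log}}$ is \emph{completely} CSS-preserving — your idea of choosing logical operators commuting with $S_1$ so that $G$ fixes them is plausible, but you still need the final step (type-preserving Clifford on $k$ qubits implies membership in $\G(k)$), which is essentially \lemref{lemma:CCSSP_vs_CSSP} applied from the other direction; by contrast the paper's residual unitary is diagonal by construction, which makes this last step one line. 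Also note a small slip: the function $f$ belongs to possibility~3 (your configuration~(i), independent sub-case), not to configuration~(ii).
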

\begin{proof}
	Let $S_{k+1},\dots,S_n$ be a set of $n-k$ CSS observables that generate the stabilizer group of $\C$. Beginning with the representation for the code projection $\K$ given by \eqref{eq:generic_code_projection}, we can conjugate the encoding unitary past the codespace projector in accordance with \lemref{lemma:CSS_unitary_observable_conjugate}, and arrive at the alternative representation
	\begin{align}
	\K(\rho) = \tr_{k+1, \dots, n}\left[P U^\dagger(\rho) U P \right], \label{eq:code_projection_decode_first}	
	\end{align}
	where $U$ is a completely CSS-preserving encoding unitary for $\C$ and $P$ is now its no-error syndrome projector. Thus one can write
	\begin{align}
	 &U^\dagger S_i U = C_i \text{ and } P \coloneqq \prod_{i=k+1}^{n} P(C_i), \text{ where }\\
	&C_i \coloneqq \begin{cases}
	Z_i& \text{ if $S_i$ is $Z$-type}\\
	X_i& \text{ if $S_i$ is $X$-type}\\
	\end{cases} \text{ for } i = k+1,\dots,n.
	\end{align}
	Following \eqref{eq:ur_logical_basis_states}, $U$ generates the following logical basis for $\C$:
	\begin{align}
	\forall \bms \in \{0,1\}^k: &\ket{\bms_\C} \coloneqq U \left(\ket{\bms} \bigotimes_{i=k+1}^{n} \ket{\phi_i}\right) \label{eq:generic_logical_basis}\\
	\text{ where }& \ket{\phi_i} \coloneqq \begin{cases}
	\ket{0} &\text{ if $S_i$ is $Z$-type}\\
	\ket{+} &\text{ if $S_i$ is $X$-type.}
	\end{cases} 
	\end{align}
	By expressing $P$ in \eqref{eq:code_projection_decode_first} as $P = \id^{\otimes k} \bigotimes_{i=1}^{n-k} \ketbra{\phi_i}$ and expanding $\id^{\otimes k}$ in the computational basis, we arrive at another alternative form for $\K$,
		\begin{align}
		\K(\rho) = K(\rho)K^\dagger, \text{ where } K^\dagger \coloneqq \sum_{\bms \in \{0,1\}^k} \ketbra{\bms_\C}{\bms}. \label{eq:code_projection_logical_basis_states} 
		\end{align} 
	We now show how $\E$ can be manipulated into one of the forms stated by the Lemma depending on the relationship between $S$ and the CSS observables generating the stabilizer group for $\C$.
	
	\begin{enumerate}
	\item[(i)] \textbf{$S$ does \emph{not} commute with at least one generator $S_{k+1},\dots,S_n$.} We assume without loss of generality that $S$ does not commute with $S_{k+1}$. We now show how $\E$ can be manipulated into the \emph{second} form stated by the Lemma.
	
	Using the form of $\K$ given in \eqref{eq:code_projection_logical_basis_states}, we can express $\E$ as 
	\begin{align}
	\E(\rho) = M(\rho)M^\dagger, \label{eq:projection_initial_logic_states_rep}
	\end{align} 
	where $M$ is the Kraus operator
	\begin{align}
		M^\dagger \coloneqq \sum_{\bms \in \{0,1\}^k} P(S) \ketbra{\bms_\C}{\bms}. 
	\end{align}
	
	The stabilizer group $\S(\ket{\bms}_\C)$ for the logical basis state $\ket{\bms_\C}$ in $\C$ can be generated from the set of $n$ CSS observables $\{ (-1)^{s_1} U Z_1 U^\dagger, \dots, (-1)^{s_k} U Z_k U^\dagger, S_{k+1}, \dots, S_n \}$. We can multiply all other members of this set that do \emph{not} commute with $S$ by $S_{k+1}$ and, as all CSS observables that do not commute with $S$ are of the same type, arrive at a set of $n$ CSS observables $\{S_1,\dots,S_n\}$ such that
	\begin{align}
		\S(\ket{\bms_\C}) = \langle (-1)^{s_1} S_1, \dots, (-1)^{s_k} S_k, S_{k+1},\dots,S_n\rangle.
	\end{align}
	This implies
	\begin{align}
	P(S)\ket{\bms_\C} = \frac{1}{\sqrt{2}} \ket{\psi_\bms} 
	\end{align} 
	where $\ket{\psi_\bms}$ is a CSS state stabilized by 
	\begin{align}
	\S(\ket{\psi_\bms}) = \langle (-1)^{s_1} S_1, \dots, &(-1)^{s_k} S_k, S, S_{k+2} \dots, S_n \rangle.
	\end{align}
	
	Applying \lemref{lemma:CNOT_gaussian_elim} to $S_1,\dots, S_k, S, S_{k+2}, \dots, S_n$, we can find a completely CSS-preserving encoding unitary $U'$ for the $[[n,k]]$ CSS code $\C'$ stabilized by $\langle S, S_{k+2},\dots,S_n\rangle$ that generates a logical basis $\{\ket{\bms}_{\C'}\}$ in $\C'$ such that $\ket{\bms_{\C'}}$ shares the stabilizer group of $\ket{\psi_\bms}$. Therefore, $\ket{\bms_{\C'}}$ and $\ket{\psi_\bms}$ only differ up to a phase, and one can write 
	\begin{align}
	P(S)\ket{\bms_{\C}} = \frac{1}{\sqrt{2}} e^{-i\theta_\bms}\ket{\bms_{\C'}}. \label{eq:logical_basis_states_relationship_non_commuting}
	\end{align}
	Substituting into \eqref{eq:projection_initial_logic_states_rep}, we obtain
	\begin{align}
	\E(\rho) = \frac{1}{2} \tilde{U} \left[\K'(\rho)\right] \tilde{U}^\dagger, \label{eq:E_final_form_non_commuting}
	\end{align}
	where we have defined the following unitary on $k$ qubits to adjust for the phase differences between $\ket{\psi_\bms}$ and $\ket{\bms_{\C'}}$,
	\begin{align}
	\tilde{U} \coloneqq \sum_{\bms \in \{0,1\}^k} e^{i\theta_\bms}\ketbra{\bms},
	\end{align}
	as well as the operation
	\begin{align}
	\K'(\cdot) \coloneqq K'(\cdot)K^{'\dagger},\ K^{'\dagger} \coloneqq \sum_{\bms \in \{0,1\}^k} \ketbra{\bms_\C'}{\bms}.
	\end{align}
	By comparison with\eqref{eq:code_projection_logical_basis_states}, we see that $\K'$ is the CSS code projection of $\C'$. \eqref{eq:logical_basis_states_relationship_non_commuting} and \eqref{eq:E_final_form_non_commuting} now match the second statement of the Lemma provided that $\tilde{U}$ is completely CSS-preserving, which we now proceed to show.
	
	We first express $\E$ using the form of $\K$ given by \eqref{eq:code_projection_decode_first} as
	\begin{align}
		\E(\rho) = \tr_{k+1, \dots, n}\left[P U^\dagger P(S)\rho P(S) U P\right]. \label{eq:E_rho_pre_proj_conj}
	\end{align}
	Let $\ket{\psi}$ be a CSS state on $k$ qubits expressed as $\ket{\psi} = \sum_{\bms \in \{0,1\}^k} c_\bms \ket{\bms}$ in the computational basis. Since $U'$ is completely CSS-preserving, it encodes $\ket{\psi}$ as another CSS state $\ket{\psi_{\C'}} = \sum_\bms c_\bms \ket{\bms_{\C'}}$ in $\C'$. By inputting $\ketbra{\psi_{\C'}}$ into the two forms of $\E$ given by \eqref{eq:E_final_form_non_commuting} and \eqref{eq:E_rho_pre_proj_conj} and equating their outputs, we obtain 
	\begin{align}
	\frac{1}{2} \tilde{U}\ketbra{\psi}\tilde{U}^{\dagger} = \tr_{k+1,\dots,n}\left[P U^\dagger P(S)(\ketbra{\psi_{\C'}})P(S) U P\right]. 
	\end{align}
	We see from the proof of \lemref{lemma:CCSSP_powers} that the right hand side carries out a completely CSS-preserving operation. Therefore, since $\ket{\psi_{\C'}}$ is CSS, $\tilde{U}\ket{\psi}$ must also be CSS. As this argument applies to \emph{any} pure CSS state $\ket{\psi}$ on $k$ qubits, we conclude that $\tilde{U}$ is CSS-preserving. 
	
	From the proof of \lemref{lemma:CCSSP_vs_CSSP}, we can express $\tilde{U}$ as $\tilde{U} = \left[H^{\otimes k}\right]^a V$, where $V$ is a \emph{completely} CSS-preserving unitary and $a$ is a binary digit. While the $\CNOT$-gate and single-qubit $X$- and $Z$-gates map computational basis states onto computational basis states, the collective Hadamard gate does not. Since $\tilde{U}$ is diagonal in the computational basis, we conclude that $a=0$, so $\tilde{U}$ is indeed completely CSS-preserving.
	
	\item[(ii)] \textbf{$S$ commutes with $S_{k+1},\dots, S_n$ and is independent of them.} This is only possible when $k \ge 1$. In this case, we show that $\E$ can be manipulated into the \emph{third} form stated in the Lemma.
	
	By conjugating $P(S)$ past $U$ in \eqref{eq:E_rho_pre_proj_conj} using \lemref{lemma:CSS_unitary_observable_conjugate}, we can express $\E$ as
	\begin{align}
		\E(\rho) = \tr_{k+1, \dots, n}\left[P P(S') U^\dagger(\rho) U P(S') P \right], \label{eq:E_rho_expanded}
 	\end{align}
 	where $S' \coloneqq U^\dagger S U$ is another CSS observable. 
 	
 	Consider the term 
 	\begin{align}
 		P(S') P = P(S') P(C_{k+1}) \dots P(C_n) \label{eq:P_S'_P_expanded}
 	\end{align}
 	from \eqref{eq:E_rho_expanded}. Since $S'$ commutes with $C_{k+1},\dots,C_n$, and $C_i$ is a CSS observable on qubit $i$ alone, $S'$ must be trivial on any qubit $i$ out of the last $n-k$ where $C_i$ is a different type of CSS observable from $S'$. 
 	
 	Given any two Pauli observables $O_2$ and $O_2$, we have that
 	\begin{align}
 		 P(O_1)P(O_2) = P(O_1 O_2)P(O_2). \label{eq:new_set_obs_commute_independent}
 	\end{align}
 	We can therefore apply \eqref{eq:new_set_obs_commute_independent} to $S'$ and the $C_i$ in \eqref{eq:P_S'_P_expanded} to eliminate the part of $P(S')$ that acts on the last $n-k$ qubits, i.e., there exists a CSS observable $S''$ on the first $k$ qubits \emph{alone} such that 
 	\begin{align}
 		P(S') P = [P(S'') \otimes \id^{\otimes (n-k)}] P.  \label{eq:P_S'_reduction}
 	\end{align}
 	Since $S'$ must also be independent of $C_k+1,\dots,C_n$, we conclude that $S''$ is not proportional to the identity.
 	
 	We now consider the cases where $S''$ is $Z$-type and $X$-type separately. To this end, it is first useful to define a completely CSS-preserving $k$-qubit unitary that moves the $j$th qubit in a block of $k$ to the $k$th position,
 	\begin{align}
 	\SWAP(j \rightarrow k) \coloneqq \begin{cases}
 		\prod\limits_{i=j}^k \text{SWAP}(i,i+1)& \text{ if } j < k\\
 		\id^{\otimes k}&  \text{ otherwise,} 
 	\end{cases}
 	\end{align}
 	 where $\text{SWAP}(i,j)$ swaps qubits $i$ and $j$ and is carried out by $\CNOT(i,j) \CNOT(j,i) \CNOT(i,j)$.
	\begin{enumerate}
		\item \textbf{$S''$ is $Z$-type.} We therefore represent $S''$ as $S'' = (-1)^b Z(\bmz)$, where $b$ is a binary digit and $\bmz$ is an $k$-bit string of which at least one bit $j$ is such that $z_j = 1$. 
		
		Consider the following completely CSS-preserving unitary on $k$ qubits
		\begin{align}
			\tilde{U}^\dagger \coloneqq \SWAP(j \rightarrow k) \circ X^b_j \circ U_C,
		\end{align}
		where $U_C$ is defined as
		\begin{align} 
		 U_C \coloneqq \begin{cases} 
		\prod\limits_{\substack{i=1\\z_i = 1, i \neq j}}^k \CNOT(i,j)& \text{ if } \exists i \neq j: z_i = 1,\\
		\id^{\otimes k}& \text{ otherwise.}
		\end{cases}  
		\end{align}
		Because $U_C$ acts as $U_C[Z(\bmz)]U^\dagger_C = Z_j$, we have that 
		\begin{align}
		\tilde{U}^\dagger P(S'') \tilde{U} = P(Z_k), \label{eq:P_S''_Z_rotate}
		\end{align}
		Using \eqref{eq:P_S''_Z_rotate} to substitute $P(S'')$ in \eqref{eq:P_S'_reduction}, we have by \eqref{eq:E_rho_expanded} that 
		\begin{align}
		\E(\rho) = \tilde{U}\left(\tr_{k+1, \dots, n}\left[P' U^{'\dagger} (\rho) U' P' \right] \right) \tilde{U}^\dagger,
		\end{align}
		where we have defined
		\begin{align}
			 P' \coloneqq P(Z_k) P \text{ and } U' \coloneqq U \tilde{U} \otimes \id^{\otimes (n-k)}.
		\end{align}  
		Since $k$th qubit outputted by the part inside $\tilde{U}(\cdot)\tilde{U}^\dagger$ is always $\ket{0}$, we can simply discard the $k$th qubit as well and re-prepare it. Thus one can write
		\begin{align}
		\E(\rho) = \tilde{U}\left(\K'(\rho) \otimes \ketbra{0}\right) \tilde{U}^\dagger,
		\end{align}
		where we have defined 
		\begin{align}
			\K'(\rho) \coloneqq \tr_{k, \dots, n}\left[P' U^{'\dagger} (\rho) U' P' \right].
		\end{align}
		From \eqref{eq:code_projection_decode_first}, we see that $\K'$ is the code projection of an $[[n,k-1]]$ CSS code $\C'$ stabilized by $\langle U' Z_k U^{'\dagger}, U' C_{k+1} U^{'\dagger},\dots, U'C_n U^{'\dagger} \rangle$ with a completely CSS-preserving encoding unitary $U'$. By \eqref{eq:generic_logical_basis}, $U'$ generates the following logical basis for $\C'$:
		\begin{align}
			\forall \bms' \in \{0,1\}^{k-1}: \ket{\bms_\C'} &\coloneqq U'\ket{\bms}\ket{0}\ket{\Phi}\\
			 \text{ where }& \ket{\Phi} \coloneqq \bigotimes_{i=1}^{n-k} \ket{\phi_i}. 
		\end{align}
		We can then relate the logical basis state $\ket{\bms'_{\C'}}$ to a logical basis state in the old code $\C$ as
		\begin{align}
		\ket{\bms'_{\C'}} &= \left[U \tilde{U} \otimes \id^{\otimes (n-k)}\right] P(Z_k) P \ket{\bms'}\ket{0}\ket{\Phi} \notag \\
		& =  U P\left(S''\right)\otimes \id^{\otimes (n-k)}P \left[\left(\tilde{U}\ket{\bms'}\ket{0}\right)\ket{\Phi}\right] \notag \\
		& = U P(S') P \ket{f(\bms')}\ket{\Phi} \notag \\
		& = P(S) U \ket{f(\bms')}\ket{\Phi} \notag \\
		& = P(S)\ket{f(\bms')_\C}, 
		\end{align} 
		where we have defined $\ket{f(\bms')} \coloneqq \tilde{U}\ket{\bms'}\ket{0}$, used \eqref{eq:P_S''_Z_rotate} to obtain the second equality and \eqref{eq:P_S'_reduction} to obtain the third. Explicitly, $f$ is defined by
		\begin{align}
		f(\bms') = &(s'_1,\dots,s'_{j-1}, s, s'_j,\dots, s'_{k-1}), \text {where } \nonumber \\
		&s \coloneqq \left(\sum_{i=1}^{j-1} s'_i z_i\right) + b +\left(\sum_{i=j+1}^{k} s'_{i-1} z_i\right),
		\end{align} 
		and arithmetic is modulo 2.
		
		\item \textbf{$S''$ is $X$-type.} We therefore represent $S''$ as $S'' = (-1)^b X(\bmx)$, where $b$ is a binary digit and $\bmx$ is an $k$-bit string of which at least one bit $j$ is such that $x_j = 1$.
		
		Consider the following completely CSS-preserving unitary on $k$ qubits,
		\begin{align}
		\tilde{U}^\dagger \coloneqq \SWAP_{j \rightarrow k}\circ Z^b_j\circ U_{C},
		\end{align}
		in which $U_C$ is defined as
		\begin{align}
		U_C \coloneqq \begin{cases} 
		\prod\limits_{\substack{i=1\\x_i = 1, i \neq j}}^k \CNOT(j,i)& \text{ if } \exists i \neq j: x_i = 1,\\
		\id^{\otimes k}& \text{ otherwise.}
		\end{cases}
		\end{align}
		Because $U_C [X(\bmx)] U^\dagger_C = X_j$, we have that 
		\begin{align}
		\tilde{U}^\dagger P(S'') \tilde{U} = P(X_k). \label{eq:P_S''_X_rotate}
		\end{align}
		Using \eqref{eq:P_S''_X_rotate} to substitute $P(S'')$ in \eqref{eq:P_S'_reduction}, we have by \eqref{eq:E_rho_expanded} that 
		\begin{align}
		\E(\rho) = \tilde{U}\left(\tr_{k+1, \dots, n}\left[P' U^{'\dagger} (\rho) U' P' \right] \right) \tilde{U}^\dagger,
		\end{align}
		where we have defined
		\begin{align}
			P' \coloneqq P(X_k) P,\ U' \coloneqq U \tilde{U} \otimes \id^{\otimes (n-k)}.
		\end{align}  
		Since $k$th qubit outputted by the part inside $\tilde{U}(\cdot)\tilde{U}^\dagger$ is always $\ket{+}$, we can simply discard the $k$th qubit as well and re-prepare it. Thus one can write
		\begin{align}
		\E(\rho) = \tilde{U}\left(\K'(\rho) \otimes \ketbra{+}\right) \tilde{U}^\dagger,
		\end{align} 
		where we have defined
		\begin{align}
			 \K'(\rho) \coloneqq \tr_{k, \dots, n}\left[P' U^{'\dagger} (\rho) U' P' \right].
		\end{align}
		From \eqref{eq:generic_code_projection}, we see that $\K'$ is a code projection for an $[[n,k-1]]$ CSS code $\C'$ stabilized by $\langle U' X_k U^{'\dagger}, U' C_{k+1} U^{'\dagger}, \dots, U' C_n U^{'\dagger} \rangle$ with a completely CSS-preserving encoding unitary $U'$.  By \eqref{eq:generic_logical_basis}, $U'$ generates the following logical basis for $\C'$:
		\begin{align}
		\forall \bms' \in \{0,1\}^{k-1}: \ket{\bms_\C'} &\coloneqq U'\ket{\bms}\ket{+}\ket{\Phi}\\
		\text{ where } \ket{\Phi} &\coloneqq \bigotimes_{i=1}^{n-k} \ket{\phi_i}. 
		\end{align}
		We can then relate the logical basis state $\ket{\bms'_{\C'}}$ in the new code $\C'$ to a logical basis state in the old code $\C$ as 
		\begin{align}
		\ket{\bms'_{\C'}} &= \left[U\tilde{U} \otimes \id^{(n-k)}\right] P(X_k) P \ket{\bms'}\ket{+}\ket{\Phi} \notag \\
		&= \sqrt{2} \left[ U \tilde{U} \otimes \id^{\otimes (n-k)} \right] P(X_k) P \ket{\bms'}\ket{0}\ket{\Phi} \notag \\
		&= \sqrt{2}\ U P(S'') \otimes \id^{\otimes (n-k)} P \left(\tilde{U}\ket{\bms'}\ket{0}\right)\ket{\Phi} \notag \\
		&= \sqrt{2}\ U P(S') P \ket{f(\bms')}\ket{\Phi} \notag \\
		&= \sqrt{2}\ P(S) U \ket{f(\bms')}\ket{\Phi} \notag \\
		&= \sqrt{2}\ P(S) \ket{f(\bms')_\C},
		\end{align}
		where we defined $\ket{f(\bms')} \coloneqq \tilde{U} \ket{\bms'}\ket{0}$, used \eqref{eq:P_S''_X_rotate} to obtain the second equality and \eqref{eq:P_S'_reduction} to obtain the third. Explicitly, $f{(\bms') = (s'_1, \dots, s'_{j-1}, 0, s'_j, \dots,s'_{k-1})}$. 
	\end{enumerate}
	\item[(iii)] \textbf{$S$ is not independent of $S_{k+1},\dots,S_n$.} This implies either $-S$ or $S$ stabilizes $\C$. In the former case, we see from \eqref{eq:projection_initial_logic_states_rep} that $\E(\rho) = 0$ for all $\rho$. In the latter case, we have $P(S)\ket{\bms_{\C}} = \ket{\bms_{\C}}$, which implies $\E = \K$ by \eqref{eq:projection_initial_logic_states_rep}. Together these equations match the Lemma's second form. \qedhere
	\end{enumerate}
 \end{proof}

\subsubsection*{Step 4. Removing ancillary qubits}
\begin{lemma} \label{lemma:codespace_input_ancilla_split}
	Let $\C$ be an ${[[n+m,k]]}$ CSS codes where $n \ge 1, n > k$ and $m > 0$, and let $\{\ket{\bms_\C}\}$ be a logical basis for $\C$ generated by a completely CSS-preserving encoding unitary $U$ such that each logical basis state $\ket{\bms_\C}$ factorises over $n$ and $m$ qubits, i.e., 
	\begin{align}
		\ket{\bms_\C} = \ket{\psi_\bms} \otimes \ket{\psi}, \label{eq:logical_basis_state_product_decomp}
	\end{align} 
	where $\ket{\psi}$ is a CSS state on $m$ qubits. We then have that 
	\begin{align}
		\ket{\psi_\bms} = e^{-i\theta_\bms}\ket{\bms_{\C'}},
	\end{align} 
	where $\{\ket{\bms_{\C'}}\}$ is a logical basis for an $[[n,k]]$ CSS code $\C'$ generated by a completely CSS-preserving unitary.
\end{lemma}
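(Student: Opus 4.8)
The plan is to \emph{restrict} the code $\C$, its logical operators, and its stabiliser generators from the $n+m$ qubits down to the first $n$ qubits, exploiting the fact that the last $m$ qubits carry the fixed CSS state $\ket{\psi}$. First I would put the logical basis in standard form (as in Corollary~\ref{corollary:completely_CSS_decoding} and \eqref{eq:generic_logical_basis}): write $\ket{\bms_\C} = U(\ket{\bms}\otimes\bigotimes_{i=1}^{n+m-k}\ket{\phi_i})$, so that $\ket{\bms_\C}$ is stabilised by $\langle (-1)^{s_1}L_1,\dots,(-1)^{s_k}L_k, S_1,\dots,S_{n+m-k}\rangle$, where $L_i \coloneqq U Z_i U^\dagger$ are the logical-$Z$ observables and $S_1,\dots,S_{n+m-k}$ are CSS observables generating $\S(\C)$ with $U^\dagger S_j U \in \{Z_{k+j}, X_{k+j}\}$. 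A key observation is that, since $U$ is completely CSS-preserving, Lemma~\ref{lemma:CCSSP_vs_CSSP} gives $U \in \G(n+m)$, so conjugation by $U$ sends $Z$-type Paulis to $Z$-type Paulis; hence every $L_i$ is $Z$-type. (One should also note, using the CSS circuit that measures the generators $R_1,\dots,R_m$ of $\S(\ket{\psi})$ and discards the last $m$ qubits together with Lemma~\ref{lemma:CCSSP_powers}, that each $\ket{\psi_\bms}$ is itself a CSS state.)

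Next I would carry out the restriction. For any Hermitian Pauli $P = P_A \otimes P_B$ split across the first $n$ and last $m$ qubits that stabilises $\ket{\bms_\C} = \ket{\psi_\bms}\otimes\ket{\psi}$, a short pure-state argument shows $P_B\ket{\psi} = c\ket{\psi}$ with $c = \pm 1$ depending only on $P_B$, and that $c\,P_A$ then stabilises $\ket{\psi_\bms}$ for \emph{every} $\bms$. Applying this to $(-1)^{s_i}L_i$ and to $S_j$ yields sign-corrected restrictions $\tilde L_i$ (which are $Z$-type, being $\pm$ a $Z$-string) and $T_j$ (of the same CSS type as $S_j$) such that $\S(\ket{\psi_\bms}) = \langle (-1)^{s_1}\tilde L_1,\dots,(-1)^{s_k}\tilde L_k, T_1,\dots,T_{n-k}\rangle$. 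These $n$ operators must be independent and $-\id_n$-free with each $T_j \neq \id_n$, because $\S(\ket{\psi_{\bm{0}}})$ is the stabiliser group of a genuine pure state and so has order $2^n$. Hence $\C' \coloneqq \Span\{\ket{\psi_\bms}\}$ is an $[[n,k]]$ CSS code with CSS stabiliser generators $T_1,\dots,T_{n-k}$.

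Then I would build the encoding unitary. Applying Lemma~\ref{lemma:CNOT_gaussian_elim} in the $k=0$ case, i.e.\ to the pure-state stabiliser group $\langle \tilde L_1,\dots,\tilde L_k, T_1,\dots,T_{n-k}\rangle$ taken in this order, produces a completely CSS-preserving unitary $U'$ with $U'^\dagger\tilde L_i U' = Z_i$ (using that $\tilde L_i$ is $Z$-type) and $U'^\dagger T_j U' \in \{Z_{k+j}, X_{k+j}\}$. The latter relations make $U'$ a completely CSS-preserving encoding unitary for $\C'$ in the sense of Corollary~\ref{corollary:completely_CSS_decoding}, generating the logical basis $\ket{\bms_{\C'}} \coloneqq U'(\ket{\bms}\otimes\bigotimes_{j=1}^{n-k}\ket{\phi'_j})$; computing its stabiliser group gives $\langle (-1)^{s_i}U' Z_i U'^\dagger, U' C'_j U'^\dagger\rangle = \langle (-1)^{s_i}\tilde L_i, T_j\rangle = \S(\ket{\psi_\bms})$. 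Since two pure states with the same stabiliser group agree up to a global phase, $\ket{\psi_\bms} = e^{-i\theta_{\bms}}\ket{\bms_{\C'}}$, as required.

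The main obstacle is the restriction step: carefully tracking the $\pm 1$ signs and verifying that the sign-corrected restrictions of the logical and code generators really do generate all of $\S(\ket{\psi_\bms})$ (not merely a subgroup). The conceptually essential point — without which the statement would fail — is that each $\tilde L_i$ is $Z$-type, a direct consequence of $U$ being completely CSS-preserving; this is precisely what guarantees that $\ket{\psi_\bms}$ is a computational-basis-labelled logical basis state of $\C'$, rather than one related to such a state by a (non-CSS) Hadamard on some logical qubit.
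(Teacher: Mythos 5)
Your overall strategy is the same one the paper takes: put $\ket{\bms_\C}$ in the standard form generated by $U$, restrict all stabilizer and logical-$Z$ operators to the first $n$ qubits, use the fact that $U\in\G(n+m)$ maps $Z$-type Paulis to $Z$-type Paulis to see that the logical restrictions $\tilde L_i$ remain $Z$-type, and close with Lemma~\ref{lemma:CNOT_gaussian_elim} applied (``$k=0$ case'') to the $n$ restricted operators to build a completely CSS-preserving encoding unitary for $\C'$. You also correctly identify the conceptually load-bearing point — the $Z$-type preservation under $U$ — without which the restricted state need not be a computational-basis-labelled logical state of a CSS code.

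However, the step you flag as the ``main obstacle'' is a genuine gap, not a detail to tidy up later. You restrict all $n+m$ generators ($k$ logical-$Z$'s and $n+m-k$ generators of $\S(\C)$) and assert that the $\tilde L_i$ together with $n-k$ operators $T_j$ (obtained from the code-generator restrictions) generate all of $\S(\ket{\psi_\bms})$, while also being independent. Nothing in your argument controls which $n-k$ of the $n+m-k$ code-generator restrictions survive, nor their independence, nor why the $\tilde L_i$ could not collapse or coincide after restriction; your indexing silently jumps from $n+m-k$ code generators to $n-k$ restricted $T_j$'s. The paper avoids all of this by reorganising \emph{before} restricting: it observes that $\S(\ket{\bm{0}_\C})$ is the product $\S(\ket{\psi_{\bm{0}}})\otimes\S(\ket{\psi})$ (valid because the logical basis factorises and both $\ket{\psi_{\bm{0}}}$ and $\ket{\psi}$ are CSS), applies $\mathbb{F}_2$-basis extension \emph{separately} inside $\S_Z(\C)$ and $\S_X(\C)$ to choose generators of $\S(\C)$ so that exactly $m$ of them are of the form $\id^{\otimes n}\otimes T_l$, and then multiplies the remaining $n$ generators by same-type ancilla-only generators to cancel their $m$-qubit factors. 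This delivers $n$ manifestly independent CSS generators on the first $n$ qubits — the clean factorisation your proposal assumes but does not construct. Supplying that basis-extension argument (or an equivalent) is what your proof needs to be complete.
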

\begin{proof}
	By \eqref{eq:ur_logical_basis_states}, the stabilizer group for the logical basis state $\ket{\bms_\C}$ can be related to the stabilizer group $\S(\C)$ of $\C$ as 
	\begin{align}
		\S(\ket{\bms_\C}) = \langle (-1)^{s_1} S_1, \dots, (-1)^{s_k} S_k \rangle \times \S(\C), \label{eq:original_stabilizer_group_logical_basis_states}
	\end{align} 
	where we have defined $S_i \coloneqq U Z_i U^\dagger$ for $i=1,\dots,k$.
	
 We observe that $\S(\C)$ is a direct sum of $\mathbb{F}_2$-subspaces $\S_X(\C)$ and $\S_Z(\C)$ corresponding to $X$- and $Z$-type stabilizers for $\C$. Because $\ket{\psi}$ is a CSS state, its stabilizer group $\S(\ket{\psi})$ is similarly a direct sum of $\mathbb{F}_2$-subspaces $\S_Z(\ket{\psi})$ and $\S_X(\ket{\psi})$ corresponding to $X$- and $Z$-type stabilizers for $\ket{\psi}$.
 
 Since $\{\ket{\bms_\C}\}$ span $\C$, \eqref{eq:logical_basis_state_product_decomp} implies that, if $S$ stabilizes $\ket{\psi}$, then $\id^{\otimes n} \otimes S$ stabilizes $\C$. Therefore, $\id^{\otimes n} \otimes \S_Z(\ket{\psi})$ and $\id^{\otimes n} \otimes \S_X(\ket{\psi})$ are $\mathbb{F}_2$-subspaces of $\S_Z(\C)$ and $\S_X(\C)$ respectively. By applying the basis extension theorem separately to the $Z$ and $X$ cases, we can represent $\S(\C)$ as
	\begin{align}
		\S(\C) = \langle S_{k+1}, \dots, S_n, \id^{\otimes n} \otimes T_1,\dots,\id^{\otimes n} \otimes T_m \rangle, \label{eq:codespace_generators_ancillas_removed}
	\end{align}
	where $S_{k+1},\dots,S_n$ and $T_1,\dots,T_m$ are all CSS observables such that $\S(\ket{\psi}) = \langle T_1,\dots,T_m \rangle$.
 
 The stabilizer group of $\ket{\bms_\C}$ can then be represented as
	\begin{align}
		\S(\ket{\bms_\C}) = \langle (-1)^{s_1} S_1, &\dots, (-1)^{s_k} S_k, S_{k+1},\dots,S_n, \nonumber \\
		 &\id^{\otimes n} \otimes T_1,\dots,\id^{\otimes n} \otimes T_m  \rangle, \label{eq:logic_states_generators_remove_ancillas}
	\end{align}
	from which we note that $S_1,\dots,S_n$ are all members of $\S(\ket{\bm{0}_\C})$, the stabilizer group for the zero logical basis state in $\C$. Since $\ket{\bm{0}_\C}$ and $\ket{\psi}$ are CSS (see \eqref{eq:ur_logical_basis_states}), $\ket{\psi_{\bm{0}}}$ must be CSS as well (as discarding the last $m$ qubits is CSS-preserving). Therefore, $\S(\ket{\bm{0}_\C})$ is a direct sum of $\mathbb{F}_2$-subspaces $\id^{\otimes n} \otimes \S(\ket{\psi})$ and ${\S(\ket{\psi_{\bm{0}}}) \otimes \id^{\otimes m}}$, where $\S(\ket{\psi_{\bm{0}}})$ is the stabilizer group of $\ket{\psi_{\bm{0}}}$.   Consequently, we can express $S_i$ as $S_i = S'_i \otimes T'_i$, where $S'_i$ and $T'_i$ are CSS observables of the same type as $S_i$ that respectively stabilize $\ket{\psi_{\bm{0}}}$ and $\ket{\psi}$. Therefore, by multiplying each $S_i$ in \eqref{eq:logic_states_generators_remove_ancillas} by appropriate generators of the same type in ${\{\id^{\otimes } \otimes T_1,\dots \id^{\otimes n} \otimes T_m\}}$, we can represent the stabilizer group of $\ket{\bms_\C}$ as 
	\begin{align}
	\S(\ket{\bms_\C}) = \S(\ket{\psi_\bms}) \otimes \S(\ket{\psi}),
	\end{align} 
	where the stabilizer group $\S(\ket{\psi_\bms})$ is generated by
	\begin{align}
		\S(\ket{\psi_\bms}) = \langle (-1)^{s_1} S'_1, \dots, (-1)^{s_k} S'_k, S'_{k+1},\dots,S'_n \rangle,
	\end{align}
	in which $S'_1,\dots,S'_k$ are $Z$-type becausse $S_1,\dots,S_k$ are $Z$-type. By applying \lemref{lemma:CNOT_gaussian_elim} to $S'_1,\dots, S'_n$, we can find a logical basis $\{\ket{\bms_{\C'}}\}$ for an $[[n,k]]$ CSS code $\C'$ stabilized by $\langle S'_{k+1},\dots,S'_n\rangle$, generated by a completely CSS-preserving encoding unitary, such that $\ket{\bms_{\C'}}$ shares the stabilizer group of $\ket{\psi_\bms}$. Therefore, $\ket{\bms_{\C'}}$ and $\ket{\psi_\bms}$ only differ up to a phase, which implies the Lemma. \qedhere
\end{proof}

\begin{lemma}\label{lemma:remove_ancillas}
	Let $\K$ be the code projection for an ${[[n+m,k]]}$ CSS code $\C$ where $n\ge 1, n > k$ and $m > 0$. Then given any $m$-qubit CSS state $\ket{\psi}$, we have that $\C([\cdot] \otimes \ketbra{\psi})$ is equivalent to a CSS code projection on $n$ qubits \emph{alone}, followed by preparing a CSS state and completely CSS-preserving post-processing, i.e.
	\begin{align}
		\K(\rho \otimes \ketbra{\psi}) = p\ \tilde{\U} \circ \left(\tilde{\K}(\rho) \otimes \ketbra{\varphi}\right),
	\end{align} 
	where $p$ is a probability, $\tilde{\U}$ is a completely CSS-preserving unitary channel on $k$ qubits, $\tilde{\K}$ is a code projection for an $[[n,k']]$ CSS code where $0 \le k' \le k$, and $\ket{\varphi}$ is a CSS state on $k-k'$ qubits. 
\end{lemma}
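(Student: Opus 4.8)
The plan is to strip the $m$ ancilla qubits off $\K$ one stabilizer generator at a time, using \lemref{lemma:measure_then_decode} repeatedly. Since $\ket{\psi}$ is a pure CSS state on the ancillas, I would pick CSS observables $T_1,\dots,T_m$ generating its stabilizer group and set $T'_j \coloneqq \id^{\otimes n}\otimes T_j$. Each $T'_j$ stabilizes $\rho\otimes\ketbra{\psi}$, so the post-selection $\P(T'_j)(\cdot) = P(T'_j)(\cdot)P(T'_j)$ acts trivially on this input and hence $\K(\rho\otimes\ketbra{\psi}) = \K\circ\P(T'_1)\circ\dots\circ\P(T'_m)(\rho\otimes\ketbra{\psi})$. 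The point of this otherwise-trivial rewriting is that every factor of the form ``code projection after post-selecting a CSS observable'' is now in the exact form handled by \lemref{lemma:measure_then_decode}, which can be applied from the outside in.

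Next I would run the peeling as an induction. Starting from $\K^{(0)} \coloneqq \K$, $\C^{(0)} \coloneqq \C$, suppose after $j$ rounds one has $\K(\rho\otimes\ketbra{\psi}) = p^{(j)}\,\tilde{\U}^{(j)}\big(\K^{(j)}(\rho\otimes\ketbra{\psi})\otimes\ketbra{\varphi^{(j)}}\big)\tilde{\U}^{(j)\dagger}$, with $\tilde{\U}^{(j)}$ completely CSS-preserving, $\ket{\varphi^{(j)}}$ a CSS state, and $\K^{(j)}$ the code projection (with a completely CSS-preserving decoding) of an $[[n+m,k_j]]$ CSS code $\C^{(j)}$ stabilized by $T'_1,\dots,T'_j$. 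For the next round, insert $\P(T'_{j+1})$ before $\K^{(j)}$ (legitimate since $T'_{j+1}$ stabilizes the input) and apply \lemref{lemma:measure_then_decode} to $\K^{(j)}\circ\P(T'_{j+1})$: either it vanishes — then $\K(\rho\otimes\ketbra{\psi}) = 0$ and the claim holds with $p=0$ — or it becomes a code projection of an $[[n+m,k_j]]$ code with CSS-preserving unitary post-processing, or a code projection of an $[[n+m,k_j-1]]$ code together with a one-qubit CSS-state preparation and CSS-preserving unitary post-processing. In each non-vanishing branch the relation $P(T'_{j+1})\ket{\bms_{\C^{(j)}}} \propto \ket{\bms_{\C^{(j+1)}}}$ forces $T'_{j+1}$ into $\S(\C^{(j+1)})$, while $T'_1,\dots,T'_j$ survive there: they already lie in $\S(\C^{(j)})$, they commute with $T'_{j+1}$ because $\S(\ket{\psi})$ is abelian, and the stabilizer bookkeeping inside the proof of \lemref{lemma:measure_then_decode} either enlarges the group ($\S(\C^{(j+1)}) \supseteq \S(\C^{(j)})$) or only discards the single generator anticommuting with $T'_{j+1}$. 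After $m$ rounds, collapsing the nested CSS-state preparations into one and composing the unitaries, I get $\K(\rho\otimes\ketbra{\psi}) = p\,\tilde{\U}\big(\K^{(m)}(\rho\otimes\ketbra{\psi})\otimes\ketbra{\varphi}\big)\tilde{\U}^\dagger$, where $\K^{(m)}$ is the code projection of an $[[n+m,k']]$ CSS code $\C^{(m)}$ ($k' \coloneqq k_m$, $0\le k'\le k$) stabilized by every $T'_1,\dots,T'_m$, $\tilde{\U}$ is completely CSS-preserving on $k$ qubits, $\ket{\varphi}$ is a CSS state on $k-k'$ qubits, and $p\in[0,1]$.

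The final step removes the now-inert ancilla from $\C^{(m)}$. Because $\langle T_1,\dots,T_m\rangle$ is a maximal stabilizer group on the $m$ ancilla qubits and $T'_1,\dots,T'_m\in\S(\C^{(m)})$, multiplying each of the remaining $n-k'$ CSS generators of $\C^{(m)}$ by suitable $T'_j$'s makes them act trivially on the ancilla, so $\C^{(m)} = \tilde{\C}\otimes\Span\{\ket{\psi}\}$ for an $[[n,k']]$ CSS code $\tilde{\C}$ on the $n$ input qubits (well-defined, since $-\id\notin\S(\C^{(m)})$ forbids $-\id\in\S(\tilde{\C})$, and $k'\le k<n$). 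Taking the product-form completely CSS-preserving encoding $\tilde{U}\otimes V_\psi$ for $\C^{(m)}$ — with $\tilde{U}$ a completely CSS-preserving encoding of $\tilde{\C}$, via Corollary~\ref{corollary:completely_CSS_decoding}, and $V_\psi$ a completely CSS-preserving unitary preparing $\ket{\psi}$ — its logical basis factorizes as $\ket{\bms_{\C^{(m)}}} = \ket{\bms_{\tilde{\C}}}\otimes\ket{\psi}$ (alternatively one reaches the same conclusion up to phases via \lemref{lemma:codespace_input_ancilla_split}, the phases being absorbed into $\tilde{\U}$ once the diagonal phase unitary is shown CSS-preserving by the argument of case~(i) in the proof of \lemref{lemma:measure_then_decode}). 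The decoded form \eqref{eq:code_projection_logical_basis_states} then gives $\K^{(m)}(\rho\otimes\ketbra{\psi}) = \sum_{\bms,\bms'}\bra{\bms_{\tilde{\C}}}\rho\ket{\bms'_{\tilde{\C}}}\,\ketbra{\bms}{\bms'} = \tilde{\K}(\rho)$, with $\tilde{\K}$ the code projection of $\tilde{\C}$, and substituting back yields exactly $\K(\rho\otimes\ketbra{\psi}) = p\,\tilde{\U}\circ\big(\tilde{\K}(\rho)\otimes\ketbra{\varphi}\big)$.

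The hard part is the bookkeeping of the second paragraph: one must check, through every branch of \lemref{lemma:measure_then_decode}, that the peeled observable joins the stabilizer group of the residual code and that all earlier peeled observables remain — this invariant is precisely what lets the ancilla split off in the third paragraph. The vanishing branch (yielding $p=0$) and the edge case $k'=0$ (where $\tilde{\K}$ projects onto and discards a pure $n$-qubit CSS state) should be noted but are routine, and the hypotheses $m>0$, $n>k$ are used only to keep the intermediate codes well-formed.
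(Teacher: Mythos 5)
Your proposal follows essentially the same route as the paper: insert the projector $\mathbf{P}=\id^{\otimes n}\otimes\ketbra{\psi}$ before $\K$, peel off one ancilla stabilizer at a time via \lemref{lemma:measure_then_decode}, and then factor the residual code over the input and ancilla registers using \lemref{lemma:codespace_input_ancilla_split}. The one step where you diverge is how you establish that the residual code factorizes: the paper chains the projection relations $\mathbf{P}\ket{\Psi_\bms}\propto\ket{\bms_{\C'}}$ supplied by \lemref{lemma:measure_then_decode} to read off $\ket{\bms_{\C'}}=\ket{\psi_\bms}\otimes\ket{\psi}$, whereas you track the invariant that the peeled observables $T'_1,\dots,T'_j$ accumulate in $\S(\C^{(j)})$ and then use maximality of $\langle T_1,\dots,T_m\rangle$ to split $\S(\C^{(m)})$ as a direct product. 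That invariant argument is sound (commutativity of the $T'_i$ with $T'_{j+1}$ ensures they land in the commutant in case (i), and cases (ii)--(iii) only enlarge or preserve the group), and is a somewhat tidier way to see the same fact. One caution: your ``main line'' at the end swaps the encoding unitary $U^{(m)}$ that actually came out of the induction for a freely chosen product encoding $\tilde U\otimes V_\psi$. These generally define \emph{different} code projections, so you cannot just compute $\K^{(m)}$ using the product-form logical basis. Your parenthetical --- invoke \lemref{lemma:codespace_input_ancilla_split} to show the inductive logical basis factorizes up to phases, then show the diagonal phase unitary is completely CSS-preserving by the case-(i) argument of \lemref{lemma:measure_then_decode} --- is the correct way to close this, and is precisely the route the paper takes; promote it from a parenthetical aside to the actual argument.
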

\begin{proof}
	Let $\{S_{n+1},\dots,S_{n+m}\}$ be a set of CSS observables that generate the stabilizer group defining $\ket{\psi}$. Then $\K(\rho \otimes \ketbra{\psi})$ is equivalent to
	\begin{align}
		\K(\rho \otimes \ketbra{\psi}) = \K(\mathbf{P}[\rho \otimes \ketbra{\psi}] \mathbf{P}),
	\end{align}
	where $\mathbf{P}$ projects the last $m$ qubits onto $\ket{\psi}$, i.e.
	\begin{align}
	\mathbf{P} \coloneqq \id^{\otimes n} \otimes \ketbra{\psi} = \prod_{i=n+1}^{n+m} P(\id^{\otimes n} \otimes S_i).
	\end{align}
	
	By applying \lemref{lemma:measure_then_decode} to each projection carried out by $\mathbf{P}$, we obtain
	\begin{align}
		\K(\rho \otimes \ketbra{\psi}) = p\ \U \circ \K'(\rho \otimes \ketbra{\psi}) \otimes \ketbra{\varphi}, \label{eq:do_auxiliary_projections}
	\end{align} 
	where $p$ is a probability, $\U$ is a completely CSS-preserving unitary channel on $k$ qubits, $\K'$ is a code projection for an $[[n+m,k']]$ CSS code $\C'$ where $0 \le k' \le k$, and $\ket{\varphi}$ is a CSS state on $k-k'$ qubits. \lemref{lemma:measure_then_decode} further implies there exists a logical basis $\{\ket{\bms_{\C'}}|\bms \in \{0,1\}^{k'} \}$ for the new code $\C'$, generated by a completely CSS-preserving encoding unitary, which can be related to some state $\ket{\Psi_\bms}$ on $n+m$ qubits as  
	\begin{align}
		\mathbf{P}\ket{\Psi_\bms} = \left(\id^{\otimes n} \otimes \ketbra{\psi}\right) \ket{\Psi_\bms} \propto \ket{\bms_{\C'}}. \label{eq:project_prop}
	\end{align} 
	\eqref{eq:project_prop} immmediately implies
	\begin{align}
	\ket{\bms'_{\C'}} = \ket{\psi_\bms} \otimes \ket{\psi}, \label{eq:s_K'_decomp}
	\end{align} 
	where $\ket{\psi_\bms}$ is a state on the first $n$ qubits \emph{alone}. By \lemref{lemma:codespace_input_ancilla_split}, each $\ket{\psi_\bms}$ is, \emph{up to a phase that may vary with $\bms$}, a logical basis state $\ket{\bms_{\C''}}$ for an $[[n,k']]$ CSS code $\C''$ with only $n$ physical qubits. Thus one can write 
	\begin{align}
	   \ket{\bms_{\C'}} = e^{-i \theta_\bms} \ket{\bms_{\C''}} \otimes \ket{\psi}. \label{eq:logical_basis_state_up_to_phase}
	\end{align}
		
	By \eqref{eq:code_projection_logical_basis_states}, we can express the code projection $\K'$ for $\C'$ in terms of the logical basis $\{\ket{\bms_{\C'}}\}$ as 
	\begin{align}
		\K'(\cdot) = K(\cdot)K^\dagger,\ K^\dagger \coloneqq \sum_{\bms \in \{0,1\}^{k'}} \ket{\bms_{\C'}}\bra{\bms}.\label{eq:C'_initial}
	\end{align}
	We can then use \eqref{eq:logical_basis_state_up_to_phase} to show that 
    \begin{align}
		\K'(\rho \otimes \ketbra{\psi}) = U' \left[\tilde{\K}(\rho)\right] U^{'\dagger}, \label{eq:removing_ancillas} 
	\end{align} 
	where we have defined the following unitary on $k'$ qubits to adjust for the phase differences between $\ket{\bms_{\C'}}$ and $\ket{\psi_\bms}$,
	\begin{align}
		 U' \coloneqq \sum_{\bms \in \{0,1\}^{k'}} e^{i\theta_\bms} \ketbra{\bms},
	\end{align}
	as well as 
	\begin{align}
		\tilde{\K}(\cdot) \coloneqq \tilde{K}(\cdot)\tilde{K}^\dagger,\ \tilde{K}^\dagger \coloneqq \sum_{\bms \in \{0,1\}^{k'}} \ket{\bms_{\C''}}\bra{\bms},
	\end{align} 
	By comparison with \eqref{eq:code_projection_logical_basis_states}, we see that $\tilde{\K}$ is the code projection for the $[[n,k']]$ CSS code $\C''$. Following a similar argument to that at the end of case (iii) in the proof of \lemref{lemma:measure_then_decode}, we can demonstrate that $U'$ is completely CSS-preserving. Substituting back immeidately yields the Lemma result. \qedhere	
\end{proof}

\section{\uppercase{Complex relative Majorization}}
\label{appx:complex_maj}

In this appendix, we sketch how our entropic constraints can be extended to arbitrary input and output states -- i.e. those whose density matrices in the computational basis may not be real.

We refer to any complex-valued distribution $\bmw$ on $N$ elements as a \emph{complex quasidistribution} if its components sum to 1, i.e. 
\begin{align}
	\sum_{i=1}^N \bmw^{(i)} = 1. \label{eq:normalization_quasi},
\end{align}
Given complex quasidistributions $\bmw, \bmw'$ and reference probability distributions $\bmr,\bmr'$, we can extend relative majorization very naturally to a partial order $\succ_C$ defined by
\begin{align}
    (\bmw,\bmr) \succ_C (\bmw',\bmr') \iff A\bmw = \bmw',\ A\bmr = \bmr'.
\end{align}

Any complex quasidistribution $\bmw$ on $N$ elements can be decomposed into real and imaginary parts as
\begin{align}
\bmw = \bmw_R + i\bmw_I,
\end{align} 
where we have introduced the notation $(\cdot)_R \coloneqq \mathfrak{Re}(\cdot)$ and $(\cdot)_I \coloneqq \mathfrak{Im}(\cdot)$. By \eqref{eq:normalization_quasi}, we have
\begin{align}
	\sum_{i=1}^N w^{(i)}_R = 1,\ \sum_{i=1}^N w^{(i)}_I = 0.
\end{align}
Therefore, given any complex quasidistribution $\bmw$ on $N$ elements, we can construct a valid \emph{quasiprobability distribution} on $2N$ elements using the map
\begin{align}
    \mu(\bmw) \coloneqq \bmw_R \oplus \bmw_I.
\end{align}
Furthermore, given any probability distribution $\bmr$ on $N$ elements, we can also construct a probability distribution on $2N$ elements using the map 
\begin{align}
	\nu(\bmr) \coloneqq \frac{1}{2}(\bmr \oplus \bmr).
\end{align}
We can now prove the following theorem:
\begin{theorem}
Given complex quasidistributions $\bmw, \bmw'$ and probability distributions $\bmr,\bmr'$, we have that
\begin{align}
    (\bmw, \bmr) \succ_C (\bmw', \bmr') \implies (\mu(\bmw), \nu(\bmr) ) \succ (\mu(\bmw') , \nu(\bmr')).  
\end{align}  
\end{theorem}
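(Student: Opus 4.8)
The plan is to exhibit an explicit stochastic map witnessing the majorization on the doubled ($2N$- and $2M$-element) spaces, built directly from the map witnessing $\succ_C$ on the original spaces. By hypothesis there is a (column-)stochastic $M\times N$ matrix $A$ with $A\bmw = \bmw'$ and $A\bmr = \bmr'$. The first observation I would record is that, since $A$ has real non-negative entries, it commutes with taking real and imaginary parts componentwise: $A\bmw_R = (A\bmw)_R = \bmw'_R$ and $A\bmw_I = (A\bmw)_I = \bmw'_I$, where as in the excerpt $(\cdot)_R \coloneqq \mathfrak{Re}(\cdot)$ and $(\cdot)_I \coloneqq \mathfrak{Im}(\cdot)$.

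The candidate map on the doubled space is the block-diagonal matrix $B \coloneqq A \oplus A$, of size $2M \times 2N$. First I would check that $B$ is stochastic: its entries are non-negative, and each of its columns coincides, up to padding by zeros, with a column of $A$, hence sums to $1$. Next I would verify that $B$ transports the pair $(\mu(\bmw),\nu(\bmr))$ to $(\mu(\bmw'),\nu(\bmr'))$: using the previous paragraph, $B\,\mu(\bmw) = (A\bmw_R) \oplus (A\bmw_I) = \bmw'_R \oplus \bmw'_I = \mu(\bmw')$, while $B\,\nu(\bmr) = \tfrac{1}{2}\big((A\bmr)\oplus(A\bmr)\big) = \tfrac{1}{2}(\bmr'\oplus\bmr') = \nu(\bmr')$. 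By the definition of $\succ$ (for a quasiprobability distribution in the first slot and a probability distribution in the second), this gives $(\mu(\bmw), \nu(\bmr)) \succ (\mu(\bmw'), \nu(\bmr'))$, as claimed.

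There is essentially no hard step here; the statement is a direct structural observation rather than a computation. The only points that warrant care, and which I would flag explicitly, are purely bookkeeping: (i) $\bmw,\bmr$ may live on an alphabet of different size ($N$) than $\bmw',\bmr'$ ($M$), so $A$ need not be square and $B=A\oplus A$ is $2M\times 2N$; and (ii) one should note that $\mu(\bmw)$ and $\nu(\bmr)$ are indeed of the correct type — a quasiprobability distribution and an honest probability distribution respectively — which is exactly the content of the normalization identities $\sum_i w^{(i)}_R = 1$, $\sum_i w^{(i)}_I = 0$ recorded just before the theorem, so that the right-hand relation $\succ$ is between objects for which it was defined.
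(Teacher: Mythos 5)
Your proof is correct and follows essentially the same route as the paper: both arguments observe that a real non-negative stochastic matrix $A$ acts componentwise on real and imaginary parts, and both exhibit the block-diagonal matrix $A \oplus A$ as the stochastic witness on the doubled space. Your added remarks on dimensions and the normalization identities are sound bookkeeping but do not change the substance.
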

\begin{proof}
    Since a stochastic matrix processes the real and imaginary parts of a vector independently, we conclude that $(\bmw, \bmr) \succ_C (\bmw', \bmr')$ implies
\begin{align}
    \begin{bmatrix}
        \bmw_R' \\
        \bmw_I'
    \end{bmatrix} &=   \begin{bmatrix}
        A & 0\\
        0 & A
    \end{bmatrix}\begin{bmatrix}
        \bmw_R \\
        \bmw_I
    \end{bmatrix},   \begin{bmatrix}
        \frac{1}{2} \bmr' \\
       \frac{1}{2}\bmr'
    \end{bmatrix} &=   \begin{bmatrix}
        A & 0\\
        0 & A
    \end{bmatrix} \begin{bmatrix}
      \frac{1}{2} \bmr \\
       \frac{1}{2} \bmr
    \end{bmatrix}
\end{align}
for some stochastic matrix A. Defining $\tilde{A} \coloneqq A \oplus A$, we can rewrite this more compactly as
\begin{align}
    \mu(\bm{w}') = \tilde{A} \mu(\bmw),\ \nu(\bmr') = \tilde{A} \nu(\bmr),
\end{align}  
where $\tilde{A}$ is stochastic whenever $A$ is stochastic, completing the proof.
\end{proof}

\begin{corollary}
   Let $\w_\rho \coloneqq \mathfrak{Re}[W_\rho] \oplus \mathfrak{Im}[W_\rho]$ and $\r_\tau \coloneqq \frac{1}{2}(W_\tau \oplus W_\tau)$. If there exists a completely CSS-preserving channel $\E$ such that $\E(\rho) = \rho'$ and $\E(\tau) = \tau'$, where $\tau$ and $\tau'$ are both in the interior of $\D_{css}$, then for all $\alpha \in \A$,
    \begin{align}
        D_\alpha (\w_\rho || \r_\tau) \ge  D_\alpha (\w_{\rho'} || \r_{\tau'}).
    \end{align}    
\end{corollary}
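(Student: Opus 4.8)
The plan is to recognize the corollary as an immediate assembly of three earlier results: the stochastic representability of completely CSS-preserving channels (\thmref{theorem:J_CSS_stoch_rep}), the complex relative majorization theorem established just above, and the $D_\alpha$-monotonicity statement of \thmref{thm:koukouD_alpha_monotone}. Concretely, I would set $\bmw \coloneqq W_\rho$, $\bmw' \coloneqq W_{\rho'}$, $\bmr \coloneqq W_\tau$ and $\bmr' \coloneqq W_{\tau'}$. Each of $\bmw, \bmw'$ is a complex quasidistribution because $\sum_\bmu W_\rho(\bmu) = 1$ for any state, while $\bmr, \bmr'$ are genuine probability distributions with strictly positive components since $\tau$ and $\tau'$ lie in the interior of $\D_{css}$ and are hence positively represented.

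First I would invoke the fact that a completely CSS-preserving channel $\E$ is stochastically represented, so that $W_\E$ is a stochastic matrix $A$ acting by $W_{\E(\cdot)} = A W_{(\cdot)}$ via \eqref{eq:W_E_in_action}. Since $\E(\rho) = \rho'$ and $\E(\tau) = \tau'$, this gives $A\bmw = \bmw'$ and $A\bmr = \bmr'$, i.e. $(\bmw, \bmr) \succ_C (\bmw', \bmr')$ in the complex relative majorization order. Applying the preceding theorem then yields $(\mu(\bmw), \nu(\bmr)) \succ (\mu(\bmw'), \nu(\bmr'))$.

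Next I would unpack the maps $\mu, \nu$ using \lemref{lemma:rho_W_real_im_correspondence}: $\mu(\bmw) = \mathfrak{Re}[W_\rho] \oplus \mathfrak{Im}[W_\rho] = \w_\rho$ and $\nu(\bmr) = \tfrac{1}{2}(W_\tau \oplus W_\tau) = \r_\tau$, and likewise for the primed objects. Hence $(\w_\rho, \r_\tau) \succ (\w_{\rho'}, \r_{\tau'})$ is a relative majorization between a quasiprobability distribution (first slot) and a probability distribution with non-vanishing components (second slot), exactly the setting of \thmref{thm:koukouD_alpha_monotone}. That theorem then delivers $D_\alpha(\w_\rho \| \r_\tau) \ge D_\alpha(\w_{\rho'} \| \r_{\tau'})$ for all $\alpha \in \A$, which is the claim.

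I do not expect any genuinely hard step: the substantive content was front-loaded into the complex majorization theorem and into the stochasticity of completely CSS-preserving operations, so the corollary is essentially bookkeeping. The one point requiring care is verifying the regularity hypotheses of \thmref{thm:koukouD_alpha_monotone} — namely that $\r_\tau$ and $\r_{\tau'}$ have strictly positive entries (which is precisely why the interior assumption on $\tau, \tau'$ cannot be relaxed) and that the direct-sum construction preserves the quasiprobability normalization (guaranteed since $\mathfrak{Re}[W_\rho]$ sums to $1$ and $\mathfrak{Im}[W_\rho]$ sums to $0$). If desired, one can also note that $\rho$ and $\rho'$ are allowed arbitrary complex density matrices, so this corollary is what extends the rebit bounds of \thmref{thrm:general_qubit_bound} to all qubit magic states.
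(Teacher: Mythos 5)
Your proposal is correct and follows exactly the route the paper intends: recognize $W_\E$ as a stochastic matrix sending both $W_\rho\mapsto W_{\rho'}$ and $W_\tau\mapsto W_{\tau'}$, so $(W_\rho,W_\tau)\succ_C(W_{\rho'},W_{\tau'})$, push this through the preceding theorem to get ordinary relative majorization of the direct-sum objects, and then apply \thmref{thm:koukouD_alpha_monotone}. The only inessential flourish is the appeal to \lemref{lemma:rho_W_real_im_correspondence} when unpacking $\mu$ and $\nu$; the identifications $\mu(W_\rho)=\w_\rho$ and $\nu(W_\tau)=\r_\tau$ hold directly by definition, though citing the lemma does no harm.
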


\bibliographystyle{apsrev4-2}
\bibliography{paper_references}

\end{document}